\newcommand{\bydef}{:=}
\newcommand{\buno}{\mathbf{1}}
\newcommand{\bzero}{\mathbf{0}}
\newcommand{\CC}{\mathbb{C}}
\newcommand{\FF}{\mathbb{F}}
\DeclareMathOperator{\Aut}{\mathrm{Aut}}
\newcommand{\GL}{\mathrm{GL}}
\newtheorem{theorem}{Theorem}[section]
\newtheorem{proposition}[theorem]{Proposition}
\newtheorem*{proposition*}{Proposition}
\newtheorem{lemma}[theorem]{Lemma}
\newtheorem{corollary}[theorem]{Corollary}
\theoremstyle{definition} 
\newtheorem{df}[theorem]{Definition}
\newtheorem{example}[theorem]{Example}
\theoremstyle{remark} \newtheorem{remark}[theorem]{Remark}
\numberwithin{equation}{section}
\begin{document}

\title[On binary codes that are maximal totally isotropic subspaces]{On binary codes that are maximal\\ totally isotropic subspaces\\ with respect to an alternating form}

\author{Patrick King}

\address{Department of Mathematics, Simon Fraser University, Burnaby, BC, V5A 1S6, Canada}

\email{patrickhking@gmail.ca}

\thanks{The first author acknowledges support by Memorial University and NSERC funding from the Undergraduate Student Research Award program and Discovery Grants of Drs.~Mikhail Kotchetov and Alex Shestopaloff.
The second author is supported by NSERC Discovery Grant 2018-04883.}

\author{Mikhail Kochetov}

\address{Department of Mathematics and Statistics, Memorial University of Newfoundland, St. John’s, NL, A1C 5S7, Canada}

\email{mikhail@mun.ca} 

\subjclass[2020]{Primary 94B05; Secondary 05A05; 13A50; 20C35}

\keywords{Binary code; self-dual code; self-orthogonal code; alternating bilinear form; maximal isotropic subspace; weight enumerator; MacWilliams identity; classification.}

\begin{abstract}
    Self-dual binary linear codes have been extensively studied and classified for length \(n\leq 40\). However, little attention has been paid to linear codes that coincide with their orthogonal complement when the underlying inner product is not the dot product. In this paper, we introduce an alternating form defined on \(\mathbb{F}_2^n\) and study codes that are maximal totally isotropic with repsect to this form. We classify such codes for \(n\leq 24\) and present a MacWilliams-type identity which relates the weight enumerator of a linear code and that of its orthogonal complement with respect to our alternating inner product. As an application, we derive constraints on the weight enumerators of maximal totally isotropic codes.
\end{abstract}

\maketitle


\section{Introduction}\label{sec:intro}

Given an alphabet $M$ which is a module over a ring $R$, a \emph{linear code of length \(n\)} is an \(R\)-submodule of \(M^n\). If \(M\) is equipped with an inner product \(\langle \cdot, \cdot \rangle_M : M \times M \to R\), then one can define \(\langle v, w \rangle = \sum_{i=1}^n \langle v_i, w_i \rangle_M\) for \(v, w \in M^n\). We are interested in the case when \(M = R = \mathbb{F}_2\), the field of two elements. In this setting, we refer to a linear code \(K \subset \mathbb{F}_2^n\) of dimension \(k\) as a \emph{binary linear \([n,k]\)-code}, or, simply, a \emph{binary linear code}. A popular choice of inner product in this case is the ordinary dot product:
\begin{equation*}
    v \cdot w \bydef \sum_{i=1}^n v_iw_i,
\end{equation*}
which allows us to define the \(dual\) of \(K\) as \(K^\perp = \{v \in \mathbb{F}_2^n \mid v \cdot w = 0 \textnormal{ for all } w \in K\}\). We say that \(K\) is \emph{self-orthogonal} if \(K \subset K^\perp\) and \emph{self-dual} if \(K = K^\perp\). Self-dual codes have been extensively studied (see e.g. the survey paper \cite{RS} or monograph \cite{NRS}) and classified up to the action of \(S_n\) for \(n \leq 40\) (see \cite{Bouy12} and \cite{Bouy15}). 

If we equip \(\mathbb{F}_2^n\) with the norm defined by \(\lVert v \rVert = \lvert \{1 \leq i \leq n \mid v_i \neq 0\} \rvert\) for $v\in\FF_2^n$, which is known as the \emph{Hamming weight of} \(v\), then the \emph{minimum distance} of \(K\) is 
\begin{equation*}
    d(K) \bydef \min \{\lVert v - w \rVert \mid v,w \in K,\, v \neq w\}.
\end{equation*}
This quantity determines the \emph{error-correcting} and \emph{error-detecting} capabilities of the given code (see \cite[p.~49]{Adam}) and, as it turns out, many self-dual codes have high minimum distance (e.g. the extended Hamming [8,4]-code), which is one of the motivations for their study. The Hamming weights of the elements of $K$, including the minimum distance, is encoded in its \emph{weight enumerator}\footnote{Some authors prefer to define the weight enumerator so that powers of \(y\) count the number of ones rather than zeros. This should be noted when comparing with other texts.}:
\begin{equation*}
    W_K(x,y) \bydef \sum_{i=0}^n A_i x^iy^{n-i} = \sum_{v \in K}x^{\Vert v \rVert}y^{n - \lVert v \rVert}
\end{equation*}
where \(A_i \bydef \lvert \{v \in K \mid \lVert v \rVert = i\} \rvert\) is the number of elements in \(K\) of Hamming weight \(i\). In the case of self-dual codes, the MacWilliams identity (see the next section) is one tool that allows us to obtain information about the weight enumerator and, in some cases, compute it without explicitly knowing the structure of the code. 

Although the literature on self-dual codes is quite extensive, little attention has been paid to linear codes that are equal to their orthogonal complement when the underlying inner product is not the dot product. In light of this, we consider the following inner product defined on \(\mathbb{F}_2^n\):
\begin{equation*}
    \langle v, w \rangle \bydef \sum_{i \neq j} v_iw_i = v \cdot w + p(v)p(w)
\end{equation*}
where \(p(v) \bydef \sum_{i = 1}^n v_i\) is the \emph{parity} of \(v\) which is alternatively given by the image of \(\lVert v \rVert\) under the quotient map \(\mathbb{Z} \to \mathbb{F}_2\). Unless otherwise stated, we always denote by \(K^\perp\) the orthogonal complement of \(K\) in \(\mathbb{F}_2^n\) with respect to this inner product. 

Like the dot product, this inner product is invariant under the action of the symmetric group $S_n$ by permuting coordinates. One difference from the dot product that we wish to highlight is that \(\langle \cdot, \cdot \rangle\) is \emph{alternating}, i.e., \(\langle v, v \rangle = 0\) for all \(v \in \mathbb{F}_2^n\). It then follows that a linear code \(K\) satisfying \(K \subset K^\perp\), i.e., \(K\) is \emph{totally isotropic}, is maximal with respect to this property if and only if \(K = K^\perp\). In the case when \(n\) is even, \(\langle \cdot, \cdot \rangle\) is nondegenerate, so it makes $\FF_2^n$ a \emph{symplectic space}; in this case the subspaces satisfying $K = K^\perp$ are known as \emph{Lagrangian subspaces} or, simply, \emph{Lagrangians}. As in the study of self-dual codes, we wish to classify, up to the action of \(S_n\), the maximal totally isotropic subspaces of \(\mathbb{F}_2^n\). As we will see, for odd $n$, this is equivalent to classifying maximally self-orthogonal codes of odd length. For even $n$, if \(L\) is a Lagrangian consisting entirely of even-weight vectors, then \(L\) is simply a self-dual code. Therefore, we are particularly interested in classifying \emph{odd Lagrangians}, i.e., Lagrangian subspaces that contain a vector of odd weight (see Proposition~\ref{prop:reduction_to_even}). We acheive a complete classification of such subspaces for \(n \leq 24\) (see Table~\ref{Table:OddLag}). Somewhat surprisingly, the minimum distance of various odd Lagrangians for some $n$ is greater than the highest minimum distance of self-dual codes for the same $n$ (in Table~\ref{Table:OddLag}, the minimum distance is highlighted when this is the case). 

Another difference we wish to highlight is that the MacWilliams identity does not hold in general for the inner product \(\langle \cdot, \cdot \rangle\). However, we propose a version of this identity which is valid for this inner product (see Proposition \ref{MW-type}) and, hence, the weight enumerator of \(K^\perp\) is still determined by the weight enumerator of \(K\). After establishing this identity, we explore the implications it has on weight enumerators of codes that are maximal totally isotropic subspaces.

The paper is structured as follows. After reviewing the background on self-dual binary codes in Section~\ref{sec:prelim}, we use their relationship with maximal totally isotropic codes to classify the latter in Section~\ref{sec:classification}.
We begin Section \ref{sec:MW-Type} by deriving a MacWilliams-type identity for our inner product and then develop its consequences for the weight enumerator of a maximal totally isotropic code (Theorem~\ref{ClassicalThmI} for odd length and Theorem~\ref{WEtheoremI} for even length). More precise information is given in the case when all even weights are multiples of $4$ (Theorem \ref{ClassicalThmII} for odd length and Theorems \ref{W_L:n=0(8)}, \ref{W_L:n=2(8)}, and \ref{W_L:n=-2(8)} for even length). The code, written using SageMath \cite{sage}, for the computations involved in the classification process can be found in Appendix \ref{code}.

\section{Preliminaries}\label{sec:prelim}
In this section, we review the necessary background from the theory of binary linear codes. Consider the action of \(S_n\) on the space \(\mathbb{F}_2^n\) by permuting coordinates, i.e., \(\sigma \cdot v \bydef v_{\sigma^{-1}(1)} \cdots v_{\sigma^{-1}(n)}\). We say that binary linear codes \(K\) and \(L\) are \emph{equivalent} if \(\sigma \cdot K = L\) for some \(\sigma \in S_n\), i.e., \(K\) and \(L\) lie in the same \(S_n\)-orbit. We begin with an overview of the classification of self-dual binary codes up to this action of \(S_n\). Recall that the dual of \(K\) is its orthogonal complement in \(\mathbb{F}_2^n\) taken with respect to the dot product: \(\{v \in \mathbb{F}_2^n \mid v \cdot w = 0 \textnormal{ for all } w \in K\}\). In this section only, we denote the dual of \(K\) by \(K^\perp\). 

\subsection{Binary Self-Dual Codes} Suppose that \(K\) is a self-dual binary code of length \(n\), i.e., \(K\) is a subspace of \(\mathbb{F}_2^n\) satisfying \(K = K^\perp\). Let us note some basic properties that \(K\) possesses. First, we have \(v \cdot v = 0\) for all \(v \in K\), so \(K\) consists entirely of \emph{even vectors}, i.e., vectors whose Hamming weight is even. By the nondegeneracy of the dot product, we also have that \(n = 2 \dim K\) is even, which implies  that \(K\) contains the vector of all ones, denoted \(\textbf{1}\). We provide some examples of self-dual codes below. 

\begin{example}\label{repetition}
The simplest example of a binary self-dual code is the \emph{repetition code} \(i_2 = \{00, \, 11\}\).
\end{example}

\begin{example}\label{extendedHamming}
A less trivial example is the \emph{extended Hamming \([8,4]\)-code}, \(e_8\), which is given by the row space of the following matrix:
\begin{equation*}
    \begin{bmatrix}
        1 & 0 & 0 & 0 & 0 & 1 & 1 & 1\\
        0 & 1 & 0 & 0 & 1 & 0 & 1 & 1\\
        0 & 0 & 1 & 0 & 1 & 1 & 0 & 1\\
        0 & 0 & 0 & 1 & 1 & 1 & 1 & 0
    \end{bmatrix}.
\end{equation*}
The minimum distance of \(e_8\) is \(4\), which is the highest among all possible self-dual codes of lenth \(8\). Moreover, up to equivalence, \(e_8\) is the only self-dual code of length \(8\) with minimum distance \(4\).
\end{example}

An additional property of \(e_8\) in Example~\ref{extendedHamming} is that all its vectors have Hamming weight divisible by \(4\). We refer to binary self-dual codes with this property as \emph{type II codes}. Otherwise, the code is said to be a \emph{type I code}\footnote{Note that, under our convention, type I codes are not type II. Some texts use the term \emph{strictly type I} in this case.}.  

One of the main objectives in the theory of self-dual codes is to classify such codes up to the action of \(S_n\). A table summarizing the classification of self-dual codes for \(n \leq 38\) can be found in \cite[p.~3]{Bouy12}. For an overview of how self-dual codes are built up via ``gluing'' self-orthogonal codes, i.e., codes that are contained in their dual, see \cite[\S 11]{NRS}. An important notion which arises in such constructions is that of an \emph{indecomposable} code. These are codes which cannot be expressed as a direct product of two nontrivial codes. Another notion that is crucial in the enumeration process is that of \emph{mass formulae}. We now proceed to describe this as in \cite[p.~6]{NRS}.

Given a binary linear code \(C\), consider the subgroup of \(S_n\) consisting of permutations that fix \(C\) as a set. This group is called the \emph{automorphism group of \(C\)} and denoted by \(\Aut (C)\). We then see that the number of codes that are equivalent to \(C\) is 
\begin{equation*}
   \frac{\lvert S_n \rvert}{\lvert \Aut (C) \rvert} = \frac{n!}{\lvert \Aut (C) \rvert}.
\end{equation*}
Thus, if \(S\) is a set of representatives of all equivalence classes of self-dual codes of length $n$, then the total number of self-dual codes, \(T_n\), is given by 
\begin{equation*}
    T_n = \sum_{C \in S} \frac{n!}{\lvert \Aut (C) \rvert}.
\end{equation*}
We then obtain the following mass formula for binary self-dual codes:
\begin{equation}\label{massFormula}
    \frac{T_n}{n!} = \sum_{C \in S} \frac{1}{\lvert \Aut (C) \rvert}.
\end{equation}

Equation \eqref{massFormula} is useful for enumerating binary self-dual codes up to equivalence, because it allows us to verify whether our enumeration is complete. Indeed, after computing the orders of the automorphism groups, we may compare the sum in the right-hand side with $T_n/n!$, since $T_n$ is known:
\begin{equation}\label{eq:Tn}
    T_n = \prod_{i=1}^{\frac{n}{2} - 1}(2^i + 1),
\end{equation}
see e.g. \cite[p.~8]{NRS}.

\subsection{The MacWilliams Identity}\label{MWSection} Let \(K \subset \mathbb{F}_2^n\) be a binary linear code and set \(A_i = \lvert \{v \in \mathbb{F}_2^n \mid \lVert v \rVert = i\} \rvert\). The sequence \((A_0, \dots, A_n)\) is called the \emph{weight distribution of \(K\)}. Recall that we define the weight enumerator of \(K\) to be the generating function of its weight distribution and express it as a polynomial in two indeterminates as follows:
\begin{equation*}
    W_K(x,y) = \sum_{i=0}^n A_ix^iy^{n-i} = \sum_{v \in K}x^{\Vert v \rVert}y^{n - \lVert v \rVert}. 
\end{equation*}
Note that if \(K\) is self-dual, then \(\textbf{1} \in K\) implies that \(W_K(x,y)\) is \emph{symmetric}, i.e., \(W_K(x,y) = W_K(y,x)\). For example, the weight enumerators of \(i_2\) in Example \ref{repetition} and \(e_8\) in Example \ref{extendedHamming} are given by \(W_{i_2} = y^2 + x^2\) and \(W_{e_8} = y^8 + 14x^4y^4 + x^8\) respectively.

A fundamental identity relating the weight enumerator of \(K\) and its dual is the following result due to MacWilliams \cite{MW} (see also \cite[\S8.7]{Adam}).
\begin{theorem}\label{MacWilliams}
    If \(K\) is a binary linear code, then
    \begin{equation*}
        W_{K^\perp}(x,y) = \frac{1}{\lvert K \rvert} W_K(y-x, y+x).
    \end{equation*}
\end{theorem}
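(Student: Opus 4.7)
The plan is to prove this via characters of the additive group $\FF_2^n$, a standard Fourier-analytic approach that reduces the identity to two simple computations: one that decomposes a generating function over coordinates, and one that exploits orthogonality of characters on the subspace $K$.

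First I would define $f \colon \FF_2^n \to \ZZ[x,y]$ by $f(v) \bydef x^{\lVert v \rVert} y^{n-\lVert v\rVert}$, so that by definition $W_K(x,y) = \sum_{v\in K} f(v)$ for every binary linear code $K$. The key auxiliary object is the ``Fourier transform''
\[
    \wh{f}(u) \bydef \sum_{v \in \FF_2^n} (-1)^{u \cdot v} f(v),
\]
where $u\cdot v$ is the ordinary dot product. Since both $u\cdot v = \sum_i u_i v_i$ (as an integer mod $2$) and $\lVert v\rVert = \sum_i v_i$ (as an integer) split as sums over coordinates, the sum defining $\wh{f}(u)$ factors as a product of one-variable sums, each of which equals $y+x$ if $u_i=0$ and $y-x$ if $u_i=1$. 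This yields
\[
    \wh{f}(u) = (y-x)^{\lVert u\rVert}(y+x)^{n-\lVert u\rVert}.
\]

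Next I would use the orthogonality relation $\sum_{v\in K}(-1)^{u\cdot v}$, which equals $|K|$ when $u \in K^\perp$ and $0$ otherwise (the proof being the usual one: restricted to $K$, the map $v\mapsto (-1)^{u\cdot v}$ is a character that is nontrivial iff $u\notin K^\perp$). Swapping the order of summation then gives
\[
    \sum_{v\in K} \wh{f}(v) = \sum_{u \in \FF_2^n} f(u) \sum_{v\in K}(-1)^{u\cdot v} = |K| \sum_{u \in K^\perp} f(u) = |K|\, W_{K^\perp}(x,y).
\]
Substituting the explicit formula for $\wh{f}$, the left-hand side equals $\sum_{v\in K}(y-x)^{\lVert v\rVert}(y+x)^{n-\lVert v\rVert} = W_K(y-x, y+x)$, and dividing by $|K|$ yields the claim.

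There is no genuine obstacle here: the only things to verify are the factorization of $\wh{f}$ (which is a one-line induction on $n$ or a direct product computation) and the character orthogonality on the subspace $K$ (which is immediate since $K$ is a $\FF_2$-vector space). The argument works uniformly for any binary linear code, not just self-dual ones.
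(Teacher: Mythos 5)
Your proof is correct and is essentially the same argument the paper uses for its own MacWilliams-type identity (Proposition~\ref{MW-type}): character orthogonality to express the indicator function of \(K^\perp\), plus the coordinate-wise factorization \(\sum_{v}(-1)^{u\cdot v}x^{\lVert v\rVert}y^{n-\lVert v\rVert}=\prod_i\bigl(y+(-1)^{u_i}x\bigr)\). The paper only cites Theorem~\ref{MacWilliams} rather than proving it, but your Fourier-analytic route is the standard proof and aligns exactly with the technique deployed there.
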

If \(K\) is self-dual, then Theorem \ref{MacWilliams} says that
\begin{equation}\label{MWself-dual}
    W_K(x,y) = W_K\biggl( \frac{y-x}{\sqrt{2}}, \frac{y+x}{\sqrt{2}} \biggr).
\end{equation}
It is then useful to view \(W_K(x,y)\) as an element of the ring $\CC[x,y]$ of polynomial functions \(\mathbb{C}^2 \to \mathbb{C}\) on which the general linear group, \(\GL_2(\mathbb{C})\), acts in the usual way, i.e., \((g \cdot W_K)(u) \bydef W_K(g^{-1}u)\) for all \(g \in \GL_2(\mathbb{C}),\, u \in \mathbb{C}^2\). 

Since \(K\) consists entirely of even vectors, it follows from equation \ref{MWself-dual} that \(W_K\) is an invariant of the subgroup of $\GL_2(\CC)$ generated by
\begin{equation*}
    A = \frac{1}{\sqrt{2}}
    \begin{bmatrix}
        1 & 1 \\
        -1 & 1
    \end{bmatrix}
    \textnormal{ and }
    X = 
    \begin{bmatrix}
        -1 & 0 \\
        0 & 1
    \end{bmatrix}.
\end{equation*}
This group is isomorphic to the dihedral group of order \(16\) which we denote by \(D_8\). The algebra of invariants of \(D_8\) is known to be \(\mathbb{C}[W_{i_2},\, x^2y^2(y^2 - x^2)]\) (see \cite[\S7.1]{NRS}) and, therefore, \(W_K\) is a polynomial in \(W_{i_2}\) and \(x^2y^2(x^2-y^2)^2\).

Suppose now that \(K\) is a type II code. In addition to being a \(D_8\)-invariant, we also have that \(W_K\) is invariant under the action of
\begin{equation*}
    B = 
    \begin{bmatrix}
        i & 0 \\
        0 & 1
    \end{bmatrix}.
\end{equation*}
The group \(G\) generated by \(A\) and \(B\) is of order \(192\) (see \cite[\S6.2]{NRS}) and its algebra of invariants is \(\mathbb{C}[W_{e_8},\, x^4y^4(x^4 - y^4)^4]\) (see \cite[\S6.1]{NRS}) implying that \(W_K\) is a polynomial in \(W_{e_8}\) and \(x^4y^4(x^4 - y^4)^4\).

The MacWilliams identity can also be used to derive constraints that weight enumerators of maximally self-orthogonal codes of odd length must satisfy, as was done by Mallows and Sloane in \cite{MS}.

\begin{theorem}\label{self-orthogonalI}
Let $K$ be a maximally self-orthogonal binary code of odd length $n$.

    (A) Let \(a = y^7 + 7x^4y^3\) and \(R = \mathbb{C}[W_{i_2},\, x^2y^2(y^2 - x^2)]\). Then \(W_K\) lies in \(yR \oplus aR\), the free \(R\)-module generated by \(y\) and \(a\).

    (B) Suppose in addition that all vectors in \(K\) have Hamming weight divisible by \(4\) and let \(u_1 = y^{17} + 17x^4y^{13} + 187x^8y^9 + 51x^{12}y^5,\, u_2 = y^{23} + 506x^8y^{15} + 1288x^{12}y^{11} + 253x^{16}y^7\), and \(R' = \mathbb{C}[W_{e_8},\, x^4y^4(x^4 - y^4)^4]\). Then, 
    \begin{itemize}
        \item[(i)] Either \(n \equiv 1 \ (\mathrm{mod}\ 8)\) or \(n \equiv -1 \ (\mathrm{mod}\ 8)\).
        \item[(ii)] If \(n \equiv 1 \ (\mathrm{mod}\ 8)\), then \(W_K \in yR'\oplus u_1R'\).
        \item[(iii)] If \(n \equiv -1 \ (\mathrm{mod}\ 8)\), then \(W_K \in aR'\oplus u_2R'\).
    \end{itemize}
\end{theorem}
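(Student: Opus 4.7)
The plan is to imitate Gleason's argument for self-dual codes: derive a functional equation for $W_K$ from the MacWilliams identity, interpret it as an invariance condition under a finite reflection group acting on $\CC[x,y]$, and then apply the invariant theory of that group to describe the possible $W_K$.

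For part (A), the first observation is that $K\subseteq K^\perp$ forces $c\cdot c=0$ for every $c\in K$, so $W_K$ has only even $x$-powers, i.e., $\iota\cdot W_K=W_K$ for $\iota:(x,y)\mapsto(-x,y)$; maximality together with $n$ odd gives $\dim K=(n-1)/2$ and $K^\perp=K\sqcup(K+v)$ for some $v$ of odd weight (otherwise $\langle K,v\rangle$ would be self-orthogonal of larger dimension). Applying Theorem~\ref{MacWilliams} gives $W_K(y-x,y+x)=2^{(n-1)/2}(W_K+W_{K+v})$; since $W_{K+v}$ has only odd $x$-powers, substituting $x\mapsto -x$ and summing with the original eliminates $W_{K+v}$ and yields the functional equation
\begin{equation*}
W_K(y-x,y+x)+W_K(y+x,y-x)=2^{(n+1)/2}\,W_K.
\end{equation*}
Together with $\iota W_K=W_K$, this places $W_K$ in the $\iota$-invariant part of the isotypic component of the standard $2$-dimensional representation $\rho_1$ of the dihedral group $D_8$ (of order $16$) inside $\CC[x,y]^{(n)}$. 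Since $D_8$ is a complex reflection group, the Chevalley--Shephard--Todd theorem makes the covariant module $\Hom_{D_8}(\rho_1,\CC[x,y])$ free over $R$ of rank $2$; a Molien computation gives its fake degrees as $1$ and $7$. The $\iota$-invariant lifts of the two covariant generators may be taken as $y$ and $a=y^7+7x^4y^3$ (the latter being the weight enumerator of the simplex $[7,3]$-code, which is itself maximally self-orthogonal and so automatically satisfies the functional equation). The inclusion $yR\oplus aR\subseteq\{W:\iota W=W\text{ and the functional equation holds}\}$ follows from $T(fg)=(Tf)(Tg)$ and $Tr=2^{\deg r/2}\,r$ for $r\in R$ (where $TW(x,y):=W(y-x,y+x)$), and equality in degree $n$ is then a Hilbert-series check, as both sides have dimension $\dim R^{(n-1)}+\dim R^{(n-7)}$.

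For part (B), the doubly-even hypothesis adds the symmetry $W_K(ix,y)=W_K(x,y)$, i.e., $B\cdot W_K=W_K$ with $B=\diag(i,1)$. Running the argument of (A) with $D_8$ replaced by the order-$192$ complex reflection group $G=\langle \alpha,\iota,B\rangle$ of the type~II Gleason theorem, and $R$ replaced by $R'$, I would find that $W_K$ lies in the $\iota$-invariant part of the isotypic of an appropriate $2$-dim irreducible representation of $G$. A Molien computation gives fake degrees $(1,17)$ for one such irrep and $(7,23)$ for another, whose $\iota$-invariant covariant generators are $y,u_1$ and $a,u_2$ respectively. Since the basic invariants of $R'$ both have degrees divisible by $8$, the support of $yR'\oplus u_1R'$ lies in degrees $\equiv 1\pmod 8$ and that of $aR'\oplus u_2R'$ in degrees $\equiv -1\pmod 8$; these supports are disjoint. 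Nonvanishing of $W_K$ therefore forces $n\equiv\pm 1\pmod 8$, giving (i), and places $W_K$ in exactly one of the two $R'$-modules depending on the residue of $n\pmod 8$, giving (ii) and (iii). Matching the abstract covariant generators with the explicit $u_1,u_2$ in the statement is then done by exhibiting concrete doubly-even max self-orthogonal codes of lengths $17$ and $23$ whose weight enumerators realize them (for example, $u_2$ is the weight enumerator of the $[23,11]$ dual of the binary Golay code).

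The hard part will be the invariant-theoretic step in (B): identifying the relevant $2$-dim irreducible representations of the order-$192$ group $G$, computing their fake degrees as $(1,17)$ and $(7,23)$ via Molien's formula, and matching the abstract covariant generators to the explicit polynomials $u_1$ and $u_2$ appearing in the theorem. Once this invariant-theoretic backbone is in place, the arithmetic restriction $n\equiv\pm 1\pmod 8$ and the separation of cases (ii) and (iii) follow automatically from the support considerations of the two $R'$-modules.
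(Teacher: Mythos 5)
Your proposal is correct in outline, but it follows a genuinely different route from the paper's --- essentially the classical Mallows--Sloane/Gleason argument, whereas the paper deliberately gives an ``alternative and more elementary'' proof (Theorems \ref{ClassicalThmI} and \ref{ClassicalThmII}, via Proposition \ref{prop:odd_n}). Both arguments start from the same data: your functional equation \(W_K(y-x,y+x)+W_K(y+x,y-x)=2^{(n+1)/2}W_K\), derived from \(K^\perp=K\sqcup(K+v)\), is equivalent to the paper's Equation \eqref{MW-typeOdd2}, and both interpret the pair \((W_K,W_{K+v})\) as spanning a \(2\)-dimensional irreducible representation of \(D_8\) (resp.\ of the order-\(192\) group \(G\)). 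Where you diverge is in how the covariants are controlled: you invoke Chevalley--Shephard--Todd freeness of the covariant module over the invariant ring, compute fake degrees \((1,7)\), \((1,17)\), \((7,23)\) by Molien series, and close with a Hilbert-series dimension count; the paper instead exhibits an explicit lowest-degree semi-invariant for each relevant one-dimensional character (\(D\) in Lemma \ref{semi-invariantD}, the table in Lemma \ref{semi-invGI}), proves cyclicity of each semi-invariant module by a divisibility argument in the UFD \(\CC[s,t]\), and then solves a small linear system with determinant \(D\) (resp.\ \(D_1\), \(D_2\)) to isolate \(W_K\). Your route buys conceptual transparency and reusable general machinery at the price of the Molien computations you defer; the paper's buys self-containedness. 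Two details you should still nail down: (1) the Hilbert-series step needs the converse inclusion, namely that \(\iota\)-invariance plus the functional equation are satisfied \emph{only} by the \(X\)-fixed vectors in copies of the standard representation --- a finite check over the irreducibles of \(D_8\), and a larger one for \(G\) in part (B); and (2) in part (B), pinning down \emph{which} \(2\)-dimensional irreducible of \(G\) the pair \((W_K,W_{K+v})\) spans requires knowing how \(B=\diag(i,1)\) acts on \(W_{K+v}\), which rests on the observation that all odd-weight words of \(K^\perp\) have weight \(\equiv n\pmod 4\) (this is where the doubly-even hypothesis enters beyond \(B\cdot W_K=W_K\), and it is what cleanly separates cases (ii) and (iii)).
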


We will provide an alternative and more elementary proof of Theorem \ref{self-orthogonalI} in Section \ref{sec:MW-Type}, see Theorem \ref{ClassicalThmI} for part (A) and Theorem \ref{ClassicalThmII} for part (B).

\section{Classification of maximal totally isotropic subspaces}\label{sec:classification}

Now we turn our attention to the alternating inner product 
\begin{equation}\label{eq:inner_prod}
\langle u,v \rangle=u\cdot v+p(u)p(v)
\end{equation}
on the vector space $V=\FF_2^n$. From now on, $K^\perp$ will refer to the orthogonal complement of $K\subset V$ with respect to this inner product. The following notation will be useful:
\[
V^+ \bydef \{v\in V\mid p(v)=0\}\text{ and }K^+\bydef K\cap V^+.
\]
As before, we will denote the vector of all ones by $\buno$.

\subsection{Odd Length}

If $n$ is odd then the inner product \eqref{eq:inner_prod} is degenerate, since $\buno$ belongs to its radical $V^\perp$. In fact, the radical is spanned by $\buno$, since $V=V^+\oplus\langle\buno\rangle$ and the inner product on $V^+$ (which coincides with the dot product) is nondegenerate. Since any maximal totally isotropic subspace $K$ of $V$ contains the radical, we have $K=K^+\oplus\langle\buno\rangle$ where $K^+$ is a Lagrangian subspace of the $(n-1)$-dimensional symplectic space $V^+$. 

On the other hand, for any subspace $L\subset V^+$, the orthogonal complement with repsect to the inner product \eqref{eq:inner_prod} is the same as with respect to the dot product, i.e., the dual code of $L$. Thus, $L$ is a Lagrangian subspace of $V^+$ if and only if $L$ is a maximally self-orthogonal code. We have proved the following:

\begin{proposition}\label{prop:odd_n}
    For odd $n$, the mapping $L\mapsto L^\perp=L\oplus\langle\buno\rangle$ is a bijection between the set of maximally self-orthogonal binary codes of length $n$ and the set of maximal totally isotropic subspaces of $\FF_2^n$ with respect to the inner product \eqref{eq:inner_prod}.\qed 
\end{proposition}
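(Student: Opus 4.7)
The plan is to assemble the observations made in the two paragraphs preceding the proposition into a clean bijection argument. The essential structural facts, both already noted, are that $V^\perp=\langle\buno\rangle$ (since the form restricted to $V^+$ is the nondegenerate dot product) and $V=V^+\oplus\langle\buno\rangle$ (since $p(\buno)=n=1$ in $\FF_2$ for odd $n$). Once these are in place, the content of the proposition is essentially a matching of what ``maximal'' means on each side.

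For the forward direction, starting from a maximally self-orthogonal code $L\subset\FF_2^n$, I would first note that $L\subset V^+$: for $v\in L$ self-orthogonality gives $v\cdot v=0$, and over $\FF_2$ one has $v\cdot v=\sum v_i=p(v)$. Since the dot product is nondegenerate on the $(n-1)$-dimensional space $V^+$, maximal self-orthogonality forces $\dim L=(n-1)/2$ and $L$ to be a Lagrangian in $V^+$. Setting $K=L\oplus\langle\buno\rangle$, total isotropy with respect to $\langle\cdot,\cdot\rangle$ is automatic, since $\buno$ lies in the radical and on $L$ the alternating form agrees with the dot product. To identify $K$ with $L^\perp$ and simultaneously establish maximality, I would compute $L^\perp$ using the decomposition $V=V^+\oplus V^\perp$: an element $w+a\buno$ with $w\in V^+$ and $a\in\FF_2$ lies in $L^\perp$ iff $w\cdot v=0$ for all $v\in L$, i.e., iff $w$ lies in the dot-product dual of $L$ intersected with $V^+$, which equals $L$ by a dimension count. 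Hence $L^\perp=L\oplus\langle\buno\rangle=K$, and in particular $K=K^\perp$, so $K$ is maximal totally isotropic.

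For the inverse, given a maximal totally isotropic $K\subset V$, the key observation is that $K+V^\perp$ is still totally isotropic (as $V^\perp$ pairs trivially with everything), so maximality forces $V^\perp\subset K$, giving $K=K^+\oplus\langle\buno\rangle$ with $K^+\subset V^+$. Then $K^+$ is totally isotropic in $V^+$ with the dot product, and maximality of $K$ in $V$ translates, via the same dimension count, to $K^+$ being a Lagrangian in $V^+$, equivalently a maximally self-orthogonal code of length $n$. The two assignments $L\mapsto L\oplus\langle\buno\rangle$ and $K\mapsto K\cap V^+$ are then immediately seen to be mutually inverse.

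I do not anticipate any serious obstacle: the result is effectively a cleanup of the preceding discussion. The one point requiring a moment's care is reconciling ``maximally self-orthogonal in $\FF_2^n$'' (a priori, maximality among all self-orthogonal codes in $\FF_2^n$) with ``Lagrangian in $V^+$,'' but this is handled by the observation that every self-orthogonal binary code of odd length is automatically contained in $V^+$.
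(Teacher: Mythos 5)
Your proposal is correct and follows essentially the same route as the paper's own argument (which is the discussion immediately preceding the proposition): identify the radical as $\langle\buno\rangle$, note that every maximal totally isotropic subspace contains it and splits as a Lagrangian of the symplectic space $V^+$ plus $\langle\buno\rangle$, observe that on subspaces of $V^+$ the form \eqref{eq:inner_prod} reduces to the dot product so that Lagrangians of $V^+$ are exactly the maximally self-orthogonal codes, and match up the maximality conditions by dimension count. You supply slightly more detail at a couple of points (e.g., that self-orthogonal codes of odd length automatically lie in $V^+$), but there is no substantive difference.
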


\begin{example}\label{ex:Hamming}
The Hamming $[7,4]$-code $H$ with parity bits in positions $1$, $2$ and $4$ can be defined as the row space of the following row-equivalent matrices:
\[
\begin{bmatrix}
1 & 1 & 1 & 0 & 0 & 0 & 0\\
1 & 0 & 0 & 1 & 1 & 0 & 0\\
0 & 1 & 0 & 1 & 0 & 1 & 0\\
1 & 1 & 0 & 1 & 0 & 0 & 1
\end{bmatrix}\sim
\begin{bmatrix}
1 & 0 & 0 & 0 & 0 & 1 & 1\\
0 & 1 & 0 & 0 & 1 & 0 & 1\\
0 & 0 & 1 & 0 & 1 & 1 & 0\\
0 & 0 & 0 & 1 & 1 & 1 & 1
\end{bmatrix}.
\]
It is a maximal totally isotropic subspace and has minimum distance $3$. 
The corresponding maximally self-orthogonal code is its even part $H^+$ (which coincides with the dual code of $H$ and is sometimes denoted by $e_7$).
\end{example}

\subsection{Even Length}

The situation is very different for even $n$. To begin with, the inner product \eqref{eq:inner_prod} is nondegenerate. 
Indeed, the orthogonal complement of $V^+$ is spanned by $\buno$, but $\buno\notin V^\perp$, hence $V^\perp$ is trivial. 
Moreover, $\langle\buno\rangle$ is the radical of the inner product on $V^+$ and, hence,  the Lagrangians of $V$ that are contained in $V^+$ must contain $\langle\buno\rangle$ and are in bijection with the Lagrangians in the $(n-2)$-dimensional symplectic space $V^+/\langle\buno\rangle$. 

We will call a Lagrangian $L$ of $V$ \emph{even} if it is contained in $V^+$, and otherwise we will call it \emph{odd}. Clearly, the even Lagrangians are the same as the self-dual binary linear codes of length $n$, so the previous observation can be used to count the latter, see Equation \eqref{eq:Tn}, since the number of Lagrangians in a \(2m\)-dimensional symplectic space over \(\mathbb{F}_q\) is given by
\begin{equation*}
    \prod_{i=1}^m(q^i+1).
\end{equation*}
The odd Lagrangians are described by the following result.

\begin{proposition}\label{prop:reduction_to_even}
Let $V=\FF_2^n$ where $n$ is even.
If $K$ is an even Lagrangian in $V$ and $\xi$ is an odd vector in $V$, then 
$(K\cap\langle\xi\rangle^\perp)\oplus\langle\xi\rangle$ is an odd Lagrangian.
Conversely, any odd Lagrangian $L$ can be obtained in this way starting from a unique even Lagrangian, namely, 
$K=L^+\oplus\langle\buno\rangle$, and any odd vector $\xi\in L$. 
Moreover, for any even Lagrangian $K$ and any complement $K_0$ for $\langle\buno\rangle$ in $K$, 
there are exactly two odd Lagrangians $L$ such that $K_0=L^+$, and these are mapped to each other by the transvection 
of $V$ defined by $\buno$, i.e., the mapping $\tau(v) \bydef v+p(v)\buno$.
\end{proposition}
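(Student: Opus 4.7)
The plan is to handle the three claims in sequence, using repeatedly the following two facts: (a) since $n$ is even we have $p(\buno)=0$, so $\langle \buno, v\rangle = p(v)$ for every $v\in V$; and (b) whenever at least one of $v,w$ lies in $V^+$, the inner product $\langle v,w\rangle$ coincides with the dot product $v\cdot w$. Fact (b) in particular identifies even Lagrangians with self-dual binary codes, as already noted in the excerpt, and this will be essential at the end.

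For the first assertion I would take $K$ an even Lagrangian and $\xi$ an odd vector, and observe via (a) that $\langle\buno,\xi\rangle=1$. Hence the functional $\langle -,\xi\rangle$ is nonzero on $K$, so $K\cap\langle\xi\rangle^\perp$ has codimension one in $K$ and the proposed subspace $M:=(K\cap\langle\xi\rangle^\perp)\oplus\langle\xi\rangle$ has dimension $n/2$ (the sum is direct because $\xi\notin V^+$). Total isotropy of $M$ is automatic: elements of $K\cap\langle\xi\rangle^\perp$ are pairwise orthogonal by $K\subset K^\perp$, each is orthogonal to $\xi$ by construction, and $\langle\xi,\xi\rangle=0$ since the form is alternating. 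Dimension then forces $M$ to be a Lagrangian, and it is odd because $\xi\in M$.

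For the converse, given an odd Lagrangian $L$ and an odd $\xi\in L$, I would define $K:=L^+\oplus\langle\buno\rangle$ and verify in order: first, $\buno\notin L^+$ (otherwise $\langle\buno,\xi\rangle=0$, contradicting (a)), so the sum is direct and $\dim K=n/2$; second, $K$ is totally isotropic because $L^+$ is and, by (a), $\langle\buno,v\rangle=p(v)=0$ for all $v\in L^+$; third, $K\cap\langle\xi\rangle^\perp=L^+$, since $\langle-,\xi\rangle$ vanishes on $L^+$ by isotropy of $L$ and takes the value $1$ on $\buno$ by (a). Consequently $(K\cap\langle\xi\rangle^\perp)\oplus\langle\xi\rangle=L$. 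Uniqueness of $K$ is then immediate because the formula $K=L^+\oplus\langle\buno\rangle$ depends only on $L$.

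For the final counting statement I would fix an even Lagrangian $K$ and a complement $K_0$ of $\langle\buno\rangle$ in $K$, and describe all odd Lagrangians $L$ with $L^+=K_0$. Such an $L$ necessarily has the form $K_0\oplus\langle\xi\rangle$ for some odd $\xi$, and by (b) the isotropy condition $\langle\xi,v\rangle=0$ for $v\in K_0$ is exactly $\xi\cdot v=0$, placing $\xi$ in the dot-product orthogonal complement $K_0^{\perp_{\mathrm{dot}}}$ of $K_0$ in $V$. The key input is the identification $K_0^{\perp_{\mathrm{dot}}}\cap V^+=(K_0+\langle\buno\rangle)^{\perp_{\mathrm{dot}}}=K^{\perp_{\mathrm{dot}}}=K$, where the last equality uses that even Lagrangians are self-dual binary codes. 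It then follows that the odd vectors of $K_0^{\perp_{\mathrm{dot}}}$ form a single coset $\xi+K$ of size $2^{n/2}$, which breaks into exactly the two $K_0$-cosets $\xi+K_0$ and $\xi+\buno+K_0$; since two odd vectors in the same $K_0$-coset produce the same $L$, this yields exactly two odd Lagrangians with $L^+=K_0$. Finally $\tau$ fixes $V^+$ pointwise, hence fixes $K_0$, while mapping $\xi$ to $\xi+\buno$, so it interchanges the two Lagrangians. The main technical step is this identification $K_0^{\perp_{\mathrm{dot}}}\cap V^+=K$, which is where the self-duality of $K$ as a binary code does the real work; everything else is bookkeeping organized around facts (a) and (b).
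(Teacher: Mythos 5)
Your argument is correct in substance, and its first two parts follow essentially the same route as the paper (codimension count plus total isotropy for the forward direction, and the identification $K\cap\langle\xi\rangle^\perp=L^+$ for the converse). The one place you are too quick is the uniqueness of $K$: saying that ``the formula $K=L^+\oplus\langle\buno\rangle$ depends only on $L$'' shows that a canonical even Lagrangian works, not that it is the \emph{only} one that does. What has to be checked is that if $L=(K'\cap\langle\xi\rangle^\perp)\oplus\langle\xi\rangle$ for some even Lagrangian $K'$ and odd $\xi$, then $K'\cap\langle\xi\rangle^\perp$ equals $L^+$ (it is contained in $L^+$ and has the same dimension), whence $K'\supseteq L^+\oplus\langle\buno\rangle$ because $\buno\in K'$, and equality follows by dimension. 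This is a two-line fix, but as written the step is asserted rather than proved.

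For the counting statement your route genuinely differs from the paper's. You work inside the dot-product dual $K_0^{\perp_{\mathrm{dot}}}$, use self-duality of $K$ to obtain $K_0^{\perp_{\mathrm{dot}}}\cap V^+=K$, and then count the odd vectors of $K_0^{\perp_{\mathrm{dot}}}$ as a single coset of $K$ that splits into two cosets of $K_0$. The paper instead passes to the quotient $K_0^\perp/K_0$, a $2$-dimensional symplectic space over $\FF_2$, notes that it has exactly three lines, one of which is $K/K_0$, and identifies the remaining two with the odd Lagrangians over $K_0$. The two arguments encode the same dimension facts; yours makes explicit exactly where the self-duality of $K$ as a binary code is used, while the paper's quotient picture makes the ``exactly two'' count and the swapping action of the transvection $\tau$ visible at a glance. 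One small point you should state explicitly: the cosets $\xi+K_0$ and $\xi+\buno+K_0$ yield \emph{distinct} Lagrangians precisely because $\buno\notin K_0$ (otherwise $K_0$ would not be a complement for $\langle\buno\rangle$ in $K$); with that noted, your count of exactly two is complete.
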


\begin{proof}
If $K$ is an even Lagrangian and $\xi\in V$ is odd, then $\xi\notin K=K^\perp$, so $K_0 \bydef K\cap\langle\xi\rangle^\perp$ has codimension $1$ in $K$. Hence, $K_0\oplus\langle\xi\rangle$ is a totally isotropic subspace of the same dimension as $K$ and, therefore, an odd Lagrangian.

Conversely, if $L$ is an odd Lagrangian, then $L^+\bydef L\cap V^+$ has codimension $1$ in $L$ and does not contain $\buno$, so $K \bydef L_0\oplus\langle\buno\rangle$ is an even Lagrangian. For any odd $\xi\in L$, we have $K\cap\langle\xi\rangle^\perp=L^+$ and $L=L^+\oplus\langle\xi\rangle$. If $K'$ is an even Lagrangian such that $L=(K'\cap\langle\xi\rangle^\perp)\oplus\langle\xi\rangle$, then $K'\cap\langle\xi\rangle^\perp$ is contained in $L^+$ and, hence, must be equal to $L^+$ by dimension count. This forces $K'=L^+\oplus\langle\buno\rangle$.

Moreover, if $K$ is an even Lagrangian and $K_0$ is a complement for $\langle\buno\rangle$ in $K$, then $K_0$ is the kernel of some linear function $K\to\FF_2$, which can be extended to $V$ and, therefore, is equal to $\langle\xi,\cdot\rangle$ for some $\xi\in V$ by the nondegeneracy of the inner product. Thus, $K_0=K\cap\langle\xi\rangle^\perp$ and, hence, $\buno\notin\langle\xi\rangle^\perp$, so $\xi$ is odd. By the first paragraph, $L \bydef K_0\oplus\langle\xi\rangle$ is an odd Lagrangian and, clearly, $K_0=L^+$. Finally, if $L'$ is an odd Lagrangian such that $K_0=L'^+$, then both $L/K_0$ and $L'/K_0$ are $1$-dimensional subspaces of $K_0^\perp/K_0$ that are different from $K/K_0$. It remains to observe that there are precisely two such subspaces (since $\dim K_0^\perp/K_0=2$), and they are swapped by our transvection (since $\tau(\xi)-\xi=\buno\notin K_0$), so either $L'=L$ or $L'=\tau(L)$.
\end{proof}

The two-to-one correspondence $L\mapsto L^+$ of Proposition~\ref{prop:reduction_to_even} between odd Lagrangians and complements for $\langle\buno\rangle$ in even Lagrangians is clearly $S_n$-equivariant. Therefore, if $\{K_i\}$ is a set of representatives of the $S_n$-orbits of even Lagrangians, then a classification of odd Lagrangians up to the action of $S_n$ can be obtained by finding, for each $i$, the $\Aut(K_i)$-orbits in the set of complements for $\langle\buno\rangle$ in $K_i$. It should be noted that the two odd Lagrangians, $L$ and $L'$, corresponding to the same complement $K_i^0$ may or may not lie in the same $S_n$-orbit. If they do, then they are swapped by an element of the stabilizer of $K_i^0$ in $\Aut(K_i)$. It follows that $\Aut(L)=\Aut(L')$ is a subgroup of index $\le 2$ in the said stabilizer. 

\begin{example}\label{ex:odd_from_ext_Hamming}
Consider the extended Hamming code $e_8$ from Example \ref{extendedHamming}. 
If we take as the indexing set for the coordinates of $\FF_2^8$ the affine $3$-space over $\FF_2$, 
then $e_8$ consists of $\bzero$, $\buno$ and the indicator functions of all affine planes.
(This is a special case of Reed-Muller codes, see e.g. \cite[Ch.~9]{Adam}.)
The automorphism group $\Aut(e_8)$ is the group of affine transformations of the $3$-space, which is isomorphic to the semidirect product $\FF_2^3\rtimes\GL_3(\FF_2)$. The complements for $\langle\buno\rangle$ in $e_8$ are parametrized by the cosets $\xi+e_8$ of vectors of odd weight. Since $\Aut(e_8)$ is a $3$-transitive subgroup of $S_8$, there is only one 
orbit, represented by $e_8\cap\langle\xi\rangle^\perp$ with $\xi=10000000$. The weight enumerators of the corresponding odd Lagrangians, $L=(e_8\cap\langle\xi\rangle^\perp)\oplus\langle\xi\rangle$ and $L'=\tau(L)$, are $y^8+xy^7+7x^4y^4+7x^5y^3$ and $y^8+7x^3y^5+7x^4y^4+x^7y$ respectively, so $L$ and $L'$ are not equivalent. 
Since $\xi$ is the only vector of weight $1$ in the coset $\xi+e_8$, the automorphism group of $L$ and $L'$ is the stabilizer of $\xi$ in $\Aut(e_8)$, which is isomorphic to $\GL_3(\FF_2)$.
\end{example}

A database of generator matrices for representatives of inequivalent self-dual codes of length \(n = 2m \leq 40\) has been compiled by Harada and Munemasa \cite{database}. The generator matrices provided there have the property that their rows sum to \(\mathbf{1}\). If such a matrix has rows \(g_1,\dots, g_m\), then any of the \(m-1\) rows, say \(g_2,\dots, g_m\), span a complement for \(\langle \mathbf{1} \rangle\) in \(\langle g_1,\dots,g_m \rangle\). The remaining \(2^{m-1} - 1\) complements are given by \(\langle a_2 \mathbf{1} + g_2,\dots, a_m \mathbf{1} + g_m \rangle\) for each nonzero \(a_2\cdots a_m \in \mathbb{F}_2^{m-1}\). As we compute each complement, we may check if it is equivalent to any previously computed complement and, if so, disregard it. With this in mind, we give a description below of the algorithm used to obtain Table \ref{Table:OddLag}. The code, written using SageMath \cite{sage}, that implements this algorithm is included in Appendix \ref{code}.

\begin{itemize}
    \item Import the list \(SD\) of generator matrices of inequivalent self-dual codes obtained from \cite{database}.
    \item For each code \(K\) in \(SD\), generate all inequivalent complements for \(\langle \mathbf{1} \rangle\) in \(K\) according to the above paragraph.
    \item For each code \(K\) in \(SD\) and each inequivalent complement \(K_0\) for \(\langle \mathbf{1} \rangle\) in \(K\), find an odd \(\xi \in K_0^\perp\).
    \item Check if \(L = K_0 \oplus \langle \xi \rangle\) and \(L' = K_0 \oplus \langle \xi + \mathbf{1}\rangle\) are equivalent and, if so, add \(L\) to our list of inequivalent odd Lagrangians. Otherwise, add \(L\) and \(L'\).
\end{itemize}

To verify whether we have exhausted all inequivalent classes of odd Lagrangians, we used a mass formula analogous to that of Equation \eqref{massFormula} for self-dual codes. For each self-dual code \(K\) of length \(n = 2m\), there are \(2^{m-1}\) complements for \(\langle \mathbf{1} \rangle\) in \(K\) and, by Proposition \ref{prop:reduction_to_even}, each complement corresponds to exactly two odd Lagrangians. Recall from Equation \eqref{eq:Tn} that the number of binary self-dual codes of length \(n\) is
\begin{equation*}
    T_{n} = \prod_{i=1}^{m-1}(2^i + 1).
\end{equation*}
Then, the number of odd Lagrangians is given by \(2^mT_n\) and, hence, if \(S\) is a set of representatives of all equivalence classes of odd Lagrangians, then the mass formula states
\begin{equation*}
    \frac{2^mT_n}{n!} = \sum_{C \in S} \frac{1}{\lvert \Aut(C)\rvert}.
\end{equation*}


We now give a description of Table \ref{Table:OddLag}, starting with a definition that extends the concepts of type I and type II self-dual binary codes and will be useful in Section~\ref{sec:MW-Type}:
\begin{df}
A code \(C\) is of \emph{type II} if every vector in \(C^+\) has weight divisible by \(4\). 
If this is not the case, \(C\) is said to be of \emph{type I}.
\end{df}
In Table \ref{Table:OddLag}, the entries of the columns labeled by \(\#_I\) and \(\#_{II}\) indicate the number of type I and type II Lagrangians, respectively, for a particular length \(n\). The superscripts \(odd\) and \(even\) always indicate if the corresponding statistic is for odd Lagrangians or even Lagrangians. The entries of the columns labeled by \(d_{max}\) indicates the highest minimum distance that a Lagrangian achieves. When \(d_{max}^{odd}\) outperforms \(d_{max}^{even}\), we write its value in boldface. The statistics for even Lagrangians (self-dual codes) are taken from \cite{Bouy12}.

\begin{table}[ht]
    \centering
    \begin{tabular}{|c|c|c|c|c|c|c|c|c|}
        \hline
         \(n\) & \(\#_I^{odd}\) & \(\#_{II}^{odd}\) & \(d_{max}^{odd}\) & \(\#_{max,I}^{odd}\) & \(\#_{max,II}^{odd}\) & \(\#_I^{even}\) & \(\#_{II}^{even}\) & \(d_{max}^{even}\) \\ \hline
         2 & 0 & 1 & 1 & 0 & 1 & 1 & 0 & 2 \\ 
         4 & 1 & 0 & 1 & 1 & 0 & 1 & 0 & 2 \\
         6 & 1 & 1 & \textbf{3} & 0 & 1 & 1 & 0 & 2 \\
         8 & 2 & 2 & 3 & 0 & 1 & 1 & 1 & 4 \\
         10 & 5 & 2 & \textbf{4} & 0 & 1 & 2 & 0 & 2 \\
         12 & 11 & 0 & 3 & 3 & 0 & 3 & 0 & 4 \\
         14 & 17 & 4 & 4 & 1 & 1 & 4 & 0 & 4 \\
         16 & 32 & 8 & 4 & 2 & 2 & 5 & 2 & 4 \\
         18 & 76 & 10 & 4 & 7 & 7 & 9 & 0 & 4 \\
         20 & 194 & 0 & \textbf{5} & 1 & 0 & 16 & 0 & 4 \\
         22 & 474 & 27 & \textbf{7} & 0 & 1 & 25 & 0 & 6 \\
         24 & 1439 & 95 & 7 & 0 & 1 & 46 & 9 & 8 \\
         \hline
    \end{tabular}
    \vspace{6pt}
    \caption{Inequivalent Lagrangians of length \(n \leq 24\)}
    \label{Table:OddLag}
\end{table}

We also wish to remark that, during our computations, we found that the smallest length for which two inequivalent odd Lagrangians have the same weight distribution is \(16\).

\subsection{Direct Product Decomposition}

Consider $V'=\FF_2^{n'}$, $V''=\FF_2^{n''}$ and $V=V'\times V''=\FF_2^n$ where $n=n'+n''$. In the case of the dot product, if $K'\subset V'$ and $K''\subset V''$ are self-orthogonal (respectively, self-dual) codes then so is the direct product $K'\times K''\subset V$, because $(u',u'')\cdot(v',v'')=u'\cdot v'+u''\cdot v''$ for all $u',v'\in V'$ and $u'',v''\in V''$.
In the case of the inner product defined by Equation~\eqref{eq:inner_prod}, we have the following relationship:
\begin{equation}\label{eq:decomp_inner_prod}
\langle(u',u''),(v',v'')\rangle=\langle u',v'\rangle+\langle u'',v''\rangle+\begin{vmatrix}
p(u') & p(v')\\
p(u'') & p(v'')
\end{vmatrix}.
\end{equation}
It follows that, for subspaces $L'\subset V'$ and $L''\subset V''$, the direct product $L'\times L''$ is totally isotropic if and only if both $L'$ and $L''$ are totally isotropic and at least one of them consists entirely of even vectors. Hence, by dimension count, we obtain the following possibilities for a direct product decomposition $L=L'\times L''$ of a maximal totally isotropic subspace $L\subset V$:
\begin{itemize}
    \item If $n$ is odd, then $L'$ and $L''$ must be maximal totally isotropic and the one that has even length must be even.
    \item If $n$ is even and $L$ is even, then $L'$ and $L''$ must be even and maximal totally isotropic (hence $n'$ and $n''$ are even).
    \item If $n$ is even and $L$ is odd, then either (i) $n'$ and $n''$ are even, $L'$ and $L''$ are maximal totally isotropic, and one of $L'$ and $L''$ is even and the other odd, or (ii) $n'$ and $n''$ are odd, and one of $L'$ and $L''$ is maximal totally isotropic (hence odd) and the other is maximal among totally isotropic even subspaces.
\end{itemize}

Note that in all cases at least one of the elements $e' \bydef (\buno',\bzero'')$ or $e'' \bydef (\bzero',\buno'')$ belongs to $L$. It is convenient to introduce coordinate-wise product on $V=\FF_2^n$, which makes it a Boolean algebra isomorphic to the algebra of all subsets of $\{1,\ldots,n\}$. Then the multiplication by $e'$ and $e''$ gives the projection to $V'$ and $V''$, respectively. Also, the dot product is given by $u\cdot v=p(uv)$ and, hence, our inner product can be written as 
\[
\langle u,v\rangle = p(uv)+p(u)p(v).
\]
In particular, a subspace $L\subset V$ is totally isotropic if and only if $p(uv)=p(u)p(v)$ for all $u,v\in L$. 
Suppose $L=L'\times L''$ as above and $e'\in L$. Then, for any elements $u=(u',u'')$ and $v=(v',v'')$ of $L$, we have 
\begin{equation*}
\begin{split}
\langle u',v'\rangle &= p(u'v')+p(u')p(v')=p(e'uv)+p(e'u)p(e'v)\\
&=p(e'uv)+p(e')^2p(u)p(v)=p(e'uv)+p(e')p(uv)=\langle e',uv\rangle.
\end{split}
\end{equation*}
Since $L'$ is totally isotropic, we conclude that $\langle e',uv\rangle=0$ for all $u,v\in L$. 

\begin{proposition}\label{prop:decompL}
Let $L\subset\FF_2^n$ be a maximal totally isotropic subspace. Then $L$ is decomposable (in the sense that, up to a permutation of coordinates, $L=L'\times L''$ where $L'$ and $L''$ have nonzero length) if and and only if there exists a vector $e\neq\bzero,\buno$ such that $\langle e,uv\rangle=0$ for all $u,v\in L$.
\end{proposition}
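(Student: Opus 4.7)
The forward direction requires almost no new work: the paragraph preceding the proposition has already verified that if $L = L' \times L''$ with $n', n'' \geq 1$ and $e = (\buno', \bzero'')$ lies in $L$ (which must occur for at least one of the two natural candidates $e'$ or $e''$), then $\langle e, uv \rangle = 0$ for all $u, v \in L$, and clearly $e \neq \bzero, \buno$ since both $n'$ and $n''$ are positive.

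For the backward direction, my plan is to exhibit $L$ explicitly as an internal direct sum $eL \oplus (\buno + e) L$. First I would specialize the hypothesis to $v = u$: since $u^2 = u$ in the Boolean algebra $\FF_2^n$, this gives $\langle e, u \rangle = 0$ for every $u \in L$, so maximal total isotropy yields $e \in L^\perp = L$. The main step is then to upgrade the hypothesis to the invariance $eL \subseteq L$: for any $u, v \in L$, I would compute
\begin{equation*}
\langle eu, v \rangle = p(euv) + p(eu)\,p(v) = p(e)\,p(uv) + p(e)\,p(u)\,p(v) = p(e)\,\langle u, v \rangle,
\end{equation*}
using the hypothesis to rewrite $p(euv)$ and the relation $p(eu) = p(e)\,p(u)$ just derived. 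Since $\langle u, v \rangle = 0$, this vanishes for all $v \in L$, so maximality forces $eu \in L$.

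With $eL \subseteq L$ (and hence $(\buno + e) L \subseteq L$) established, let $S$ be the support of $e$ and $T$ its complement in $\{1, \ldots, n\}$, both nonempty because $e \neq \bzero, \buno$. I would then identify $eL$ with a subspace $L'$ of the coordinate subspace $V' \subseteq V$ indexed by $S$, and $(\buno + e) L$ with a subspace $L''$ of the complementary coordinate subspace $V''$, so that the decomposition $u = eu + (\buno + e) u$ for $u \in L$ gives $L = L' \times L''$ with $n', n'' \geq 1$. The only nontrivial step is the inner-product computation showing $eu \in L^\perp$; everything else is bookkeeping about the Boolean algebra structure on $\FF_2^n$ and the definition of decomposability.
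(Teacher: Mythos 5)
Your proof is correct, and its converse direction takes a genuinely different route from the paper's. Both arguments open the same way: specializing the hypothesis to $v=u$ and using $u^2=u$ gives $\langle e,u\rangle=0$ for all $u\in L$, hence $e\in L^\perp=L$. From there the paper works with the two projections $L'$, $L''$ of $L$: it shows $L'$ is totally isotropic via $\langle u',v'\rangle=\langle e,uv\rangle=0$, then runs a parity case analysis on $n'$ (using $p(eu)=p(e)p(u)$) to conclude that one of $L'$, $L''$ consists of even vectors, so that the determinant term in Equation~\eqref{eq:decomp_inner_prod} vanishes and $L''$ is totally isotropic as well; maximality then upgrades the trivial inclusion $L\subseteq L'\times L''$ to equality. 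Your key identity $\langle eu,v\rangle=p(euv)+p(eu)p(v)=p(e)p(uv)+p(e)p(u)p(v)=p(e)\langle u,v\rangle=0$ instead establishes the idempotent-invariance $eL\subseteq L^\perp=L$ directly, and the internal decomposition $L=eL\oplus(\buno+e)L$ then falls out of $u=eu+(\buno+e)u$ with no case analysis and no appeal to the determinant formula --- arguably the cleaner path, since the two factors $eL$ and $(\buno+e)L$ are automatically totally isotropic as subspaces of $L$. The only thing the paper's version buys in exchange is that it re-derives, inside the proof, the fact that one of the two factors must be even (which is part of the taxonomy in the bulleted list preceding the proposition); on your route that observation still follows immediately from $p(eu)=p(e)p(u)$ but would need to be stated separately if wanted.
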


\begin{proof}
We have already proved that if $L$ is decomposable, then there exists such $e$. 
For the converse, we may assume that $e=(\buno',\bzero'')\in\FF_2^{n'}\times\FF_2^{n''}$ where $n',n''\ne 0$. 
Let $L'$ and $L''$ be the projections of $L$. 

First, observe that, for any $u\in L$, we have $\langle e,u\rangle=\langle e,u^2\rangle=0$, so $e\in L^\perp=L$. The above computation then shows that $\langle u',v'\rangle=\langle e,uv\rangle=0$ for all $u,v\in L$, so $L'$ is totally isotropic.

Since $e'\in L$, we have $p(u')=p(eu)=p(e)p(u)$ for all $u\in L$. If $n'$ is even, we get $p(u')=0$, so $L'$ consists entirely of even vectors. If $n'$ is odd, then $p(u')=p(u)$, so $L''$ consists entirely of even vectors.
In both cases, the determinant in Equation \eqref{eq:decomp_inner_prod} vanishes. Since $L$ and $L'$ are totally isotropic, this implies that $L''$ is totally isotropic, too.

Finally, we have $L\subset L'\times L''$ and $L'\times L''$ is totally isotropic by the previous paragraph. By maximality of $L$, we conclude that $L=L'\times L''$.
\end{proof}

As an application of Proposition~\ref{prop:decompL}, consider maximal totally isotropic codes of minimum distance $1$ or $2$.

Suppose $L$ contains a vector of weight $1$, say, $e\in L$ has $1$ in position $i$ and $0$ elsewhere. Then $\langle e,u\rangle=p(u)+u_i$ for all $u\in\FF_2^n$. In particular, $u_i=p(u)$ for all $u\in L$ and, hence, $\langle e,uv\rangle=p(uv)+u_i v_i=p(u)p(v)+u_i v_i=0$ for all $u,v\in L$. By Proposition~\ref{prop:decompL}, $L$ decomposes as the direct product of $\FF_2$ and either an even Lagrangian of $\FF_2^{n-1}$ (odd $n$) or the even part of a maximal totally isotropic subspace of $\FF_2^{n-1}$ (even $n$).

\begin{corollary}\label{cor:split_off_singleton}
Let $L\subset\FF_2^n$ be a maximal totally isotropic subspace such that $d(L)=1$. 
Then $L$ decomposes as $i_1\times M$, $i_2'\times M$ or $i_1\times M^+$
where $i_1=\{0,1\}$, $i_2'=\{00,10\}$, and $M$ is a maximal totally isotropic code with $d(M)\ge 2$ that is even in the first two cases.
\end{corollary}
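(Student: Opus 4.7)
The plan is to build directly on the paragraph immediately preceding the corollary, which already does the core work. Pick $e=e_i\in L$ of weight one; that paragraph verifies the hypothesis $\langle e_i,uv\rangle=0$ for all $u,v\in L$ of Proposition~\ref{prop:decompL}, so $L$ splits at coordinate $i$ as $L=L'\times L''$ with $L'=\{0,1\}=i_1$ and $L''$ even totally isotropic in $\FF_2^{n-1}$ (even because the ``short'' factor has odd length $n'=1$, by the parity case analysis inside the proof of Proposition~\ref{prop:decompL}). A dimension count then identifies $L''$ either as an even Lagrangian (self-dual code) of $\FF_2^{n-1}$ when $n$ is odd, or as the even part $M^+$ of the unique maximal totally isotropic code $M:=L''\oplus\langle\buno_{n-1}\rangle\subset\FF_2^{n-1}$ when $n$ is even.

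For odd $n$ this already yields $L=i_1\times M$ with $M$ self-dual, and $d(M)\geq 2$ is automatic because every self-dual binary code consists of even-weight vectors. For even $n$ I split on $d(M)$. If $d(M)\geq 2$, the third decomposition $L=i_1\times M^+$ holds by construction. If $d(M)=1$, then $M$ itself is maximal totally isotropic of odd length with minimum distance one, so the odd-length case already proved applies: $M=i_1\times N$ at some coordinate $j$, with $N$ self-dual of length $n-2$.

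The remaining step is a mechanical translation. Since $N$ is even, a direct computation gives $M^+=\{(a,v):a=p(v),\,v\in N\}=\{0\}\times N$, so $L=i_1\times L''=i_1\times\{0\}\times N$; regrouping the two singled-out coordinates $\{i,j\}$ into a single two-coordinate block converts $i_1\times\{0\}$ into $\{00,10\}=i_2'$, producing $L=i_2'\times N$, the second decomposition, with $d(N)\geq 2$ again automatic from self-duality of $N$. The main technical point to be careful about is the parity bookkeeping in the proof of Proposition~\ref{prop:decompL} that forces $L''$ to be even and of the claimed dimension; once that is in hand the dichotomy $d(M)\geq 2$ versus $d(M)=1$ neatly separates the $i_1\times M^+$ and $i_2'\times M$ cases, and the reduction in the degenerate subcase is legitimate because the odd-length case of the corollary is proved without reference to the even-length case.
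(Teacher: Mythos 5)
Your proposal is correct and follows the paper's own route: the paper's argument is exactly the paragraph preceding the corollary, which applies Proposition~\ref{prop:decompL} to a weight-one vector to get $L=i_1\times L''$ with $L''$ an even Lagrangian (odd $n$) or the even part $M^+$ of a maximal totally isotropic code $M$ of odd length $n-1$ (even $n$). Your extra step --- recursing when $d(M)=1$ so that $M=i_1\times N$ with $N$ self-dual, whence $M^+=\{0\}\times N$ and $L=i_2'\times N$ --- is precisely the case analysis the paper leaves implicit in order to obtain the $i_2'\times M$ alternative with $d(M)\ge 2$.
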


For instance, $L$ in Example \ref{ex:odd_from_ext_Hamming} decomposes as $i_1\times e_7$ where $e_7$ is the even part of the Hamming $[7,4]$-code (see Example \ref{ex:Hamming}).

Now, suppose that $L$ contains a vector of weight $2$, say, $e\in L$ has $1$ in positions $i\ne j$ and $0$ elsewhere. Then $L$ does not separate points $i$ and $j$ in the sense that $u_i=u_j$ for all $u\in L$. We also have $\langle e,uv\rangle=u_i v_i+u_j v_j=0$. Hence, Proposition~\ref{prop:decompL} allows us to split off $i_2=\{00,11\}$ from $L$. Repeating this procedure, we get:

\begin{corollary}\label{cor:split_off_Boolean}
Let $L\subset\FF_2^n$ be a maximal totally isotropic subspace such that $d(L)=2$. 
Then $L$ decomposes as $i_2^k \times M$ where $k$ is a positive integer and $M$ is a maximal totally isotropic code with $d(M)\ge 3$.
\end{corollary}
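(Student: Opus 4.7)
My plan is to combine the single-step splitting sketched in the paragraph immediately preceding the corollary with an induction on $n$ (equivalently, an iteration on the length).

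First, I would record the base step: if $d(L)\ge 3$ we are already done with $k=0$; the hypothesis $d(L)=2$ rules this out but guarantees the existence of a weight-$2$ vector $e\in L$, say with $1$'s in positions $i\ne j$. The discussion preceding the corollary then gives $u_i=u_j$ for all $u\in L$ (from $\langle e,u\rangle=0$) and $\langle e,uv\rangle = u_iv_i + u_jv_j = 0$ for all $u,v\in L$. Hence Proposition~\ref{prop:decompL} applies to this $e$, and after the appropriate coordinate permutation we obtain a direct product decomposition $L=L'\times L''$. Here $L'$ is the projection of $L$ onto the pair $\{i,j\}$, and because $u_i=u_j$ this projection is exactly $\{00,11\}=i_2$.

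Next I would verify that $L''$ is itself a maximal totally isotropic subspace of $\FF_2^{n-2}$. This follows by matching the decomposition $L=i_2\times L''$ against the case analysis listed just above Proposition~\ref{prop:decompL}: the factor $i_2$ is an even maximal totally isotropic code of even length $2$, and in every one of the three cases (odd $n$; even $n$ with $L$ even; even $n$ with $L$ odd) the complementary factor $L''$ is forced to be a maximal totally isotropic code of length $n-2$, which inherits evenness/oddness from $L$. I would also observe that $d(L'')\ge 2$: any $u''\in L''$ of weight $1$ would yield $(00,u'')\in L$ of weight $1$, contradicting $d(L)=2$.

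Finally, the inductive step: if $d(L'')\ge 3$ set $M=L''$ and $k=1$; otherwise $d(L'')=2$ and the induction hypothesis applied to $L''$ (whose length $n-2<n$) produces a decomposition $L''=i_2^{k-1}\times M$ with $M$ maximal totally isotropic and $d(M)\ge 3$, giving $L=i_2^k\times M$ as desired. Termination is automatic since the length strictly decreases at each step. There is no real obstacle here — the key point (and the only place where anything nontrivial is used) is the case analysis above Proposition~\ref{prop:decompL} that guarantees the cofactor $L''$ is again maximal totally isotropic of the same parity as $L$, so the inductive hypothesis is legitimately available.
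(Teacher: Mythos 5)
Your argument is correct and is essentially the paper's own: the paper's proof consists of the paragraph immediately preceding the corollary (splitting off one copy of $i_2$ via Proposition~\ref{prop:decompL} using a weight-$2$ vector $e$, for which $u_i=u_j$ and $\langle e,uv\rangle=0$) followed by the words ``repeating this procedure,'' and your induction on the length simply makes that repetition and the maximality/minimum-distance bookkeeping for the cofactor precise. The only point worth flagging is the shared edge case $L=i_2^{n/2}$, where the process terminates with $M$ of length zero and $d(M)\ge 3$ holds only vacuously; this is implicit in the paper's version as well and is not a defect specific to your write-up.
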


\section{A MacWilliams-Type Identity and its Consequences}\label{sec:MW-Type} 

Recall that the MacWilliams identity, Theorem \ref{MacWilliams}, says that if \(K\) is a binary linear code, then the weight enumerator of its dual is given by
\begin{equation*}
    \frac{1}{\lvert K \rvert} W_K(y-x, y+x).
\end{equation*}
If we replace the dual of \(K\) with its orthogonal complement with respect to our inner product, \(K^\perp = \{v \in \mathbb{F}_2^n \mid \langle v, w \rangle = 0 \textnormal{ for all } w \in K\}\), then this relationship does not hold in general as the following example illustrates.

\begin{example}\label{MWFails}
    Let \(K = \{0000,\, 1000\}\). Then the weight enumerators of \(K\) and \(K^\perp\) are given by \(W_K(x,y) = y^4 + xy^3\) and \(W_{K^\perp}(x,y) = y^4 + xy^3 + 3x^2y^2 + 3x^3y\) respectively. However,
    \begin{equation*}
        \frac{1}{\lvert K \rvert} W_K(y-x, y+x) = y^4 + 3xy^3 + 3x^2y^2 + x^3y.  
    \end{equation*}
\end{example}

In Example \ref{MWFails}, one may notice that the only difference between the weight enumerator of the dual of \(K\) and \(W_{K^\perp}\) is that the variables in the odd part have been swapped. Indeed, this is always the case, as we now proceed to show.

\subsection{The MacWilliams-Type Identity} Let \(K\) be a binary linear code of length \(n\). Recall that we define \(K^+\) to be the subspace of even vectors in \(K\). Furthermore, let \(K^- = K \smallsetminus K^+\), the subset of odd vectors in \(K\), and denote by \(W_K^+\) and \(W_K^-\) the even and odd parts of \(W_K\) respectively. That is,
\begin{align*}
    W_K^+(x,y) &= \frac{1}{2}\bigl[W_K(x, y) + W_K(-x,y)\bigr] = \sum_{v \in K^+}x^{\lVert v \rVert}y^{n - \lVert v \rVert} \\
    W_K^-(x,y) &= \frac{1}{2}\bigl[W_K(x, y) - W_K(-x,y)\bigr] = \sum_{v \in K^-}x^{\lVert v \rVert}y^{n - \lVert v \rVert}
\end{align*}
and so \(W_K = W_K^+ + W_K^-\). We then have the following MacWilliams-type identity for our inner product.

\begin{proposition}\label{MW-type}
    The weight enumerator of \(K^\perp\) satisfies
    \begin{equation*}
        W_{K^\perp}(x,y) = \frac{1}{\lvert K\rvert}\bigl[W_K^+(y-x, y+x) + W_K^-(y+x,y-x)\bigr].
    \end{equation*}
\end{proposition}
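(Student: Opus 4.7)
The plan is to imitate the classical character-sum proof of the MacWilliams identity, keeping careful track of how the extra term $p(v)p(w)$ in the inner product affects the transform. The starting point is the standard orthogonality relation: for any fixed $v\in\FF_2^n$, the map $w\mapsto\langle v,w\rangle$ is an $\FF_2$-linear functional on $K$, so
\[
\sum_{w\in K}(-1)^{\langle v,w\rangle}=\begin{cases}\lvert K\rvert & v\in K^\perp,\\ 0 & v\notin K^\perp.\end{cases}
\]
Multiplying by $x^{\lVert v\rVert}y^{n-\lVert v\rVert}$, summing over $v\in\FF_2^n$, and swapping the order of summation yields
\[
\lvert K\rvert\, W_{K^\perp}(x,y)=\sum_{w\in K}\sum_{v\in\FF_2^n}(-1)^{v\cdot w+p(v)p(w)}x^{\lVert v\rVert}y^{n-\lVert v\rVert}.
\]

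The next step is to split the outer sum according to the parity of $w$. For $w\in K^+$ we have $p(w)=0$, so the inner sum reduces to the classical MacWilliams character sum $\sum_{v}(-1)^{v\cdot w}x^{\lVert v\rVert}y^{n-\lVert v\rVert}$. This factors coordinate-wise as $\prod_{i=1}^n\bigl(y+(-1)^{w_i}x\bigr)=(y-x)^{\lVert w\rVert}(y+x)^{n-\lVert w\rVert}$, and summing over $w\in K^+$ gives precisely $W_K^+(y-x,y+x)$.

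For $w\in K^-$ the key observation is that $p(v)=v\cdot\buno$, so $v\cdot w+p(v)=v\cdot(w+\buno)$. Applying the same coordinate-wise factorization with $w$ replaced by $w+\buno$, and using that $\lVert w+\buno\rVert=n-\lVert w\rVert$, the inner sum becomes $(y-x)^{n-\lVert w\rVert}(y+x)^{\lVert w\rVert}$. Summing over $w\in K^-$ recognizes this as $W_K^-(y+x,y-x)$, where the roles of $x$ and $y$ have been interchanged exactly as in the proposition.

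There is no real obstacle here; the proof amounts to the single observation that, when $w$ is odd, the correction term $p(v)p(w)$ acts on the character via the substitution $w\mapsto w+\buno$, which complements the Hamming weight and hence swaps $y-x\leftrightarrow y+x$ in the resulting transform. Assembling the two contributions and dividing by $\lvert K\rvert$ produces the stated formula.
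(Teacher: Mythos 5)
Your proof is correct and follows essentially the same character-sum argument as the paper: orthogonality of the characters $w\mapsto(-1)^{\langle v,w\rangle}$, interchange of summation, splitting over the parity of $w$, and the coordinate-wise factorization. The only (cosmetic) difference is that for odd $w$ you absorb the correction $(-1)^{p(v)}$ by replacing $w$ with $w+\buno$, whereas the paper absorbs it by replacing $x$ with $-x$; both yield the same swap of $y-x$ and $y+x$ in the odd part.
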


\begin{proof}
Set \(\chi_v(w) = (-1)^{\langle v, w \rangle}\). Then \(\chi_v\) defines a character \(K \to \{-1, 1\}\) for each \(v \in \mathbb{F}_2^n\). If \(v \in K^\perp\), then \(\chi_v\) is the trivial character. Otherwise, we have \(\sum_{w \in K} \chi_v(w) = 0\). Therefore, the indicator function on \(K^\perp\) assumes the form
\begin{equation*}
    v \mapsto \frac{1}{\lvert K \rvert}\sum_{w \in K} \chi_v(w).
\end{equation*}
Also, it can be seen via induction on \(n\) that
\begin{equation*}
    \sum_{v \in \mathbb{F}_2^n} (-1)^{v \cdot w}x^{\lVert v \rVert}y^{n - \lVert v \rVert} = \prod_{i=1}^n[y + (-1)^{w_i}x]. 
\end{equation*}
Therefore,
\begin{align*}
    &W_{K^\perp}(x, y) = \sum_{w \in K^\perp}x^{\lVert w \rVert}y^{n-\lVert w \rVert} \\
    &= \sum_{v \in \mathbb{F}_2^n} \biggl(\frac{1}{\lvert K \rvert}\sum_{w \in K} \chi_v(w)\biggr)x^{\lVert v \rVert}y^{n-\lVert v \rVert} \\
    &= \frac{1}{\lvert K \rvert} \biggl[\sum_{w \in K^+}\sum_{v \in \mathbb{F}_2^n} \chi_v(w)x^{\lVert v \rVert}y^{n-\lVert v \rVert} + \sum_{w \in K^-}\sum_{v \in \mathbb{F}_2^n} \chi_v(w)x^{\lVert v \rVert}y^{n-\lVert v \rVert}\biggr] \\
    &= \frac{1}{\lvert K \rvert} \biggl[\sum_{w \in K^+}\sum_{v \in \mathbb{F}_2^n} (-1)^{v \cdot w}x^{\lVert v \rVert}y^{n-\lVert v \rVert} + \sum_{w \in K^-}\sum_{v \in \mathbb{F}_2^n} (-1)^{v \cdot w}(-x)^{\lVert v \rVert}y^{n-\lVert v \rVert}\biggr] \\
    &= \frac{1}{\lvert K \rvert} \biggl[\sum_{w \in K^+}\prod_{i=1}^n[y + (-1)^{w_i}x] + \sum_{w \in K^-}\prod_{i=1}^n[y + (-1)^{w_i}(-x)]\biggr].
\end{align*}
The following observation completes the proof:
\begin{align*}
    W_K^+ (y-x, y+x) &= \sum_{w \in K^+}(y-x)^{\lVert w \rVert}(y+x)^{n - \lVert w \rVert} = \sum_{w \in K^+} \prod_{i = 1}^n[y + (-1)^{w_i}x] \\
    W_K^- (y+x, y-x) &= \sum_{w \in K^+}(y+x)^{\lVert w \rVert}(y-x)^{n - \lVert w \rVert} = \sum_{w \in K^-} \prod_{i = 1}^n[y + (-1)^{w_i}(-x)].
\end{align*}
\end{proof}

\begin{remark}
    The formula in Proposition \ref{MW-type} agrees with the ordinary MacWilliams identity, Theorem \ref{MacWilliams}, if and only if \(W_K^-\) is symmetric. In particular, if \(K\) is even, then the two formulas agree, as expected. Similarly, the formulas agree if \(K^\perp\) is even, which can happen only for even \(n\) and is equivalent to \(W_K\) being symmetric. 
    Later, we will see examples of codes that satisfy the ordinary identity, but are not in either of these classes, i.e., codes \(K\) such that \(W_K^+\) is not symmetric and \(W_K^- \neq 0\) is symmetric.
\end{remark}

If \((A_0,\dots, A_n)\) is the weight distribution of \(K\) and \((A_0',\dots, A_n')\) the weight distribution of its dual, then the original MacWilliams identity may be used to express each \(A_k'\) in terms of \(A_0,\dots, A_n\). More precisely, we have 
\begin{equation*}
    A_k' = \frac{1}{\lvert K \rvert} \sum_{i=0}^n A_i P_k(i;n)
\end{equation*}
where \(P_k(x;m) = \sum_{i=0}^k (-1)^i \binom{x}{i} \binom{m-x}{k-i}\) is a \emph{Krawtchouk polynomial} (see \cite[\S 5]{TOECC}). If \((A_0^\perp,\dots, A_n^\perp)\) is the weight distribution of \(K^\perp\), then Proposition \ref{MW-type} gives the following corollary:

\begin{corollary}
    For each \(0\leq k \leq n\), we have
    \begin{align*}
        A_k^\perp &= A_k' = \frac{1}{\lvert K \rvert} \sum_{i=0}^n A_i P_k(i;n) \textnormal{ for even } k \\
        A_k^\perp &= A_{n-k}' = \frac{1}{\lvert K \rvert} \sum_{i=0}^n A_i P_{n-k}(i;n) \textnormal{ for odd } k.
    \end{align*}
\end{corollary}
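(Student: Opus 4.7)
The plan is to deduce this corollary by directly comparing the MacWilliams-type identity of Proposition~\ref{MW-type} with the classical MacWilliams identity (Theorem~\ref{MacWilliams}). Splitting the classical identity into even and odd parts gives
\[
W_{K^\vee}(x,y) \;=\; \tfrac{1}{\lvert K\rvert}\bigl[W_K^+(y-x,y+x) + W_K^-(y-x,y+x)\bigr],
\]
where $K^\vee$ denotes the classical dual (i.e. with respect to the dot product). The only difference from Proposition~\ref{MW-type} is that $W_K^-(y-x,y+x)$ is replaced by $W_K^-(y+x,y-x)$, and these two polynomials are related by the substitution $x\mapsto -x$.

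Introducing the coefficients
\[
W_K^+(y-x,y+x)=\sum_k a_k\, x^k y^{n-k}, \qquad W_K^-(y-x,y+x)=\sum_k c_k\, x^k y^{n-k},
\]
the classical identity gives $A_k' = (a_k+c_k)/\lvert K\rvert$, whereas Proposition~\ref{MW-type} yields $A_k^\perp = (a_k + (-1)^k c_k)/\lvert K\rvert$. For even $k$ the two expressions agree on the nose, so $A_k^\perp = A_k'$. For odd $k$, I would invoke the following symmetry observation: since every $v\in K^+$ has even weight, each summand $(y-x)^{\lVert v\rVert}(y+x)^{n-\lVert v\rVert}$ is invariant under $x\leftrightarrow y$, so $W_K^+(y-x,y+x)$ is symmetric in $(x,y)$, i.e.\ $a_k = a_{n-k}$. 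Likewise, every $v\in K^-$ has odd weight, so $W_K^-(y-x,y+x)$ picks up a sign under $x\leftrightarrow y$ and is therefore antisymmetric, yielding $c_k = -c_{n-k}$. Combining,
\[
A_k^\perp \;=\; \frac{a_k - c_k}{\lvert K\rvert} \;=\; \frac{a_{n-k} + c_{n-k}}{\lvert K\rvert} \;=\; A_{n-k}' \qquad (k \text{ odd}),
\]
after which the Krawtchouk-polynomial formula for $A_k'$ recalled just above the statement immediately yields the explicit expressions claimed in the corollary.

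I do not anticipate any real obstacle; the argument is essentially bookkeeping once one notices that the discrepancy between the new and classical MacWilliams identities lives entirely in the odd part, and that the (anti)symmetry of $W_K^\pm(y-x,y+x)$ under $x\leftrightarrow y$ exactly compensates for that discrepancy after the re-indexing $k\mapsto n-k$ in the odd case.
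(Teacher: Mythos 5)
Your argument is correct and is essentially the derivation the paper intends: the corollary is stated without proof as a direct consequence of Proposition~\ref{MW-type}, and your bookkeeping — comparing the two identities via $x\mapsto -x$ on the odd part and using the symmetry $a_k=a_{n-k}$ and antisymmetry $c_k=-c_{n-k}$ of $W_K^{\pm}(y-x,y+x)$ — is exactly the computation that justifies it.
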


\subsection{Consequences in the Odd Length Case}\label{oddConsequences} Let \(L\) be a maximal totally isotropic subspace of odd length \(n\). Since $L^\perp=L$, Proposition \ref{MW-type} implies that the weight enumerator of \(L\) satisfies
\begin{equation}\label{MW-typeOdd}
    W_L(x,y) = \frac{1}{\lvert L \rvert}\bigl[W_L^+(y-x, y+x) + W_L^-(y+x,y-x)\bigr].
\end{equation}
Recall that \(L = L^+ \oplus \langle \textbf{1} \rangle\) and so \(W_L\) is symmetric. Since \(n\) is odd, it follows that 
\begin{equation}\label{symmetry-typeOdd}
W_L^+(x,y) = W_L^-(y,x).
\end{equation}
Setting \(n-1 = 2m\) and using Equation \eqref{MW-typeOdd}, we obtain
\begin{equation}\label{MW-typeOdd2}
    W_L(x,y) = \frac{1}{2^m}W_L^+(y-x, y+x) = \frac{1}{2^m}W_L^+(x-y, y+x).
\end{equation}
Alternatively, Equation \eqref{MW-typeOdd2} can be obtained from the ordinary MacWilliams identity, applied to $L^+$, by noting that \(L^\perp=L\) coincides with the dual of \(L^+\).

Equation \eqref{MW-typeOdd2} tells us that the weight enumerator of \(L\) is completely determined by the weight enumerator of the corresponding maximally self-orthogonal code \(L^+\). Hence, the classical results stated in Theorem \ref{self-orthogonalI} may be used to describe the weight enumerator of \(L\). We will now provide an alternative proof of Theorem \ref{self-orthogonalI}, which is more elementary and will be useful in the next subsection to establish new results for even length.  We begin with the general case and then give the improved version for codes of Type II.

\subsubsection{General Case}\label{AltProofI} Recall from Section \ref{MWSection} that the group generated by 
\begin{equation}\label{standard_rep_D8}
    A = \frac{1}{\sqrt{2}}
    \begin{bmatrix}
        1 & 1 \\
        -1 & 1
    \end{bmatrix}
    \textnormal{ and }
    X = 
    \begin{bmatrix}
        -1 & 0 \\
        0 & 1
    \end{bmatrix}.
\end{equation}
is isomorphic to \(D_8\) and that the algebra of invariants of \(D_8\) is the polynomial algebra \(\mathbb{C}[s,t]\) where \(s = W_{i_2} = x^2 + y^2\) and \(t = x^2y^2(x^2-y^2)^2\). It is well known that \(D_8\) has four \(1\)-dimensional irreducible representations and three \(2\)-dimensional irreducible representations (see e.g. \cite[\S18.3]{JL1993}). 

\begin{lemma}\label{D8representationOdd}
    The subspace of \(\mathbb{C}[x,y]\) spanned by \(W_L^+\) and \(W_L^-\) is \(D_8\)-invariant and gives the standard 2-dimensional irreducible representation of \(D_8\), i.e., the representation given by the matrices \eqref{standard_rep_D8}.
\end{lemma}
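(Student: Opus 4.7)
The plan is a direct verification: I identify the ordered basis $\{W_L^-,W_L^+\}$ with the standard basis of $\mathbb{C}^2$ and check that $A$ and $X$ act by the matrices in \eqref{standard_rep_D8}. This simultaneously proves $D_8$-invariance of the span and identifies the representation. Note that $\{W_L^+,W_L^-\}$ is automatically linearly independent, since $W_L^+$ and $W_L^-$ have disjoint monomial supports (even versus odd degree in $x$) and both are nonzero — in particular, $W_L^-$ contains the monomial $x^n$ contributed by $\buno\in L$.

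The action of $X$ is immediate: $(X\cdot f)(x,y)=f(-x,y)$, and the parity remark just made gives $X\cdot W_L^+=W_L^+$ and $X\cdot W_L^-=-W_L^-$. Hence the matrix of $X$ in $\{W_L^-,W_L^+\}$ is $\matr{-1 & 0 \\ 0 & 1}$, exactly as in \eqref{standard_rep_D8}.

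For $A$, I use homogeneity together with the two identities \eqref{MW-typeOdd2} and \eqref{symmetry-typeOdd}. Writing $n=2m+1$, we have $(A\cdot f)(x,y)=2^{-n/2}f(x-y,x+y)$. For $f=W_L^+$, parity in the first variable turns $W_L^+(x-y,x+y)$ into $W_L^+(y-x,y+x)$, and then \eqref{MW-typeOdd2} gives $2^m W_L(x,y) = 2^m(W_L^++W_L^-)(x,y)$, yielding $A\cdot W_L^+=\frac{1}{\sqrt 2}(W_L^++W_L^-)$. For $f=W_L^-$, I first invoke \eqref{symmetry-typeOdd} to rewrite $W_L^-(x-y,x+y)=W_L^+(x+y,x-y)$. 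To express this in terms of $W_L^\pm(x,y)$, I substitute $y\mapsto -y$ in the identity of \eqref{MW-typeOdd2} and exploit the parity facts that $W_L^+$ is odd in its second variable (because $n-i$ is odd when $i$ is even) while $W_L^-$ is even in its second variable. After a sign bookkeeping, this produces $W_L^+(x+y,x-y)=2^m(-W_L^++W_L^-)(x,y)$, and hence $A\cdot W_L^-=\frac{1}{\sqrt 2}(-W_L^++W_L^-)$.

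Reading off the matrix of $A$ in the basis $\{W_L^-,W_L^+\}$ yields $\frac{1}{\sqrt 2}\matr{1 & 1 \\ -1 & 1}$, precisely the $A$ of \eqref{standard_rep_D8}. Combined with the computation for $X$, this finishes the verification. The only delicate step is the sign bookkeeping needed to pass from $W_L^+(x+y,x-y)$ to a combination of $W_L^\pm(x,y)$; once one carefully tracks the parities of $i$ and $n-i$ in the monomials of $W_L^\pm$, everything else is mechanical.
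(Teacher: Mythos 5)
Your proof is correct and follows essentially the same route as the paper: both verify the action of $X$ by parity and derive $A\cdot W_L^+=\frac{1}{\sqrt 2}(W_L^++W_L^-)$ from \eqref{MW-typeOdd2}, the only difference being that the paper obtains $A\cdot W_L^-$ indirectly via $A^2\cdot W_L^+=W_L^+(y,x)=W_L^-$ (using \eqref{symmetry-typeOdd}) while you compute it directly by the substitution $y\mapsto -y$ in \eqref{MW-typeOdd2}; your sign bookkeeping checks out. Your explicit remark on the linear independence of $W_L^+$ and $W_L^-$ is a small but welcome addition that the paper leaves implicit.
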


\begin{proof}
Set \(e_1 = W_L^+\) and \(e_2 = W_L^-\). Note that \(X \cdot e_1 = e_1\) and \(X \cdot e_2 = -e_2\). Now, from Equation \eqref{MW-typeOdd2}, we obtain
\begin{equation*}
    A \cdot e_1 = \frac{1}{\sqrt{2}}(e_1 + e_2).
\end{equation*}
Furthermore, \(A^2 \cdot e_1 = W^+_L(-y,x) = W^+_L(y,x) = e_2\) by Equation \eqref{symmetry-typeOdd}, hence 
\begin{align*}
    & e_2 = \frac{1}{\sqrt{2}}\bigl(A\cdot e_1 + A\cdot e_2\bigr) \\
    \implies &A\cdot e_2 = \frac{1}{\sqrt{2}} \bigl(-e_1 + e_2\bigr).
\end{align*}
\end{proof}

Recall that if a group \(G\) acts linearly on a vector space \(V\) and \(\chi\) is a \(1\)-dimensional character of \(G\), then \(u \in V\) is called a \emph{semi-invariant of weight \(\chi\)} if \(g \cdot u = \chi(g)u\) for all \(g \in G\). Observe from Lemma \ref{D8representationOdd} that \(yW_L^- - xW_L^+\) is invariant under \(A\) and transforms by \(-1\) under \(X\). Therefore, \(yW_L^- - xW_L^+\) is a semi-invariant whose weight \(\chi\) is the 1-dimensional character that maps \(A \mapsto 1\) and \(X \mapsto -1\). In particular, if \(a = W_{e_7} = y^7 + 7x^4y^3\) and \(b = W_{e_7}(y,x) = x^7 + 7x^3y^4\), then \(D \bydef yb - xa\) is a semi-invariant of weight \(\chi\). It turns out that it is the simplest such semi-invariant in the following sense: 

\begin{lemma}\label{semi-invariantD}
    If \(f\) is any semi-invariant of weight \(\chi\), then \(f = Dg\) for some \(D_8\)-invariant \(g \in \mathbb{C}[s,t]\).
\end{lemma}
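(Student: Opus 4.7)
\emph{Proof proposal.}  My strategy is to recognize $D_8$ as a real reflection group acting on $\CC^2$ and to show that $D$ equals, up to a nonzero scalar, the product of linear forms defining the eight reflection hyperplanes. A standard observation then forces every semi-invariant $f$ of weight $\chi$ to be divisible by $D$ in $\CC[x,y]$, and the quotient is automatically $D_8$-invariant, hence lies in $\CC[s,t]$.

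First, I would confirm that $D$ itself is a semi-invariant of weight $\chi$. This is a direct application of Lemma \ref{D8representationOdd} to the Hamming $[7,4]$-code $H$ of Example \ref{ex:Hamming}: since $W_H^+=a$ and $W_H^-=b$, the remark preceding the present lemma already shows that $yW_H^--xW_H^+=yb-xa=D$ transforms by $\chi$.

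The heart of the argument is the divisibility $D\mid f$ in $\CC[x,y]$ for every semi-invariant $f$ of weight $\chi$.  It rests on the following general fact: if $r\in\GL_2(\CC)$ is a reflection with reflecting hyperplane $H_r=\ker(r-I)$ and $r\cdot f=-f$, then for any $v\in H_r$ one has $(r\cdot f)(v)=f(v)$ and simultaneously $(r\cdot f)(v)=-f(v)$, so $f$ vanishes identically on $H_r$ and is therefore divisible by the linear form $\ell_r$ defining $H_r$.  Now $D_8$ contains the eight reflections $A^kX$ for $0\le k\le 7$, and $\chi(A^kX)=\chi(A)^k\chi(X)=-1$ for every $k$; hence $f$ is divisible by each of the eight corresponding linear forms $\ell_k$.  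These forms are pairwise non-proportional, so their product has degree $8$ and divides $f$ in $\CC[x,y]$.  To identify this product with $D$ up to a nonzero scalar, I would pass to the coordinates $u=x+iy$, $v=x-iy$ that simultaneously diagonalize the $A$-action:  a direct calculation gives $u^8-v^8=16iD$ and $u^8-v^8=\prod_{k=0}^{7}(u-\zeta_8^kv)$ with $\zeta_8=e^{i\pi/4}$, and each linear factor $u-\zeta_8^kv$, rewritten in terms of $x,y$, cuts out the real line $y=\tan(k\pi/8)\,x$, which is one of the eight reflection axes of $D_8$.

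To conclude, write $f=Dh$ with $h\in\CC[x,y]$.  Since both $f$ and $D$ transform by $\chi$, the quotient $h$ satisfies $g\cdot h=h$ for every $g\in D_8$, so it is $D_8$-invariant; being polynomial, $h\in\CC[x,y]^{D_8}=\CC[s,t]$, as required.  The main obstacle I anticipate is the explicit identification of $D$ with the product of reflection forms: the passage to the complex coordinates $u,v$ makes both the factorization and the determination of the zero locus of each factor short, but they must be carried out with some care to line up each linear factor with the correct reflection axis of the regular octagon.  The remaining steps are routine applications of standard invariant-theoretic facts about reflection groups.
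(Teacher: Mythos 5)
Your proof is correct, and it takes a genuinely different route from the paper's. The paper never leaves the invariant ring: it forms the invariant $h=fD$ (the product of two semi-invariants of the same weight $\chi$, since $\chi^2$ is trivial), observes that $h^2=f^2D^2$ with $f^2,D^2\in\CC[s,t]$ and $D^2=t(s^4-16t)$, and then uses that $t$ and $s^4-16t$ are non-associate irreducibles in the UFD $\CC[s,t]$ to conclude $D^2\mid h$, hence $f=Dg$. You instead exploit the fact that this copy of $D_8$ is a real reflection group all eight of whose reflections $A^kX$ satisfy $\chi(A^kX)=-1$, so a $\chi$-semi-invariant vanishes on every reflection axis and is therefore divisible by the product of the eight reflection forms; your identification of that product with $D$ up to a scalar is right, since in the coordinates $u=x+iy$, $v=x-iy$ one has $u^8-v^8=16iD$ and $D=\tfrac18\,\mathrm{Im}\bigl((x+iy)^8\bigr)=xy(x^2-y^2)(x^4-6x^2y^2+y^4)$, whose linear factors cut out exactly the lines $y=\tan(k\pi/8)x$. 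Your argument is the dihedral specialization of Stanley's theorem on relative invariants of reflection groups and is more conceptual: it explains a priori that the module of $\chi$-semi-invariants is free of rank one with a generator in degree $8$, rather than verifying this after the fact. What the paper's squaring trick buys is uniformity and self-containedness: the identical computation (multiply by a known semi-invariant of the inverse weight, square, and factor in $\CC[s,t]$) is reused wholesale in Lemma \ref{semi-invGI} for the order-$192$ group $G$, where your route would require invoking the complex (pseudo-)reflection-group version of the same theorem.
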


\begin{proof}
    Note that \(D^2\) is an invariant and thus lies in \(\mathbb{C}[s,t]\). One can check that \(D^2 = t(s^4 - 16t)\). Set \(h = fD\). Then \(h^2 = f^2t(s^4-16t) \in \mathbb{C}[s,t]\) with \(f^2 \in \mathbb{C}[s,t]\) so \(h^2\) is divisible in \(\mathbb{C}[s,t]\) by \(t(s^4-16t)\). Since \(t\) and \(s^4-16t\) are irreducible and are not associates in the unique factorization domain \(\mathbb{C}[s,t]\), we have that \(h\) is divisible in \(\mathbb{C}[s,t]\) by \(t(s^4-16t)\), say \(h = t(s^4-16t)g\). Then, \(f = Dg\).
\end{proof}

Recall from Proposition \ref{prop:odd_n} that every maximally self-orthogonal code $K$ of odd length is the even part of a maximal totally isotropic subspace, namely, $L=K\oplus\langle\buno\rangle$. Therefore, in order to prove part (A) of Theorem \ref{self-orthogonalI}, it suffices to show that the statement holds for \(W_L^+\). We now proceed to show this. 

\begin{theorem}\label{ClassicalThmI}
    Set \(a = W_{e_7}(x,y) = y^7 + 7x^4y^3,\, b = W_{e_7}(y,x) = x^7 + 7x^3y^4\), and recall that \(\mathbb{C}[x,y]^{D_8} = \mathbb{C}[s,t]\) where \(s = x^2+y^2\) and \(t = x^2y^2(x^2-y^2)\). If $L$ is a maximal totally isotropic subspace of $\FF_2^n$ where \(n\) is odd, then \(W_L^+(x,y)\) lies in \(y\mathbb{C}[s,t] \oplus a\mathbb{C}[s,t]\), the free \(\mathbb{C}[s,t]\)-module generated by \(y\) and \(a\), and \(W^-_L(x,y)=W^+_L(y,x)\in x\mathbb{C}[s,t] \oplus b\mathbb{C}[s,t]\).
\end{theorem}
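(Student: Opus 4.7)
The plan is to construct two canonical elements in the two-dimensional space $\lspan{W_L^+, W_L^-}$---a $D_8$-invariant and a semi-invariant of weight $\chi$---then extract a decomposition of the form $W_L^+ = yh + ag_1$ using the UFD structure of $\CC[x, y]$ together with the $D_8$-action. Using Lemma \ref{D8representationOdd} for the action on $\{W_L^+, W_L^-\}$ and the standard action $A \cdot y = (x+y)/\sqrt{2}$, $A \cdot x = (x-y)/\sqrt{2}$, $X \cdot y = y$, $X \cdot x = -x$, a direct check shows that $g_2 \bydef yW_L^+ + xW_L^-$ is fixed by both $A$ and $X$, hence $g_2 \in \CC[s, t]$, while $F \bydef yW_L^- - xW_L^+$ is $A$-invariant and transforms by $-1$ under $X$, i.e., is a semi-invariant of weight $\chi$. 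By Lemma \ref{semi-invariantD}, $F = (yb - xa)g_1$ for some $g_1 \in \CC[s, t]$.

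Rearranging this gives $y(W_L^- - bg_1) = x(W_L^+ - ag_1)$; since $\gcd(x, y) = 1$ in the UFD $\CC[x, y]$, there exists $h \in \CC[x, y]$ with $W_L^+ = ag_1 + yh$ and $W_L^- = bg_1 + xh$. To conclude $W_L^+ \in y\CC[s,t] + a\CC[s,t]$, it remains to show $h \in \CC[s,t]$. The $X$-invariance is immediate by parity: $W_L^+$, $a$, and $g_1$ are all even in $x$ (every monomial of $W_L^+$ has $x$-degree equal to the Hamming weight of a vector in $L^+$, hence even), so $yh$ is even in $x$ and therefore so is $h$. For the $A$-invariance, apply $A$ to $W_L^+ = yh + ag_1$ using $A \cdot W_L^+ = (W_L^+ + W_L^-)/\sqrt{2}$ from Lemma \ref{D8representationOdd} and the identity $A \cdot a = (a+b)/\sqrt{2}$, which is easily verified by direct expansion (and reflects the fact that $\lspan{a, b}$ carries the same $D_8$-representation as $\lspan{y, x}$). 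This yields $W_L^+ + W_L^- = (x+y)(A \cdot h) + (a+b)g_1$; comparing with $W_L^+ + W_L^- = (x+y)h + (a+b)g_1$, which follows by summing the two decompositions, gives $(x+y)(A\cdot h - h) = 0$ and thus $A \cdot h = h$.

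For directness of the sum $y\CC[s,t]\oplus a\CC[s,t]$, suppose $yh + ag_1 = 0$ with $h, g_1 \in \CC[s,t]$; applying $A$ and using $A$-invariance of $h, g_1$ yields $(x+y)h + (a+b)g_1 = 0$, and subtracting leaves $xh + bg_1 = 0$. Combined with $yh + ag_1 = 0$, this forces $axg_1 = byg_1$, which is impossible for $g_1 \ne 0$ since $ax \ne by$ in $\CC[x, y]$; hence $g_1 = h = 0$. Finally, the claim $W_L^-(x,y) \in x\CC[s,t] + b\CC[s,t]$ follows from Equation \eqref{symmetry-typeOdd} and the $x,y$-symmetry of every element of $\CC[s,t]$ (since the transposition $(x,y) \mapsto (y,x)$ lies in $D_8$): $W_L^-(x,y) = W_L^+(y,x) = xh(y,x) + b(x,y) g_1(y,x) = xh + bg_1$. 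The main obstacle is the $A$-invariance of $h$; once the matching action $A \cdot a = (a+b)/\sqrt{2}$ is verified, the rest is a routine UFD and symmetry argument.
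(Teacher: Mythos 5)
Your proof is correct, and it reaches the same decomposition as the paper but by a somewhat different extraction mechanism. The paper forms \emph{two} semi-invariants of weight $\chi$, namely $yW_L^- - xW_L^+$ and $aW_L^- - bW_L^+$ (the second exists for the same reason you cite: $D_8$ acts on $\lspan{a,b}$ as on $\lspan{y,x}$), writes both as $Df_1$ and $Df_2$ with $f_1,f_2\in\CC[s,t]$ by Lemma~\ref{semi-invariantD}, and then solves the resulting $2\times 2$ linear system, whose determinant is exactly $D$, to get $W_L^+=af_1-yf_2$ with coefficients that are \emph{automatically} invariants. You instead use only the first semi-invariant, rewrite $F=Dg_1$ as $y(W_L^--bg_1)=x(W_L^+-ag_1)$, and invoke coprimality of $x$ and $y$ in the UFD $\CC[x,y]$ to produce the cofactor $h$; the price is that you must then separately verify $h\in\CC[s,t]$, which you do correctly ($X$-invariance by parity of $x$-degrees, $A$-invariance by applying $A$ to the decomposition and cancelling $x+y$). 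Both routes rest on the same two pillars --- Lemma~\ref{D8representationOdd} and Lemma~\ref{semi-invariantD} together with $A\cdot a=(a+b)/\sqrt{2}$ --- so the difference is one of bookkeeping rather than substance: Cramer's rule buys you invariance of the coefficients for free, while your UFD argument avoids introducing the second semi-invariant at the cost of one extra (easy) invariance check. Your direct-sum argument and the deduction of $W_L^-\in x\CC[s,t]\oplus b\CC[s,t]$ via the symmetry of $\CC[s,t]$ under $x\leftrightarrow y$ (the swap lies in $D_8$) are also sound.
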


\begin{proof}
    Set \(e_1 = W_L^+\) and \(e_2 = W_L^-\). As we saw, \(-xe_1 + ye_2\) is a semi-invariant of weight \(\chi\), and so is \(-be_1 + ae_2\), because \(D_8\) acts on \(a\) and \(b\) in the same way as on \(y\) and \(x\),  respectively. Therefore, by Lemma \ref{semi-invariantD}, there exists \(f_1, f_2 \in \mathbb{C}[s,t]\) such that \(-xe_1 + ye_2 = Df_1\) and \(-be_1 + ae_2 = Df_2\). This gives a linear system in \(e_1\) and \(e_2\) with determinant \(D\). We then see that \(e_1 = af_1 - yf_2 \in y\mathbb{C}[s,t] \oplus a\mathbb{C}[s,t]\). To see that this sum is direct, suppose that \(yf + ag = 0\) for some \(f,g \in \mathbb{C}[s,t]\), then \((y^6 + 7x^4y^2)g = -f\) is an invariant which happens if and only if \(f = g = 0\).
\end{proof}

\subsubsection{Codes of Type II} We now impose the additional assumption that each \(v \in L^+\) has weight divisible by \(4\). Recall from Section \ref{MWSection} that the group \(G\) generated by
\begin{equation*}
    A = \frac{1}{\sqrt{2}}
    \begin{bmatrix}
        1 & 1 \\
        -1 & 1
    \end{bmatrix}
    \textnormal{ and }
    B = 
    \begin{bmatrix}
        i & 0 \\
        0 & 1
    \end{bmatrix}
\end{equation*}
has order \(192\) and its algebra of invariants is \(\mathbb{C}[s,t]\) where \(s = W_{e_8} = y^8 + 14x^4y^4 + x^8\) and \(t = x^4y^4(x^4-y^4)^4\). Just as in Section \ref{AltProofI}, we use the fact that the \(\mathbb{C}[s,t]\)-module of semi-invariants of a particular weight is cyclic to give an alternative proof of part (B) of Theorem \ref{self-orthogonalI}.

\begin{lemma}\label{semi-invGI}
    The non-trivial \(1\)-dimensional characters of \(G\) have associated semi-invariants described by the following table:
    \begin{center}
        \begin{tabular}{|c|c|c|c|}
            \hline
             Character & Value on \(A\) & Value on \(B\) & Semi-invariant \\ \hline
             \(\chi_1\) & \(-1\) & \(-i\) & \(p = 2xy(y^2-x^2)(y^2+x^2)\) \\
             \(\chi_2\) & \(1\) & \(-1\) & \(w = x^2y^2(x^4-y^4)^2\) \\ 
             \(\chi_3\) & \(-1\) & \(1\) & \(u = -\frac{1}{2}(x^{12} - 33x^8y^4 - 33x^4y^8 + y^{12})\) \\ 
             \(\chi_4\) & \(1\) & \(-i\) & \(D_1 = pu\) \\ 
             \(\chi_5\) & \(1\) & \(i\) & \(D_2 = puw\) \\ 
             \(\chi_6\) & \(-1\) & \(i\) & \(D_3 = pw\) \\ 
             \(\chi_7\) & \(-1\) & \(-1\) & \(D_4 = uw\) \\ 
             \hline
        \end{tabular}
    \end{center}
    Moreover, each of the listed semi-invariants generates its corresponding \(\mathbb{C}[s,t]\)-module of semi-invariants.
\end{lemma}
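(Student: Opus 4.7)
The plan is to verify the table by first checking that each listed polynomial transforms by the claimed character, and then showing each listed semi-invariant generates the corresponding $\CC[s,t]$-module of semi-invariants.

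For the first step, the three ``atomic'' polynomials $p$, $w$, and $u$ are the key. Using the formulas $A\cdot f(x,y)=f((x-y)/\sqrt{2},(x+y)/\sqrt{2})$ and $B\cdot f(x,y)=f(-ix,y)$, I would compute $A\cdot p,\,B\cdot p,\,A\cdot w,\,B\cdot w,\,A\cdot u,\,B\cdot u$ directly and check that these match $\chi_1,\chi_2,\chi_3$ respectively. The remaining four semi-invariants $D_1,\ldots,D_4$ are defined as products of $p,u,w$, so their weights are the corresponding products of characters; for instance $D_1=pu$ transforms by $\chi_1\chi_3$, and the elementary check $\chi_1(A)\chi_3(A)=1$, $\chi_1(B)\chi_3(B)=-i$ identifies this with $\chi_4$. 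The other three product cases are analogous. A side consequence is that $\chi_1,\ldots,\chi_7$ together with the trivial character form a subgroup of $\mathrm{Hom}(G,\CC^\times)$, which will turn out (once step two is complete) to be all of the character group of $G^{\ab}$.

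For the second step, I would mimic the argument of Lemma~\ref{semi-invariantD}, using three basic identities: the direct $w^2=t$, the identity $p^2=4w$ (so in particular $p^4=16t$), and an explicit expression $u^2=\alpha s^3+\beta t$ (the unique decomposition of $u^2$ in the $2$-dimensional space of degree-$24$ invariants). Once the nonzero $t$-coefficient $\beta\ne 0$ is verified, $u^2$ is irreducible in $\CC[s,t]$ (being linear in $t$ over $\CC[s]$), so for the characters of order $2$, namely $\chi_2,\chi_3,\chi_7$, the squaring-and-factoring argument of Lemma~\ref{semi-invariantD} applies verbatim (with $D_\chi^2$ playing the role of $D^2$, and the coprime factorization $(uw)^2=u^2\cdot t$ used for $\chi_7$). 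For the characters of order $4$, the crucial observation is that $p^2=4w$ is a ``square root'' of $w$: given a $\chi_1$-semi-invariant $f$, the product $fp$ is a $\chi_2$-semi-invariant and hence, by the $\chi_2$ case, equals $wh=(p^2/4)h$ for some $h\in\CC[s,t]$, whence $f=(p/4)h\in p\CC[s,t]$. The characters $\chi_4,\chi_5,\chi_6$ reduce to $\chi_7,\chi_3,\chi_2$ in the same way, by multiplying a candidate semi-invariant by $p$.

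The main obstacle I anticipate is the numerical verification that $u^2$ has a nonzero $t$-coefficient, since it is the one place where specific arithmetic intervenes rather than structural reasoning. A more conceptual alternative is to appeal to the Chevalley--Shephard--Todd theorem: since $\CC[x,y]^G=\CC[s,t]$ is polynomial, $G$ acts on $\CC^2$ as a complex reflection group, and each $1$-dimensional isotypic component of $\CC[x,y]$ is a free $\CC[s,t]$-module of rank $1$. The generator is then any nonzero semi-invariant of minimal degree, which can be confirmed by comparing the tabulated degrees $\deg p=4,\,\deg w=\deg u=12,\,\deg D_1=\deg D_3=16,\,\deg D_4=24,\,\deg D_2=28$ against the minimal degrees extracted from Molien's formula $\tfrac{1}{|G|}\sum_{g\in G}\overline{\chi_i(g)}/\det(I-tg)=t^{d_{\chi_i}}/((1-t^8)(1-t^{24}))$.
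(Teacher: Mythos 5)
Your overall strategy --- verify the weights of $p,u,w$ directly, get the weights of the products for free, and then use divisibility in the UFD $\CC[s,t]$ via the relations $w^2=t$, $p^2=4w$, $u^2=\tfrac14(s^3-108t)$ --- is the same as the paper's, and your arguments for $\chi_1,\chi_2,\chi_3,\chi_4,\chi_7$ are correct (the $\chi_4$ reduction works precisely because the $\chi_7$-generator $uw$ contains the factor $w=p^2/4$, so dividing $fp=uwg$ by $p$ leaves $f=upg/4\in pu\,\CC[s,t]$). Two small points: for the $\chi_7$ coprimality of $u^2$ and $t$ you also need the $s^3$-coefficient of $u^2$ to be nonzero, not only the $t$-coefficient; and your Molien/Chevalley--Shephard--Todd fallback is a legitimate alternative route (freeness of rank $1$ of each module of relative invariants plus a degree count), though it trades the elementary divisibility arguments for a computation over all $192$ group elements.

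The genuine gap is in $\chi_5$ and $\chi_6$, which are exactly the cases where the paper has to work hardest. First, the character arithmetic for $\chi_6$ is wrong: $\chi_6\chi_1$ has value $(-1)(-1)=1$ on $A$ and $(i)(-i)=1$ on $B$, so multiplying a $\chi_6$-semi-invariant by $p$ produces an \emph{invariant}, not a $\chi_2$-semi-invariant. Second, and more seriously, even where the character arithmetic is right the reduction fails: for $f$ of weight $\chi_5$ you get $fp=ug$ with $g\in\CC[s,t]$, and to conclude $f\in puw\,\CC[s,t]$ you would need $g\in p^2w\,\CC[s,t]=4t\,\CC[s,t]$, i.e.\ $t\mid g$, which is not automatic (the generator $u$ of the $\chi_3$-module is coprime to $p$, so no factor of $p$ survives the division; all you can extract is $g=pg'$ with $g'$ a $\chi_6$-semi-invariant, which makes the argument circular with the $\chi_6$ case). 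The paper closes this loop with an extra idea for $\chi_5$: it uses \emph{two} relations, $wf=D_1g_1$ (weight $\chi_4$) and $f^2=wg_2$ (weight $\chi_2$), combines them into the identity $(fD_1)g_1=w^2g_2=tg_2$ inside $\CC[s,t]$, and splits into cases according to whether the irreducible $t$ divides $fD_1$ or $g_1$; only then does $\chi_6$ follow, by multiplying by $u$ (whose weight satisfies $\chi_6\chi_3=\chi_5$, and the $\chi_5$-generator $puw$ contains $u$ explicitly, so the cancellation is clean). You need to supply an argument of this kind, or fall back on the Molien-series route, to complete these two cases.
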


\begin{proof}
    We omit the computations showing that \(p,\, u,\, w\) and, hence, \(D_1,\, D_2,\, D_3,\,\) and \(D_4\), are semi-invariants of appropriate weight. Let \(f\) be any semi-invariant of weight \(\chi_1\) and set \(h = f(pw) \in \mathbb{C}[s,t]\). Then, \(h^2 = (f^2w)(p^2w)\) with \(f^2w,p^2w \in \mathbb{C}[s,t]\) so \(p^2w = 4t\) divides \(h^2\) and, hence, divides \(h\) in \(\mathbb{C}[s,t]\) since \(\mathbb{C}[s,t]\) is a unique factorization domain. Say \(h = (p^2w)g\). Then, \(f = pg\) as desired. The same argument works when \(f\) is a semi-invariant of weight \(\chi_2,\, \chi_3\), or \(\chi_4\) by taking \(h = fw\) for \(f\) of weight \(\chi_2,\, h = fu\) for \(f\) of weight \(\chi_3\), and \(h = f(D_1w)\) for \(f\) of weight \(\chi_4\). Note that \(w^2 = t,\, u^2 = \frac{1}{4}(s^3 - 108t)\), and \(D_1^2w = t(s^3 - 108t)\) so, in each case, they divide \(h\) in \(\mathbb{C}[s,t]\).

    Suppose now that \(f\) is a semi-invariant of weight \(\chi_5\). Then, \(wf\) is a semi-invariant of weight \(\chi_4\) and \(f^2\) is a semi-invariant of weight \(\chi_2\) so there exists \(g_1,g_2 \in \mathbb{C}[s,t]\) such that \(wf = D_1g_1\) and \(f^2 = wg_2\). Now, \(wf^2 = (fD_1)g_1 = w^2g_2\) so either \(w^2 = t\) divides \(fD_1\) or \(g_1\) in \(\mathbb{C}[s,t]\). If \(g_1 = w^2h_1\), then \(wf = D_1(w^2h_1)\) so \(f = D_2h_1\) as desired. If \(fD_1 = w^2h_2\), then 
    \begin{align*}
        &D_2(wf) = D_2(D_1g_1) \\
        \implies & w^2(D_1f) = D_1^2wg_1 \\
        \implies &t^2h_2 = t(s^3-108t)g_1
    \end{align*}
    and thus \(t\) divides \(g_1\) in \(\mathbb{C}[s,t]\) and we have that \(f\in D_2\mathbb{C}[s,t]\) as before. It is then easy to see that if \(f\) is any semi-invariant of weight \(\chi_6\), then \(fu\) is a semi-invariant of weight \(\chi_5\) and thus lies in \(D_2\mathbb{C}[s,t] = puw\mathbb{C}[s,t]\) giving \(f \in pw\mathbb{C}[s,t] = D_3\mathbb{C}[s,t]\).

    Finally, suppose that \(f\) is a semi-invariant of weight \(\chi_7\). Then \(fu\) and \(fw\) are semi-invariants of weight \(\chi_2\) and \(\chi_3\), respectively, so there exists \(g_1,g_2 \in \mathbb{C}[s,t]\) such that \(fu = wg_1\) and \(fw = ug_2\). Then, 
    \begin{align*}
        &fu^2g_2 = fw^2g_1 \\
        \implies & (s^3 - 108t)g_2 = t(4g_1)
    \end{align*}
    so \(t\) divides \(g_2\) in \(\mathbb{C}[s,t]\), say \(g_2 = tg_3\). Hence, \(fw = utg_3\) and, thus, \(f = (uw)g_3\).
\end{proof}

We now give an alternative proof of part (B) of Theorem \ref{self-orthogonalI}.

\begin{theorem}\label{ClassicalThmII} 
    Let \(u_1 = y^{17} + 17x^4y^{13} + 187x^8y^9 + 51x^{12}y^5,\, u_2 = y^{23} + 506x^8y^{15} + 1288x^{12}y^{11} + 253x^{16}y^7,\) and \(a = W_{e_7} = y^7 + 7x^4y^3\). If \(L\) is a maximal totally isotropic subspace of \(\mathbb{F}_2^n\), where \(n\) is odd, and each vector in \(L^+\) has weight divisible by \(4\), then
    \begin{itemize}
        \item[(i)] Either \(n \equiv 1 \ (\mathrm{mod}\ 8)\) or \(n \equiv -1 \ (\mathrm{mod}\ 8)\).
        \item[(ii)] If \(n \equiv 1 \ (\mathrm{mod}\ 8)\), then \(W_L^+ \in y\mathbb{C}[s,t]\oplus u_1\mathbb{C}[s,t]\).
        \item[(iii)] If \(n \equiv -1 \ (\mathrm{mod}\ 8)\), then \(W_L^+ \in a\mathbb{C}[s,t]\oplus u_2\mathbb{C}[s,t]\).
    \end{itemize}
\end{theorem}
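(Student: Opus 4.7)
The plan is to mirror the proof of Theorem~\ref{ClassicalThmI}, replacing $D_8$ by the order-$192$ group $G$ and Lemma~\ref{semi-invariantD} by Lemma~\ref{semi-invGI}. The starting point is to record how $W_L^\pm$ transform under $G$: since every $v\in L^+$ has weight divisible by $4$, the polynomial $W_L^+$ has all $x$-exponents $\equiv 0\pmod 4$ and is therefore $B$-invariant; since $W_L^-(x,y)=W_L^+(y,x)$, the polynomial $W_L^-$ has all $x$-exponents $\equiv n\pmod 4$ and is a $B$-semi-invariant of weight $\varepsilon=(-i)^n$, which equals $-i$ if $n\equiv 1\pmod 4$ and $i$ if $n\equiv 3\pmod 4$. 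Combined with Lemma~\ref{D8representationOdd}, this shows that $\langle W_L^+,W_L^-\rangle$ is a $G$-invariant subspace of $\mathbb{C}[x,y]$.

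A short direct calculation in the spirit of the proof of Theorem~\ref{ClassicalThmI} then shows that for $n\equiv 1\pmod 4$ the element $-xW_L^++yW_L^-$ is a $G$-semi-invariant of weight $\chi_4$, while for $n\equiv 3\pmod 4$ the element $yW_L^++xW_L^-$ is $G$-invariant (weight $\chi_0$). Part~(i) follows immediately by degree analysis from Lemma~\ref{semi-invGI}: in the first case $-xW_L^++yW_L^-\in D_1\mathbb{C}[s,t]$, whose elements have degrees of the form $18+8a+24b$, forcing $n+1\equiv 2\pmod 8$; in the second case $yW_L^++xW_L^-\in\mathbb{C}[s,t]$, whose elements have degrees that are nonnegative multiples of $8$, forcing $n+1\equiv 0\pmod 8$. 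Together these yield $n\equiv\pm 1\pmod 8$.

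For parts~(ii) and~(iii) the strategy is to produce, in each case, a second semi-invariant relation from a ``reference pair'' of polynomials and then solve the resulting $2\times 2$ linear system for $W_L^+$. In case $n\equiv 1\pmod 8$, I set $v_1:=u_1(y,x)$ and verify that $(u_1,v_1)$ transforms under $G$ in the same manner as $(W_L^+,W_L^-)$: the $B$- and $X$-eigenvalues are immediate from the monomials of $u_1$, while the $A$-equivariance reduces, by homogeneity in degree $17$, to the polynomial identity $u_1(x-y,x+y)=2^8(u_1(x,y)+u_1(y,x))$, which is a routine verification (in fact the usual MacWilliams identity, since $u_1$ is the weight enumerator of a concrete type-II self-orthogonal $[17,8]$ code). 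Consequently, both $-xW_L^++yW_L^-$ and $-v_1W_L^++u_1W_L^-$ are $\chi_4$-semi-invariants, hence equal $D_1f_1$ and $D_1f_2$ respectively for some $f_1,f_2\in\mathbb{C}[s,t]$ by Lemma~\ref{semi-invGI}; a direct expansion (using $(y^4+x^4)(x^8+y^8-34x^4y^4)=x^{12}-33x^8y^4-33x^4y^8+y^{12}$) also yields the identity $-xu_1+yv_1=D_1$. Multiplying the first relation by $u_1$ and the second by $y$ and subtracting gives $D_1W_L^+=D_1(u_1f_1-yf_2)$, so $W_L^+=u_1f_1-yf_2\in u_1\mathbb{C}[s,t]+y\mathbb{C}[s,t]$. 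Part~(iii) is handled in parallel using the reference pairs $(a,b)$ for length $7$ and $(u_2,v_2:=u_2(y,x))$ for length $23$: the combinations $W_L^+b-W_L^-a$ and $W_L^+v_2-W_L^-u_2$ are $\chi_5$-semi-invariants, and a one-coefficient comparison (for instance of $x^3y^{27}$) produces the key identity $u_2b-av_2=-7D_2$, after which the same linear-algebra step gives $W_L^+\in a\mathbb{C}[s,t]+u_2\mathbb{C}[s,t]$.

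It remains to establish directness of each of the two sums, which follows from a short $G$-theoretic argument: if, for instance, $u_1f+yg=0$ with $f,g\in\mathbb{C}[s,t]$, then applying $A$ and subtracting the original relation produces $v_1f+xg=0$; multiplying by $x$ and $y$ respectively and subtracting forces $(xu_1-yv_1)f=-D_1f=0$, so $f=0$, and hence $g=0$. The analogous argument handles~(iii) via $u_2b-av_2=-7D_2$. The main technical obstacle is verifying the two polynomial identities $-xu_1+yv_1=D_1$ and $u_2b-av_2=-7D_2$, together with the MacWilliams-type identity $u_1(x-y,x+y)=2^8(u_1(x,y)+u_1(y,x))$ and its analog for $u_2$; each is straightforward in principle but requires careful bookkeeping.
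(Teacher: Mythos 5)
Your proposal is essentially the paper's own proof: the same group $G$ of order $192$, the same use of Lemma~\ref{semi-invGI} to identify the cyclic modules of $\chi_4$- and $\chi_5$-semi-invariants, the same reference pairs $(y,x)$, $\bigl(u_1,u_1(y,x)\bigr)$, $(a,b)$, $\bigl(u_2,u_2(y,x)\bigr)$, the same key identities $y\,u_1(y,x)-x\,u_1=D_1$ and $a\,u_2(y,x)-b\,u_2=7D_2$, and the same $2\times 2$ elimination producing $W_L^+$; your freeness arguments differ only cosmetically from the paper's.

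The one step that does not hold as written is your derivation of part (i) in the case $n\equiv 1\pmod 4$: you read the congruence off the degree of the semi-invariant $-xW_L^++yW_L^-\in D_1\mathbb{C}[s,t]$, but this polynomial can vanish identically, in which case it carries no degree information. This is not hypothetical: for $n=9$ take $L^+=e_8\times\{0\}$, a maximally self-orthogonal doubly even code of length $9$, so that $W_L^+=y^9+14x^4y^5+x^8y$ and $xW_L^+(x,y)=yW_L^+(y,x)$, making the semi-invariant zero. (No such problem occurs for $n\equiv 3\pmod 4$, since $yW_L^++xW_L^-$ contains the monomial $y^{n+1}$ and is therefore nonzero.) The repair is exactly the ordering the paper uses: first establish $W_L^+=u_1f_1-yf_2$ with $f_1,f_2\in\mathbb{C}[s,t]$ homogeneous, and then apply the degree argument to $W_L^+$ itself, which is never zero; since $\mathbb{C}[s,t]$ is concentrated in degrees divisible by $8$ while $\deg y\equiv\deg u_1\equiv 1\pmod 8$, a nonzero $W_L^+$ of degree $n\not\equiv 1\pmod 8$ would force $f_1=f_2=0$, a contradiction. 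With that reordering (and the analogous remark for case (iii)) your proof is complete.
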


\begin{proof}
    Set \(e_1 = W_L^+\) and \(e_2 = W_L^-\). Recall that \(L = L^+ \oplus \langle \textbf{1} \rangle\) and thus, if \(\xi = v + \textbf{1} \in L\) is odd, then \(\Vert \xi \rVert = \lVert v \rVert + \lVert \textbf{1} \rVert - 2\lVert v\xi \rVert\) where the product \(v\xi\) is taken component-wise. Therefore, \(\lVert \xi \rVert \equiv n \ (\textrm{mod}\ 4)\). 
    
    Suppose first that \(n\equiv 1 \ (\textrm{mod}\ 4)\). Then \(B \cdot e_1 = e_1\) and \(B \cdot e_2 = -ie_2\). Recalling that 
    \begin{align*}
        A \cdot e_1 &= \frac{1}{\sqrt{2}}(e_1 + e_2) \\
        A \cdot e_2 &= \frac{1}{\sqrt{2}}(e_2 - e_1),
    \end{align*}
    we have that \(ye_2 - xe_1\) is a semi-invariant of weight \(\chi: A\mapsto 1,\, B\mapsto -i\). Hence, by Lemma \ref{semi-invGI}, we have that \(ye_2 - xe_1 = D_1f_1\) for some \(f_1 \in \mathbb{C}[s,t]\). Moreover, if \(w_1(x,y) = u_1(y,x)\), then it can be checked that \(G\) acts on \(u_1\) and \(w_1\) as on \(e_1\) and \(e_2\) respectively. It follows that \(u_1e_2 - w_1e_1 = D_1f_2\) for some \(f_2 \in \mathbb{C}[s,t]\). Since \(D_1 = yw_1 - xu_1\), we obtain, as in the proof of Theorem \ref{ClassicalThmI}, that \(e_1 = u_1f_1 - yf_2 \in y\mathbb{C}[s,t] \oplus u_1\mathbb{C}[s,t]\). Since the degrees of \(y\) and \(u_1\) are congruent to \(1\) modulo \(8\), it follows that \(n \equiv 1 \ (\textrm{mod}\ 8)\). Showing that the sum is direct can be done in the same way as in the proof of Theorem \ref{ClassicalThmI}.

    Now, suppose that \(n \equiv -1 \ (\textrm{mod}\ 4)\). Then, \(B \cdot e_1 = e_1\) and \(B \cdot e_2 = ie_2\). If \(b(x,y) = a(y,x)\) and \(w_2(x,y) = u_2(y,x)\), then it can be checked that \(G\) acts on \(a\) and \(b\) as on \(e_1\) and \(e_2\) respectively. Similarly, \(G\) acts on \(u_2\) and \(w_2\) as on \(e_1\) and \(e_2\) respectively. Hence, \(ae_2 - be_1 = D_2f_1\) and \(u_2e_2 - w_2e_1 = D_2f_2\) are both semi-invariants of weight \(\chi^{-1}\). Since \(7D_2 = aw_2 - bu_2\), we obtain \(7e_1 = u_2f_1 - af_2 \in a\mathbb{C}[s,t] \oplus u_2 \mathbb{C}[s,t]\) just as before. Since the degrees of \(a\) and \(u_2\) are congruent to \(-1\) modulo \(8\), it follows that \(n \equiv -1 \ (\textrm{mod}\ 8)\). To see that the sum is direct, suppose that there exists \(f,g \in \mathbb{C}[s,t]\) such that \(af = u_2g\). Since \(f(x,y) = f(y,x)\) and \(g(x,y) = g(y,x)\), we also have that \(bf = w_2g\). Now, 
    \begin{align*}
        7D_2f &= (aw_2 - bu_2)f \\
        &= w_2u_2g + bu_2f \\
        &= 2w_2u_2g
    \end{align*}
    is a semi-invariant of weight \(\chi^{-1}\) yet, \(A \cdot 2w_2u_2 = w_2^2 - u_2^2 \neq 2w_2u_2\).
\end{proof}

\subsection{Consequences in the Even Length Case} We now turn to the case when \(L\) is a maximal totally isotropic subspace of even length \(n\). Once again, our goal is to use Proposition \ref{MW-type} to study the weight enumerator of \(L\), starting with the general case and then specializing to Lagrangians that are of Type II. We are interested in the case when \(L\) is an odd Lagrangian because, otherwise, \(L\) is a self-dual code and we know from Section \ref{MWSection} the structure of its weight enumerator. First, we provide some examples of odd Lagrangians and their weight enumerators.

\begin{example}\label{TrivialOddLag}
    The simplest example of an odd Lagrangian is \(i_2' = \{00,\, 10\}\). Its weight enumerator is simply \(W_{i_2'} = y^2 + xy\).
\end{example}

\begin{example}\label{OddLagLength6}
    Two slightly more interesting examples of odd Lagrangians are \(L_1\) and \(L_2\) of dimension \(3\) given by the row spaces of
    \begin{equation*}
        \begin{bmatrix}
        1 & 0 & 0 & 0 & 0 & 0 \\
        0 & 1 & 0 & 1 & 0 & 0 \\
        0 & 0 & 1 & 0 & 1 & 0 
        \end{bmatrix}
        \textnormal{ and }
        \begin{bmatrix}
        1 & 0 & 0 & 0 & 1 & 1 \\
        0 & 1 & 0 & 1 & 0 & 1 \\
        0 & 0 & 1 & 1 & 1 & 0  
        \end{bmatrix},
    \end{equation*}
    repectively. They are representatives of the only two \(S_6\)-inequivalent classes of odd Lagrangians of dimension \(3\) (see Table \ref{Table:OddLag}). Their weight enumerators are
    \begin{align*}
        W_{L_1} &= y^6 + xy^5 + 2x^2y^4 + 2x^3y^3 + x^4y^2 + x^5y \\
        W_{L_2} &= y^6 + 4x^3y^3 + 3x^4y^2.
    \end{align*}
\end{example}

\subsubsection{Consequences for General Lagrangians}As in Section \ref{oddConsequences}, let \(A\) and \(X\) be the generators of \(D_8\) and \(s = x^2+y^2\) and \(t = x^2y^2(x^2-y^2)\) be the polynomials generating the algebra of \(D_8\)-invariants. By Proposition \ref{MW-type}, we have
\begin{equation}\label{MW-typeEvenI}
    W_L(x,y) = W_L^+\biggl(\frac{y-x}{\sqrt{2}}, \frac{y+x}{\sqrt{2}}\biggr) + W_L^-\biggl(\frac{y+x}{\sqrt{2}},\frac{y-x}{\sqrt{2}}\biggr)
\end{equation}
which may alternatively be written as 
\begin{equation}\label{MW-typeEven}
    W_L = A \cdot W_L^+ + A^{-1} \cdot W_L^-. 
\end{equation}
We then have the following result analogous to Lemma \ref{D8representationOdd}.

\begin{lemma}\label{D8representationEven}
    Set \(e_1 = W_L^+(x,y),\, e_2 = W_L^-(x,y),\, e_3 = W_L^+(y,x)\), and \(e_4 = W_L^-(y,x)\). Then, the subspace \(V\) of \(\mathbb{C}[x,y]\) spanned by \(e_1,\, e_2,\, e_3,\) and \(e_4\) is \(D_8\)-invariant. If \(v_1 = e_2+e_4\) and \(v_2 = e_1-e_3\), then the decomposition of \(V\) into irreducible \(D_8\)-submodules is given by \(V = \langle v_1, v_2 \rangle \oplus \langle e_2 - e_4 \rangle \oplus \langle e_1 + e_3 \rangle\).
\end{lemma}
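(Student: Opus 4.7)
The plan is to reduce the lemma to explicitly computing the action of the generators $A$ and $X$ of $D_8$ on each $e_i$, and then checking that the three claimed submodules $\langle v_1, v_2\rangle$, $\langle e_2 - e_4\rangle$, and $\langle e_1 + e_3\rangle$ carry pairwise non-isomorphic irreducible representations whose sum exhausts $V$. The action of $X$ is immediate: since $n$ is even, every vector in $L^+$ has both even weight and even coweight, so $e_1 = W_L^+(x,y)$ and $e_3 = W_L^+(y,x)$ contain only monomials of even degree in $x$, while $e_2$ and $e_4$ are purely odd in $x$. Hence $X$ acts as $+1$ on $e_1, e_3$ and as $-1$ on $e_2, e_4$.

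To obtain the action of $A$, I would derive two independent linear relations among $A\cdot e_1$, $A\cdot e_4$ and $e_1, e_2, e_3, e_4$. The first comes from Proposition~\ref{MW-type} applied to $L^\perp = L$, which can be rewritten as $W_L = A \cdot W_L^+ + A^{-1} \cdot W_L^-$. A short direct computation using $A^{-2}(x,y) = (y,-x)$ together with the oddness of $W_L^-$ in each argument identifies $A^{-1} \cdot e_2 = -A \cdot e_4$, turning this into $A \cdot e_1 - A \cdot e_4 = e_1 + e_2$. The second relation is produced by applying Proposition~\ref{MW-type} after the substitution $(x,y)\mapsto(y,x)$, so that the left-hand side becomes $W_L(y,x) = e_3 + e_4$; carefully tracking how the parities of $W_L^{\pm}$ in each argument interact with the substitutions $\tfrac{y\pm x}{\sqrt{2}}$ yields $A \cdot e_1 + A \cdot e_4 = e_3 + e_4$. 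Solving this $2\times 2$ linear system gives explicit formulas for $A \cdot e_1$ and $A \cdot e_4$ as linear combinations of $e_1, e_2, e_3, e_4$. Applying $A^{-1}$ to the two relations, and using the parallel identities $A^{-1} \cdot e_3 = A \cdot e_1$ and $A^{-1} \cdot e_4 = -A \cdot e_2$ (proved in the same manner), then produces formulas for $A \cdot e_2$ and $A \cdot e_3$ as well, establishing $D_8$-invariance of $V$. The main obstacle in this step is obtaining the second linear relation cleanly; once the parity bookkeeping for $W_L^{\pm}$ is set up, the rest is routine linear algebra.

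With explicit formulas for $A \cdot e_i$ in hand, direct substitution gives $A \cdot v_1 = -v_2$, $A \cdot v_2 = v_1$, $A \cdot (e_2 - e_4) = e_2 - e_4$, and $A \cdot (e_1 + e_3) = e_1 + e_3$, which combined with the $X$-action above confirms that each of the three claimed submodules is $D_8$-invariant. The two-dimensional summand $\langle v_1, v_2\rangle$ is irreducible because the $X$-eigenlines $\langle v_1\rangle$ and $\langle v_2\rangle$ are interchanged (not fixed) by $A$; and the three submodules realize pairwise distinct characters on $A$ and $X$, which forces the sum to be direct. Finally, since each $e_i$ can be written as half the sum or difference of appropriate pairs among $\{v_1, v_2, e_1+e_3, e_2-e_4\}$, these four vectors span $V$, completing the decomposition.
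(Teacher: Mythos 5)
Your proposal is correct and follows essentially the same route as the paper: both use the MacWilliams-type identity together with the parity of $W_L^{\pm}$ in each variable to pin down the $A$-action on the $e_i$ (your two swapped relations and ``parallel identities'' are just a repackaging of the paper's use of $A^2\cdot e_1=e_3$, $A^2\cdot e_2=-e_4$, etc.), and then verify the decomposition by direct computation. The resulting formulas for $A\cdot e_i$ and the identification of the three summands agree with the paper's.
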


\begin{proof}
    Note that \(e_1\) and \(e_3\) are invariant under \(X\), while \(e_2\) and \(e_4\) transform by \(-1\). Swapping the variables in Equation \eqref{MW-typeEvenI} gives
    \begin{equation*}
        e_3 + e_4 = A \cdot e_1 - A^{-1} \cdot e_2 
    \end{equation*}
    which may be added to Equation \eqref{MW-typeEven} to obtain
    \begin{equation}\label{i}
        A \cdot e_1 = \frac{1}{2}[e_1 + e_2 + e_3 + e_4]. \tag{i}
    \end{equation}
    Now, \(A^2 \cdot e_1 = W_L^+(-y,x) = e_3\). Similarly, \(A^2\cdot e_2 = -e_4,\, A^2\cdot e_3 = e_1\), and \(A^2 \cdot e_4 = -e_2\). Therefore, from Equations \eqref{MW-typeEven} and \eqref{i}, we have
    \begin{align}\label{ii}
        A \cdot e_2 &= A^2 \cdot [e_1 + e_2 - A \cdot e_1]\nonumber \\
        &= \frac{1}{2} [-e_1 + e_2 + e_3 - e_4]. \tag{ii}
    \end{align}
    Finally, from Equation \eqref{i},
    \begin{align}\label{iii}
        A \cdot e_3 &= \frac{1}{2}A^2 \cdot [e_1 + e_2 + e_3 + e_4]\nonumber \\
        &= \frac{1}{2}[e_1 - e_2 + e_3 - e_4] \tag{iii}
    \end{align}
    and, from Equation \eqref{ii},
    \begin{align}\label{iv}
        A \cdot e_4 &= -\frac{1}{2}A^2 \cdot [-e_1 + e_2 + e_3 - e_4]\nonumber \\
        &= \frac{1}{2}[-e_1 - e_2 + e_3 + e_4]. \tag{iv}
    \end{align}
    Equations \eqref{i} through \eqref{iv} show that \(V\) is \(D_8\)-invariant. We now analyze the decomposition of \(V\).

    Observe that \(e_1 + e_3\) is invariant under \(X\) and, by adding Equations \eqref{i} and \eqref{iii} together, \(A \cdot (e_1 + e_3) = e_1 + e_3\). Therefore, \(e_1 + e_3\) is an invariant and is equal to the weight enumerator of the corresponding even Lagrangian \(K = L^+ \oplus \langle \mathbf{1} \rangle\). Indeed, \(K\) is the disjoint union of \(L^+\) and \(\mathbf{1} + L^+\) and thus its weight enumerator is the sum of \(W_{L^+} = e_1\) and \(W_{\mathbf{1} + L^+} = e_3\). 
    
    Now, note that \(X \cdot (e_2 - e_4) = - (e_2 - e_4)\) and, by subtracting Equation \eqref{iv} from Equation \eqref{ii}, \(A \cdot (e_2 - e_4) = e_2 - e_4\). Hence, \(e_2 - e_4\) is a semi-invariant. 
    
    Finally, \(X \cdot v_1 = -v_1\) and \(X \cdot v_2 = v_2\). Adding Equations \eqref{ii} and \eqref{iv} together gives \(A \cdot v_1 = -v_2\) and subtracting Equation \eqref{iii} from Equation \eqref{i} gives \(A \cdot v_2 = v_1\). The subspace \(\langle v_1, v_2 \rangle\) corresponds to the unfaithful 2-dimensional irreducible representation of \(D_8\).
\end{proof}

\begin{theorem}\label{WEtheoremI}
    Let \(L\) be a Lagrangian subspace of \(\mathbb{F}_2^n\). Then,
    \begin{align*}
        W_L^+(x,y) + W_L^+(y,x) &\in \mathbb{C}[s,t] \\
        W_L^-(x,y) - W_L^-(y,x) &\in D\mathbb{C}[s,t] \\
        W_L^-(x,y) + W_L^-(y,x) &\in p_1 \mathbb{C}[s,t] \oplus q_1 \mathbb{C}[s,t] \\
        W_L^+(x,y) - W_L^+(y,x) &\in p_2 \mathbb{C}[s,t] \oplus q_2 \mathbb{C}[s,t],
    \end{align*}
    where \(p_1 = 2W_{i_2'}^-(x,y) = 2xy,\, p_2 = W_{i_2'}^+(x,y) - W_{i_2'}^+(y,x) = y^2 - x^2,\, q_1 = W_{L_1}^-(x,y) - W_{L_2}^-(x,y) = xy^5 - 2x^3y^3 + x^5y,\, q_2 = W_{L_1}^+(x,y) - W_{L_2}^+(x,y) = 2x^2y^2(y^2 - x^2)\), and  \(D = yW_{e_7}(y,x) - xW_{e_7}(x,y)\) with \(i_2',\, L_1\), and \(L_2\) as in Examples \ref{TrivialOddLag} and \ref{OddLagLength6}.
\end{theorem}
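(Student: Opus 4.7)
The plan is to apply Lemma~\ref{D8representationEven} to decompose the 4-dimensional $D_8$-module $V=\langle e_1,e_2,e_3,e_4\rangle$ into its irreducible constituents, then identify each piece with a suitable module of semi-invariants. The first two statements of the theorem are immediate from that lemma combined with Lemma~\ref{semi-invariantD}: the $D_8$-invariant $e_1+e_3=W_L^+(x,y)+W_L^+(y,x)$ lies in $\CC[x,y]^{D_8}=\CC[s,t]$, while the semi-invariant $e_2-e_4=W_L^-(x,y)-W_L^-(y,x)$ has the same weight as $D$ (trivial on $A$, $-1$ on $X$) and hence belongs to $D\CC[s,t]$.

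For the last two statements, the strategy mirrors the proof of Theorem~\ref{ClassicalThmI}. A direct calculation shows that $(p_1,p_2)$ and $(q_1,q_2)$ each transform under $D_8$ like the pair $(v_1,v_2)$: $X$ negates the first entry and fixes the second, while $A$ sends $(u_1,u_2)\mapsto(-u_2,u_1)$. For any such pair, both $p_1 u_2-p_2 u_1$ and $q_1 u_2-q_2 u_1$ are $D_8$-semi-invariants of weight equal to that of $D$, so Lemma~\ref{semi-invariantD} gives $p_1 u_2-p_2 u_1=Dh_1$ and $q_1 u_2-q_2 u_1=Dh_2$ for some $h_1,h_2\in\CC[s,t]$. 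The key computational input is the identity $p_1 q_2-p_2 q_1=D$; together with the integral-domain structure of $\CC[x,y]$, it lets us invert this $2\times 2$ system via the adjugate matrix to obtain $u_1=q_1 h_1-p_1 h_2$ and $u_2=q_2 h_1-p_2 h_2$, with coefficients still in $\CC[s,t]$. Specializing to $(u_1,u_2)=(v_1,v_2)$ yields the stated memberships.

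Directness of $p_1\CC[s,t]\oplus q_1\CC[s,t]$ is confirmed by a short computation: if $p_1 f+q_1 g=0$ with $f,g\in\CC[s,t]$, cancelling $xy$ yields $2f+(y^2-x^2)^2 g=0$; applying $A$ (which fixes $f$, $g$ and sends $(y^2-x^2)^2$ to $4x^2y^2$) produces $2f+4x^2y^2 g=0$, and subtracting gives $(y^4-6x^2y^2+x^4)g=0$, forcing $g=f=0$. The sum $p_2\CC[s,t]\oplus q_2\CC[s,t]$ is handled by the same argument after factoring out $y^2-x^2$. The main expected hurdle is the explicit verification of $p_1 q_2-p_2 q_1=D$, which is precisely what prevents the Cramer-type solution from lying only in the fraction field of $\CC[s,t]$.
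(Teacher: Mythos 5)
Your proof is correct and follows essentially the same route as the paper: the first two memberships from Lemma~\ref{D8representationEven} and Lemma~\ref{semi-invariantD}, and the last two by forming the weight-$\chi$ semi-invariants from the pairs $(p_1,p_2)$, $(q_1,q_2)$ and inverting the $2\times 2$ system using the identity $p_1q_2-p_2q_1=D$. The only (minor) divergence is in the freeness step, where the paper proves the stronger statement that $\{1,D,p_1,p_2,q_1,q_2\}$ is independent over $\mathbb{C}[s,t]$ by separating even and odd powers and then using the $A$-action, whereas you verify directness of each two-generator sum separately by cancelling a common factor; both arguments are valid.
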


\begin{proof}
    Let \(e_i\) and \(v_i\) be as in Lemma \ref{D8representationEven}. We saw in the proof of Lemma \ref{D8representationEven} that \(e_1 + e_3 = W_K\) is an invariant of \(D_8\) and, hence, lies in \(\mathbb{C}[s,t]\). We also noted that \(e_2 - e_4\) is a semi-invariant of the same weight as \(D\) so, by Lemma \ref{D8representationEven}, \(e_2 - e_4\) lies in \(D\mathbb{C}[s,t]\). Now, \(p_1\) and \(p_2\) are instances of \(v_1\) and \(v_2\), respectively, and the same is true for \(q_1\) and \(q_2\), since
    \begin{align*}
        &q_1 = \frac{1}{2}\biggl[\biggl(W_{L_1}^-(x,y) + W_{L_1}^-(y,x)\biggr) - \biggl(W_{L_2}^-(x,y) + W_{L_2}^-(y,x)\biggr)\biggr] \\
        &q_2 = \frac{1}{2}\biggl[\biggl(W_{L_1}^+(x,y) - W_{L_1}^+(y,x)\biggr) - \biggl(W_{L_2}^+(x,y) - W_{L_2}^+(y,x)\biggr)\biggr].
    \end{align*}
    Thus, \(-p_2v_1 + p_1v_2\) and \(-q_2v_1 + q_1v_2\) are both semi-invariants of weight \(\chi\). Therefore, by Lemma \ref{semi-invariantD}, there are invariants \(f,g \in \mathbb{C}[s,t]\) such that \(-p_2v_1 + p_1v_2 = Df\) and \(-q_2v_1 + q_1v_2 = Dg\). This gives a linear system in \(v_1\) and \(v_2\) with determinant \(p_1q_2 - p_2q_1 = D\). Therefore, \(v_1 = q_1f - p_1g\) and \(v_2 = q_2f - p_2g\) as desired.
    
    To show that the sums are direct, we show that, in fact, \(\{1,\, D,\, p_1,\, p_2,\, q_1,\, q_2\}\) is linearly independent over \(\mathbb{C}[s,t]\). Suppose that 
    \begin{equation*}
        f_1 + p_2f_2 + q_2f_3 + Dg_1 + p_1g_2 + q_1g_3 = 0. 
    \end{equation*}
    The monomials in any polynomial from \(\mathbb{C}[s,t]\) have even powers in both \(x\) and \(y\) so it must be the case that 
    \begin{align*}
        &f_1 + p_2f_2 + q_2f_3 = 0 \\
        &Dg_2 + p_1g_2 + q_1g_3 = 0
    \end{align*}
    since the monomials in \(p_2\) and \(q_2\) also have even powers in \(x\) and \(y\) while the monomials in \(D,\,p_1\), and \(q_1\) have odd powers in \(x\) and \(y\). The first equation implies that \(p_2f_2 + q_2f_3\) is an invariant so
    \begin{align*}
        A \cdot (p_2f_2 + q_2f_3) &= p_1 f_2 + q_1 f_3 \\
        & = p_2f_2 + q_2f_3.
    \end{align*}
    Then, \((p_1 - p_2)f_2 = (q_2 - q_1)f_3\) and so
    \begin{align*}
        f_2 &= \frac{q_2 - q_1}{p_1 - p_2}f_3 \\
        \implies A \cdot f_2 &= \frac{q_1 + q_2}{-p_1 - p_2}f_3.
    \end{align*}
    Equating the above two equations, it follows that \(f_3 = f_2 = 0\) and, hence, \(f_1 = 0\).
    
    Similarly, we have that \(p_1g_2 + q_1g_3\) is a semi-invariant of weight \(\chi\) so
    \begin{align*}
        A \cdot (p_1g_2 + q_1g_3) &= -p_2 g_2 - q_2 g_3 \\
        & = p_1g_2 + q_1g_3.
    \end{align*}
    Just as in the previous case, it follows that \(g_1 = g_2 = g_3 = 0\).
\end{proof}

\begin{remark}
    Since the degree of \(D\) is \(8\), Theorem \ref{WEtheoremI} gives that \(W_L^-(x,y) = W_L^-(y,x)\) for \(n < 8\). In this case, \(W_L\) satisfies the ordinary MacWilliams identity, and the simplest example where this fails occurs for length \(8\) (see Example \ref{ex:odd_from_ext_Hamming}).
\end{remark}

\begin{corollary}
    With \(p_i\) and \(q_i\) as in Theorem \ref{WEtheoremI}, set \(p = p_1 + p_2\) and \(q = q_1 + q_2\). Then, \(W_L \in \mathbb{C}[s,t] \oplus D\mathbb{C}[s,t] \oplus p\mathbb{C}[s,t] \oplus q\mathbb{C}[s,t]\).
\end{corollary}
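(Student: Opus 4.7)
The plan is to assemble $W_L=W_L^+(x,y)+W_L^-(x,y)$ from the four symmetrization identities established in Theorem~\ref{WEtheoremI}. Writing $e_1,\ldots,e_4$ as in Lemma~\ref{D8representationEven}, I would split
\[
2W_L=(e_1+e_3)+(e_1-e_3)+(e_2-e_4)+(e_2+e_4),
\]
where Theorem~\ref{WEtheoremI} already tells us that $e_1+e_3\in\mathbb{C}[s,t]$ and $e_2-e_4\in D\mathbb{C}[s,t]$. The task thus reduces to handling the two remaining pieces $v_2=e_1-e_3\in p_2\mathbb{C}[s,t]\oplus q_2\mathbb{C}[s,t]$ and $v_1=e_2+e_4\in p_1\mathbb{C}[s,t]\oplus q_1\mathbb{C}[s,t]$. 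A naive summation only yields $W_L\in\mathbb{C}[s,t]+D\mathbb{C}[s,t]+p_1\mathbb{C}[s,t]+p_2\mathbb{C}[s,t]+q_1\mathbb{C}[s,t]+q_2\mathbb{C}[s,t]$, so the main obstacle is to collapse the separate $p_i$'s and $q_i$'s into $p$ and $q$.

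The key observation, which I would extract from the proof of Theorem~\ref{WEtheoremI}, is that the Cramer-rule solution of the linear system $-p_2v_1+p_1v_2=Df$, $-q_2v_1+q_1v_2=Dg$ produces
\[
v_1=q_1f-p_1g,\qquad v_2=q_2f-p_2g,
\]
with the \emph{same} pair $f,g\in\mathbb{C}[s,t]$. Adding the two expressions gives $v_1+v_2=(q_1+q_2)f-(p_1+p_2)g=qf-pg$, so that $2W_L=a+Dh+qf-pg$ for some $a,h\in\mathbb{C}[s,t]$ (representing $e_1+e_3$ and the cofactor of $D$ in $e_2-e_4$). Dividing through by $2$ yields $W_L\in\mathbb{C}[s,t]+D\mathbb{C}[s,t]+p\mathbb{C}[s,t]+q\mathbb{C}[s,t]$, which is the containment being claimed.

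For directness, I would piggy-back on the stronger linear independence established at the end of the proof of Theorem~\ref{WEtheoremI}. Any relation $\alpha+\beta D+\gamma p+\delta q=0$ with $\alpha,\beta,\gamma,\delta\in\mathbb{C}[s,t]$ expands to $\alpha+\beta D+\gamma p_1+\gamma p_2+\delta q_1+\delta q_2=0$, and the linear independence of $\{1,D,p_1,p_2,q_1,q_2\}$ over $\mathbb{C}[s,t]$ then forces $\alpha=\beta=\gamma=\delta=0$. The only step requiring genuine attention is recognizing the shared coefficients $f$ and $g$ between the expressions for $v_1$ and $v_2$; everything else is a mechanical consequence of Theorem~\ref{WEtheoremI}.
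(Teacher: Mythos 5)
Your proof is correct and matches the derivation the paper intends (the corollary is stated without proof as an immediate consequence of Theorem~\ref{WEtheoremI}): the essential point is precisely that the Cramer's-rule solution in the theorem's proof expresses $v_1=q_1f-p_1g$ and $v_2=q_2f-p_2g$ with a \emph{shared} pair $f,g$, so that $v_1+v_2=qf-pg$, and directness follows from the linear independence of $\{1,D,p_1,p_2,q_1,q_2\}$ over $\mathbb{C}[s,t]$ already established there. No gaps.
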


\subsubsection{Consequences for Lagrangians of Type II} We now impose the additional assumption that each \(v \in L^+\) has weight divisible by \(4\). If \(L \subset \mathbb{F}_2^n\) is even, then we saw in Section \ref{MWSection} that this implies \(n \equiv 0 \ (\textrm{mod}\ 8)\). However, this is not true for odd Lagrangians as Table \ref{Table:OddLag} indicates. We give two examples of such Lagrangians.

\begin{example}
    Let \(i_2' = \{00,\, 10\}\). Then \(i_2'^+ = \{00\}\) trivially satisfies our hypothesis yet \(n \equiv 2 \ (\textrm{mod}\ 8)\).
\end{example}

\begin{example}
    Consider the odd Lagrangian \(L_2\) from Example \ref{OddLagLength6} whose weight enumerator is \(W_{L_2}(x,y) = y^6 + 4x^3y^3 + 3x^4y^2\). Then, \(L_2^+\) satisfies our hypothesis, yet \(n \equiv -2 \ (\textrm{mod}\ 8)\).
\end{example}

It turns out that Lagrangians of type II only exist for \(n \equiv 0\ (\textrm{mod}\ 8),\, n \equiv 2 \ (\textrm{mod}\ 8)\), and \(n \equiv -2 \ (\textrm{mod}\ 8)\). We now proceed to show this and study the weight enumerators of such Lagrangians.

First, we observe that all odd vectors in \(L\) have the same weight modulo \(4\). Indeed, if \(\xi \in L\) is any odd vector, then Proposition \ref{prop:reduction_to_even} tells us that \(L\) admits the decomposition \(L = L^+ \oplus \langle \xi \rangle\). Hence, if \(\xi' = v + \xi \in L\) is any other odd vector, then \(\lVert \xi' \rVert = \lVert v \rVert + \lVert \xi \rVert - 2 \lVert v\xi \rVert\) where the product \(v \xi\) is taken component-wise. Since the parity of the integer \(\lVert v \xi \rVert\) is equal to \(v\cdot \xi = 0\), we get \(\lVert \xi' \rVert \equiv \lVert \xi \rVert \ (\textrm{mod}\ 4)\).

Recall from Section \ref{MWSection} that the group \(G\) generated by
\begin{equation*}
    A = \frac{1}{\sqrt{2}}
    \begin{bmatrix}
        1 & 1 \\
        -1 & 1
    \end{bmatrix}
    \textnormal{ and }
    B = 
    \begin{bmatrix}
        i & 0 \\
        0 & 1
    \end{bmatrix}
\end{equation*}
has order \(192\) and its algebra of invariants is \(\mathbb{C}[s,t]\) where \(s = W_{e_8} = y^8 + 14x^4y^4 + x^8\) and \(t = x^4y^4(x^4-y^4)^4\).

\begin{lemma}\label{repG_n=0(4)}
    Set \(e_1 = W_L^+(x,y),\, e_2 = W_L^-(x,y),\, e_3 = W_L^+(y,x),\, e_4 = W_L^-(y,x)\) and suppose that \(n \equiv 0 \ (\textrm{mod}\ 4)\). Then, \(V\) is \(G\)-invariant and its decompositions into irreducible \(G\)-submodules is given by \(V = \langle v_0 \rangle \oplus \langle v_1, v_2, v_3 \rangle\) where \(v_0 = e_1 + e_3,\, v_1 = e_2 + e_4,\, v_2 = e_1 - e_3\), and \(v_3 = e_2 - e_4\).
\end{lemma}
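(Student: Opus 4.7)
The plan is to leverage Lemma \ref{D8representationEven}, which already establishes that $V$ is $D_8$-invariant. Since a direct computation with the diagonal matrix $B=\mathrm{diag}(i,1)$ gives $B^2=X$, the group $G=\langle A,B\rangle$ contains $D_8=\langle A,X\rangle$, so to prove $G$-invariance of $V$ I only need to verify that $V$ is $B$-invariant.

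First I would compute $B\cdot e_j$ for $j=1,2,3,4$ directly from the defining rule $(B\cdot W)(x,y)=W(-ix,y)$. The type II hypothesis means $\|v\|\equiv 0\pmod 4$ for $v\in L^+$, so $B\cdot e_1=e_1$; and because $n\equiv 0\pmod 4$, the exponents of $x$ appearing in $e_3=W_L^+(y,x)$ are all $n-\|v\|\equiv 0\pmod 4$, so $B\cdot e_3=e_3$ as well. The observation noted just before the lemma (that all odd vectors in $L$ share a common residue $r\in\{1,3\}$ mod $4$) gives $B\cdot e_2=\epsilon\, e_2$ and, using $n-\|\xi\|\equiv -r\pmod 4$, $B\cdot e_4=-\epsilon\, e_4$, where $\epsilon=(-i)^r\in\{-i,i\}$.

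From these formulas the $B$-action on the proposed generators is immediate: $Bv_0=v_0$, $Bv_2=v_2$, $Bv_1=\epsilon v_3$, $Bv_3=\epsilon v_1$. Combined with Lemma \ref{D8representationEven}, which shows that $A$ fixes $v_0$ and preserves $\langle v_1,v_2\rangle$, and which (via the explicit formulas for $A\cdot e_2$ and $A\cdot e_4$ in its proof) also gives $A v_3 = e_2 - e_4 = v_3$, I conclude that both $\langle v_0\rangle$ and $\langle v_1,v_2,v_3\rangle$ are $G$-stable and that $V$ is their direct sum.

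The one-dimensional summand is trivially irreducible. For $\langle v_1,v_2,v_3\rangle$, I would argue as follows: any nonzero proper $G$-submodule is in particular a $D_8$-submodule, and Lemma \ref{D8representationEven} identifies the only nonzero proper $D_8$-submodules of this three-dimensional space as $\langle v_3\rangle$ (the one-dimensional semi-invariant line) and $\langle v_1,v_2\rangle$ (the two-dimensional irreducible). Neither is $B$-stable, since $Bv_3=\epsilon v_1\notin\langle v_3\rangle$ and $Bv_1=\epsilon v_3\notin\langle v_1,v_2\rangle$. Hence no proper nonzero $G$-submodule exists and the three-dimensional piece is $G$-irreducible. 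The main obstacle is bookkeeping—keeping the scalar $\epsilon$ and the sign on $Be_4$ consistent across the two possible values of $r$—but the irreducibility argument is insensitive to this, since both cases give the same orbit structure on $\{v_1,v_3\}$.
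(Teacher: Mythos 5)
Your proof is correct and follows essentially the same route as the paper: compute the $B$-action on $e_1,\dots,e_4$ from the mod-$4$ weight congruences, combine with the $A$-action already established in Lemma~\ref{D8representationEven}, and deduce the decomposition. Your irreducibility argument (enumerating the proper $D_8$-submodules $\langle v_3\rangle$ and $\langle v_1,v_2\rangle$ and checking neither is $B$-stable) is a slightly more explicit version of the paper's remark that $\langle v_1,v_2,v_3\rangle$ contains no semi-invariants, and it is sound.
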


\begin{proof}
    Lemma \ref{D8representationEven} implies that \(V\) is invariant under the action by \(A\). Since all vectors in \(L^+\) have weights divisible by 4, we have \(B \cdot e_1 = e_1\). Moreover, \(B \cdot e_3 = e_3\) since \(n \equiv 0 \pmod{4}\) gives that \(n - \lVert v \rVert \equiv 0\ (\textrm{mod}\ 4)\) for every \(v \in L^+\). Hence, \(e_1+e_3\) is fixed by \(B\). We saw in Lemma \ref{D8representationEven} that \(e_1+e_3\) is fixed by \(A\), so it is an invariant of \(G\). 
    
    If all odd vectors have weight congruent to \(1\) modulo \(4\), then \(B \cdot e_2 = -i e_2\) and \(B \cdot e_4 = ie_4\) and, hence, \(B \cdot v_1 =-iv_3\) and \(B \cdot v_3 = -iv_1\). Otherwise, we have that \(B \cdot e_2 = ie_2\) and \(B \cdot e_4 = -ie_4\) and, hence, \(B \cdot v_1 = iv_3\) and \(B \cdot v_3 = iv_1\). We also saw in Lemma \ref{D8representationEven} that \(A \cdot v_1 = -v_2\), \(A \cdot v_2 = v_1\), and \(A \cdot v_3 = v_3\). Noting that \(v_2\) is fixed by \(B\) gives that \(\langle v_1, v_2, v_3 \rangle\) is a \(G\)-submodule. A computation shows that it contains no semi-invariants and is therefore irreducible.
\end{proof}

\begin{theorem}\label{W_L:n=0(8)}
    Suppose that \(n \equiv 0 \ (\mathrm{mod}\ 4)\) and that \(L \subset \mathbb{F}_2^n\) is a Lagrangian subspace of Type II. Set \(v_0 = W_L^+(x,y) + W_L^+(y,x),\, v_1 = W_L^-(x,y) + W_L^-(y,x),\, v_2 = W_L^+(x,y) - W_L^+(y,x)\), and \(v_3 = W_L^-(x,y) - W_L^-(y,x)\). Then, \(n \equiv 0 \ (\mathrm{mod}\ 8)\) and
    \begin{align*}
        &v_0 \in \mathbb{C}[s,t] \\
        &v_1 \in p(q_1r_3+q_3r_1)\mathbb{C}[s,t] \oplus (p_1r_3 - p_3r_1)\mathbb{C}[s,t] \oplus -pw(p_1q_3+p_3q_1)\mathbb{C}[s,t] \\
        &v_2 \in p(q_3r_2-q_2r_3)\mathbb{C}[s,t] \oplus (p_2r_3 - p_3r_2)\mathbb{C}[s,t] \oplus pw(p_3q_2-p_2q_1)\mathbb{C}[s,t] \\
        &v_3 \in -p(q_2r_1+q_1r_2)\mathbb{C}[s,t] \oplus (p_2r_1 - p_1r_2)\mathbb{C}[s,t] \oplus pw(p_2q_1+p_1q_2)\mathbb{C}[s,t].
    \end{align*}
    where
    \begin{equation*}
        \begin{bmatrix}
            p_1 \\
            p_2 \\
            p_3
        \end{bmatrix}
        =
        \begin{bmatrix}
            2xy \\
            y^2-x^2 \\
            y^2+x^2
        \end{bmatrix}, \:
        \begin{bmatrix}
            q_1 \\
            q_2 \\
            q_3
        \end{bmatrix}
        =
        \begin{bmatrix}
            p_2p_3 \\
            -p_1p_3 \\
            p_1p_2
        \end{bmatrix}, \:
        \begin{bmatrix}
            r_1 \\
            r_2 \\
            r_3
        \end{bmatrix}
        =
        \begin{bmatrix}
            p_1^3 \\
            p_2^3 \\
            -p_3^3
        \end{bmatrix},
    \end{equation*}
    and \(p = p_1p_2p_3 = 2xy(y^4-x^4)\) and \(w = x^2y^2(x^4-y^4)^2\) are as in Lemma \ref{semi-invGI}. 
\end{theorem}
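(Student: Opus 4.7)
The plan is to follow the template of Theorems~\ref{WEtheoremI} and \ref{ClassicalThmII}: identify the relevant $G$-representation, produce enough triples transforming via that representation, then invoke a Cramer-type argument. For the invariant piece, Lemma~\ref{repG_n=0(4)} tells us that $\langle v_0\rangle$ is the trivial $G$-submodule, so $v_0\in\mathbb{C}[x,y]^G=\mathbb{C}[s,t]$. Moreover, $v_0=W_{L^+}(x,y)+W_{L^+}(y,x)=W_{L^+}+W_{\mathbf{1}+L^+}=W_K$ where $K=L^+\oplus\langle\mathbf{1}\rangle$; under our hypotheses ($L$ of Type~II and $n\equiv 0\pmod{4}$), every $v\in L^+$ has weight divisible by $4$ and $\lVert\mathbf{1}+v\rVert=n-\lVert v\rVert\equiv 0\pmod{4}$, so $K$ is a Type~II self-dual binary code. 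Since $W_K\in\mathbb{C}[s,t]=\mathbb{C}[W_{e_8},x^4y^4(x^4-y^4)^4]$ is homogeneous of degree $n$ and both generators of this polynomial algebra have degree divisible by $8$, we conclude $n\equiv 0\pmod{8}$.

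For the three-dimensional piece, by Lemma~\ref{repG_n=0(4)}, $(v_1,v_2,v_3)$ spans an irreducible 3-dimensional representation $\rho$ of $G$. A direct computation of the $A$- and $B$-actions shows that each of $(p_1,p_2,p_3)$, $(q_1,q_2,q_3)$, $(r_1,r_2,r_3)$ transforms via a \emph{different} 3-dimensional representation of $G$, each differing from $\rho$ by tensoring with one of the 1-dimensional characters listed in Lemma~\ref{semi-invGI}. Multiplying by the appropriate semi-invariants $p$ and $w$ (of the corresponding complementary weights) corrects the character and produces three triples $(P_i)$, $(Q_i)$, $(R_i)$ all transforming via $\rho$, whose components are precisely the generators listed in the theorem. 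Because $\rho$ has trivial determinant character (checked directly: $\det A=\det B=1$ on $\rho$), the determinant of any $3\times 3$ matrix whose columns transform via $\rho$ is a $G$-invariant and thus lies in $\mathbb{C}[s,t]$. Applying Cramer's rule to the $3\times 4$ matrix whose columns are $(v_i)$, $(P_i)$, $(Q_i)$, $(R_i)$ expresses $(v_1,v_2,v_3)$ as a $\mathbb{C}[s,t]$-linear combination of $(P_i)$, $(Q_i)$, $(R_i)$, yielding the claimed membership of each $v_i$.

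The main obstacle, analogous to the final step in the proofs of Theorem~\ref{WEtheoremI} and Theorem~\ref{ClassicalThmII}, will be proving that the sums are direct, i.e., that the three triples are linearly independent over $\mathbb{C}[s,t]$. Following the pattern of Lemma~\ref{semi-invariantD} and Lemma~\ref{semi-invGI}, I would assume a dependence relation and exploit the factorization of $\det(P\mid Q\mid R)$ in the UFD $\mathbb{C}[s,t]$, employing the irreducibility of key invariants such as $t$ and $s^3-108t$, to force all coefficients to vanish. Explicitly carrying out the character computations and verifying the precise algebraic identities satisfied by $(P_i)$, $(Q_i)$, $(R_i)$ constitutes the principal computational burden of the proof.
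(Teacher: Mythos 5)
Your handling of $v_0$ and of the congruence $n\equiv 0\pmod 8$ is correct and matches the paper, and the overall invariant-theoretic strategy (Lemma~\ref{repG_n=0(4)} for the decomposition, Lemma~\ref{semi-invGI} for the semi-invariants) is the right one. The genuine gap is in your Cramer's rule step. If $(P_i),(Q_i),(R_i)$ are triples transforming via the same representation $\rho$ as $(v_1,v_2,v_3)$, Cramer's rule writes $(v_i)$ as a combination of them with coefficients of the form $\det(\cdots v\cdots)/\det(P\,|\,Q\,|\,R)$. Both numerator and denominator are indeed invariants (your observation that $\det\rho\equiv 1$ is correct), but that only places the coefficients in the fraction field $\mathbb{C}(s,t)$, not in $\mathbb{C}[s,t]$ --- compare $t/s$. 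Since $\det(P\,|\,Q\,|\,R)$ is a nonconstant invariant (it equals $(8wu)^2$ for the generator triples in the statement), the needed divisibility is essentially the whole content of the theorem and cannot be obtained for free. The paper circumvents this by running the argument in the dual direction: it pairs $(v_1,v_2,v_3)$ against the covariant triples $(p_i)$, $(q_i)$, $(r_i)$ to produce three \emph{semi-invariants}, identifies their weights, and uses the cyclicity in Lemma~\ref{semi-invGI} to write them as $(pu)f$, $ug$, $(puw)h$ with $f,g,h\in\mathbb{C}[s,t]$; the coefficient matrix $N$ has $\det N=4u$, and because $u$ divides each of the generators $pu$, $u$, $puw$, inverting the system stays inside $\mathbb{C}[s,t]$. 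That exact cancellation is the ingredient missing from your plan, and it is also why the freeness question you flag as the ``main obstacle'' is in fact the easy part (it follows at once from $\det(P\,|\,Q\,|\,R)\neq 0$).

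Two further inaccuracies. The generators in the statement are not obtained by twisting $(p_i),(q_i),(r_i)$ by the semi-invariants $p$ and $w$: up to those factors they are the $2\times 2$ cofactors of $N$, i.e., the entries of its adjugate, hence quadratic expressions in the $p_i,q_i,r_i$; your description would not produce them. Also, the proof splits into two cases according to whether the odd weights of $L$ are $\equiv 1$ or $\equiv 3\pmod 4$ (this changes the $B$-action on $W_L^-$ by a factor of $\pm i$); the paper disposes of the second case by replacing $p_3$ with $-p_3$, and your proposal does not address it.
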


\begin{proof}
    We saw in Lemma \ref{W_L:n=0(8)} that \(v_0\) is an invariant and hence lies in \(\mathbb{C}[s,t]\), implying that \(n = \deg v_0 \equiv 0 \ (\textrm{mod}\ 8)\). Assume first that all odd weights are congruent to \(1\) modulo \(4\). We saw in the proof of Lemma \ref{semi-invGI} that 
    \begin{align*}
        &A\cdot p_1 = -p_2,\, A\cdot p_2 = p_1,\, A\cdot p_3 = p_3 \\
        &B\cdot p_1 = -ip_1,\, B\cdot p_2 = p_3,\, B\cdot p_3 = p_2.
    \end{align*}
    It follows that \(p_2v_1 - p_1v_2 + p_3v_3\) is a semi-invariant of weight \(\chi_i\) (change \(i\) accordingly as in table) and, hence, belongs to \(pu\mathbb{C}[s,t]\). It also follows that \(G\) acts on each \(q_i\) and each \(r_i\) in the following way:
    \begin{align*}
        &A\cdot q_1 = -q_2,\, A\cdot q_2 = q_1,\, A\cdot q_3 = -q_3 \\
        &B\cdot q_1 = q_1,\, B\cdot q_2 = iq_3,\, B\cdot q_3 = iq_2 
    \end{align*}
    and
    \begin{align*}
        &A\cdot r_1 = -r_2,\, A\cdot r_2 = r_1,\, A\cdot r_3 = r_3 \\
        &B\cdot r_1 = ir_1,\, B\cdot r_2 = -r_3,\, B\cdot r_3 = -r_2. 
    \end{align*}
    This, along with Lemmas \ref{repG_n=0(4)} amd \ref{semi-invGI}, implies that there are invariants \(f, g, h \in \mathbb{C}[s,t]\) such that 
    \begin{equation*}
        \begin{bmatrix}
            p_2 & -p_1 & p_3 \\
            q_2 & q_1 & q_3 \\
            r_2 & -r_1 & r_3
        \end{bmatrix}
        \begin{bmatrix}
            v_1 \\
            v_2 \\
            v_3
        \end{bmatrix}
        =
        \begin{bmatrix}
            (pu)f \\
            ug \\
            (puw)h
        \end{bmatrix}
    \end{equation*}
    where \(u\) is as in Lemma \ref{semi-invGI}. Letting \(N\) be the coefficient matrix, it can be seen that \(\det N = 4u\), giving
    \begin{equation*}
        \begin{bmatrix}
            v_1 \\
            v_2 \\
            v_3
        \end{bmatrix}
        = \frac{1}{4}M
        \begin{bmatrix}
            pf \\
            g \\
            pwh
        \end{bmatrix} =
        \frac{1}{4} \tilde{M}
        \begin{bmatrix}
            f \\
            g \\ 
            h
        \end{bmatrix}
    \end{equation*}
    where
    \begin{equation*}
        M = 
        \begin{bmatrix}
            q_1r_3+q_3r_1 & p_1r_3 - p_3r_1 & -(p_1q_3+p_3q_1) \\
            q_3r_2-q_2r_3 & p_2r_3 - p_3r_2 & p_3q_2-p_2q_1 \\
            -(q_2r_1+q_1r_2) & p_2r_1 - p_1r_2 & p_2q_1+p_1q_2
        \end{bmatrix}
    \end{equation*}
    is the adjugate of \(N\) and
    \begin{equation*}
        \tilde{M} = M
        \begin{bmatrix}
            p & 0 & 0 \\
            0 & 1 & 0 \\
            0 & 0 & pw
        \end{bmatrix},
    \end{equation*}
    so the \(i\)th row of \(\tilde{M}\) consists of the generators of the module that \(v_i\) is claimed to belong to. It remains to check that these generators are free.
    
    Since \(A \cdot \tilde{M}_{1j} = -\tilde{M}_{2j}\) and \(B \cdot \tilde{M}_{1j} = -i\tilde{M}_{3j}\) for \(j = 1,2,3\), it suffices to show that \(\tilde{M}_{11},\, \tilde{M}_{12}\), and \(\tilde{M}_{13}\) are linearly independent over \(\mathbb{C}[s,t]\). If \(f_1, f_2, f_3 \in \mathbb{C}[s,t]\) are such that \(\tilde{M}_{11}f_1 + \tilde{M}_{12}f_2 + \tilde{M}_{13}f_3 = 0 \), then the action of \(A\) and \(B\) on the first row of \(\tilde{M}\) implies that 
    \begin{equation*}
        \tilde{M}
        \begin{bmatrix}
            f_1 \\ 
            f_2 \\ 
            f_3
        \end{bmatrix}
        = 0.
    \end{equation*}
    Since \(\det \tilde{M} = p^2w \det M = (2w)^2 (\det N)^2 = (8wu)^2 \neq 0\), we conclude that \(f_i = 0\).
    
    For the case when all odd weights are congruent to \(3\) modulo \(4\), simply replace \(p_3\) by \(-p_3\) and apply the same argument.
\end{proof}

\begin{lemma}\label{repG:n=2(4)}
    Set \(e_1 = W_L^+,\, e_2 = W_L^-,\, e_3(x,y) = W_L^+(y,x),\, e_4 = W_L^-(y,x)\) and suppose that \(n \equiv 2 \ (\textrm{mod}\ 4)\). Then, \(V\) is \(G\)-invariant and its decomposition into irreducible \(G\)-submodules is given by \(V = \langle v_0 \rangle \oplus \langle v_1, v_2, v_3 \rangle\) where \(v_0 = e_2-e_4,\, v_1 = e_2 + e_4,\, v_2 = e_1 - e_3\), and \(v_3 = e_1 + e_3\).
\end{lemma}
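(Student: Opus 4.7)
The plan is to mirror the proof of Lemma \ref{repG_n=0(4)}, adapting it to the different mod-$4$ behavior. The key structural change is that the subspace $\langle e_1+e_3\rangle$, which was a $G$-invariant in the $n\equiv 0\pmod 4$ case, is no longer fixed by $B$ here, so the roles played by the $e_i$ inside the decomposition are rearranged: the $1$-dimensional semi-invariant line will now be $\langle v_0\rangle=\langle e_2-e_4\rangle$, while $e_1+e_3$ will sit inside the $3$-dimensional piece.

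I will first compute the $B$-action on each $e_i$. Using the Type II hypothesis, every $v\in L^+$ satisfies $\lVert v\rVert\equiv 0\pmod 4$, so $B\cdot e_1=e_1$; since $n\equiv 2\pmod 4$ forces $n-\lVert v\rVert\equiv 2\pmod 4$, we get $B\cdot e_3=-e_3$. For the odd part, I will invoke the observation made just before Lemma \ref{repG_n=0(4)} that all odd vectors of $L$ share a common weight residue $\omega\in\{1,3\}$ modulo $4$; this gives $B\cdot e_2=\eta e_2$ with $\eta:=i^\omega$, and a short check (using $i^{2-\omega}=i^\omega$ for $\omega\in\{1,3\}$) shows $B\cdot e_4=\eta e_4$ as well. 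Combined with the $A$-action imported from Lemma \ref{D8representationEven}, these formulas make $V$ a $G$-module. They also immediately display $\langle v_0\rangle$ as a $1$-dimensional $G$-submodule (a semi-invariant of weight $A\mapsto 1$, $B\mapsto \eta$), and the identities $A\cdot v_1=-v_2$, $A\cdot v_2=v_1$, $A\cdot v_3=v_3$, $B\cdot v_1=\eta v_1$, $B\cdot v_2=v_3$, $B\cdot v_3=v_2$ show that $\langle v_1,v_2,v_3\rangle$ is $G$-invariant.

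The only non-routine step is irreducibility of the $3$-dimensional piece. I will use the general fact that over $\CC$ any reducible representation of dimension $3$ of a finite group decomposes (by Maschke) into smaller irreducibles and therefore contains a $1$-dimensional subrepresentation; equivalently, a non-zero simultaneous eigenvector of the generators $A$ and $B$. My plan is to enumerate the $B$-eigenvectors of $\langle v_1,v_2,v_3\rangle$, which are (up to scalar) $v_1$ with eigenvalue $\eta$, $v_2+v_3$ with eigenvalue $1$, and $v_2-v_3$ with eigenvalue $-1$, and then apply $A$ to each, obtaining $-v_2$, $v_1+v_3$, and $v_1-v_3$ respectively, none of which is a scalar multiple of the original. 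This rules out any semi-invariant and thus proves irreducibility. The direct-sum decomposition then follows because the two summands are non-isomorphic as $G$-modules (they have different dimensions), and the identities $2e_1=v_3+v_2$, $2e_3=v_3-v_2$, $2e_2=v_1+v_0$, $2e_4=v_1-v_0$ guarantee that $v_0,v_1,v_2,v_3$ together span $V$.

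The main potential pitfall is the sign bookkeeping in the $B$-action, particularly the coincidence that $e_2$ and $e_4$ are scaled by the same $\eta$ (so that $v_0$ and $v_1$ are both $B$-eigenvectors with the same eigenvalue, which is what permits the clean split into $\langle v_0\rangle$ and $\langle v_1,v_2,v_3\rangle$); beyond that, the argument is a direct structural adaptation of Lemma \ref{repG_n=0(4)}.
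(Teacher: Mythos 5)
Your proposal is correct and follows essentially the same route as the paper: import the $A$-action on $e_1,\dots,e_4$ from Lemma~\ref{D8representationEven}, compute the $B$-action from the weight residues, observe that $e_2$ and $e_4$ scale by the same factor so that $\langle e_2-e_4\rangle$ splits off as the $1$-dimensional piece while $e_1+e_3$ moves into the $3$-dimensional one, and check irreducibility of the latter by ruling out semi-invariants (you make this last step explicit via the $B$-eigenvectors $v_1$, $v_2+v_3$, $v_2-v_3$, where the paper only asserts the computation). One caveat: under the paper's convention $(g\cdot f)(u)=f(g^{-1}u)$ one has $B\cdot x^ay^b=(-i)^ax^ay^b$, so the scalar on $e_2$ and $e_4$ is $(-i)^{\omega}$ (i.e.\ $-i$ for $\omega=1$ and $i$ for $\omega=3$), not your $\eta=i^{\omega}$; this does not affect the decomposition asserted in the lemma, but the correct value of the character on $B$ is what selects $pu\,\mathbb{C}[s,t]$ versus $puw\,\mathbb{C}[s,t]$ in Theorems~\ref{W_L:n=2(8)} and~\ref{W_L:n=-2(8)}, so the sign should be fixed before it is used there.
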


\begin{proof}
    The proof is similar to Lemma \ref{repG_n=0(4)}. We only note that, if all odd weights are congruent to \(1\) modulo \(4\), then \(B \cdot e_2 = -i e_2\) and \(B \cdot e_4 = -ie_4\), so \(e_2 - e_4\) is a semi-invariant whose weight \(\chi\) maps \(A \mapsto 1\) and \(B \mapsto -i\). Also, \(B\cdot v_1 = -iv_1,\, B\cdot v_2 = v_3\), and \(B\cdot v_3 = v_2\).
    
    If all odd weights are congruent to \(3\) modulo 4, then the weight of the semi-invariant \(e_2-e_4\) is the character \(\chi^{-1}\) that maps \(A \mapsto 1\) and \(B \mapsto i\). Also, \(B\cdot v_1 = iv_1,\, B\cdot v_2 = v_3\), and \(B\cdot v_3 = v_2\).
\end{proof}

\begin{theorem}\label{W_L:n=2(8)}
    Suppose that \(n \equiv 2 \ (\textrm{mod}\ 4),\, L\subset \mathbb{F}_2^n\) is a Lagrangian subspace of Type II, and that all odd weights in \(L\) are congruent to \(1\) modulo \(4\). Set \(v_0 = W_L^-(x,y) - W_L^-(y,x),\, v_1 = W_L^-(x,y) + W_L^-(y,x),\, v_2 = W_L^+(x,y) - W_L^+(y,x)\), and \(v_3 = W_L^+(x,y) + W_L^+(y,x)\). Then, \(n \equiv 2 \pmod{8}\) and 
    \begin{align*}
        &v_0 \in pu\mathbb{C}[s,t] \\
        &v_1 \in u_1\mathbb{C}[s,t] \oplus p(p_3r_1 - p_2r_3)\mathbb{C}[s,t] \oplus (p_2q_3-p_3q_2)\mathbb{C}[s,t] \\
        &v_2 \in u_2\mathbb{C}[s,t] \oplus p(p_3r_2 - p_1r_3)\mathbb{C}[s,t] \oplus (p_1q_3-p_3q_1)\mathbb{C}[s,t] \\
        &v_3 \in u_3\mathbb{C}[s,t] \oplus p(p_1r_1 - p_2r_2)\mathbb{C}[s,t] \oplus (p_2q_1-p_1q_2)\mathbb{C}[s,t].
    \end{align*}
    where \(u_i,\, p_i,\, q_i\), and \(r_i\) are given by the following table:
    \begin{center}
        \begin{tabular}{|c|c|c|c|c|}
             \hline
             \(i\) & \(u_i\) & \(p_i\) & \(q_i\) & \(r_i\) \\ \hline
             \(1\) &  \(2xy\) & \(-u_2u_3\) & \(4xy(y^4+x^4)\) & \(xy^7 + 7x^5y^3 + x^7y + 7x^3y^5\) \\
             \(2\) &  \(y^2 - x^2\) & \(u_1u_3\) & \(y^6 + 5x^2y^4 - 5x^4y^2 - x^6\) & \(y^8 - x^8\) \\
             \(3\) &  \(y^2 + x^2\) & \(u_1u_2\) & \(y^6 - 5x^2y^4 - 5x^4y^2 + x^6\) & \(xy^7 + 7x^5y^3 - x^7y - 7x^3y^5\) \\
             \hline
        \end{tabular}
    \end{center}
    and \(p = u_1u_2u_3\) and \(u = -\frac{1}{2}(x^{12} - 33x^8y^4 - 33x^4y^8 + y^{12})\) are as in Lemma \ref{semi-invGI}.
\end{theorem}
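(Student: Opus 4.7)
The plan is to mirror the proof of Theorem~\ref{W_L:n=0(8)}, adjusting for the different three-dimensional irreducible $G$-submodule identified in Lemma~\ref{repG:n=2(4)}. The statement about $v_0$ should be immediate: by that lemma (using that all odd weights are $\equiv 1 \pmod 4$), $v_0$ is fixed by $A$ and multiplied by $-i$ by $B$, so it is a $G$-semi-invariant of weight $\chi_4$ in the notation of Lemma~\ref{semi-invGI}; the corresponding module of semi-invariants is $D_1\,\mathbb{C}[s,t]=pu\,\mathbb{C}[s,t]$, which gives the first claim.

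For the triple $(v_1,v_2,v_3)$, I would first verify the $G$-transformation rules of the four auxiliary triples $(u_i)$, $(p_i)$, $(q_i)$, $(r_i)$ by direct substitution of $A$ and $B$ on each listed polynomial. In particular, $(u_1,u_2,u_3)$ should transform exactly as $(v_1,v_2,v_3)$, so that $(u_i)$ itself may serve as a reference copy of the rep $V_3=\langle v_1,v_2,v_3\rangle$. The triples $(p_i)$, $(q_i)$, $(r_i)$ will transform according to explicit (and different) rules, and I would then construct three specific linear combinations of the form $\sum_j \phi_j v_j$, with the $\phi_j$'s built from the components of the auxiliary triples, that are $G$-semi-invariants of three distinct weights listed in Lemma~\ref{semi-invGI}.

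Once these three semi-invariants are in hand, Lemma~\ref{semi-invGI} lets me write each as (semi-invariant generator)$\,\cdot\,$(element of $\mathbb{C}[s,t]$), yielding a linear system $N\,(v_1,v_2,v_3)^T = \diag(\alpha_1,\alpha_2,\alpha_3)\,(f,g,h)^T$, where $N$ is the $3\times 3$ matrix whose rows are the contracting triples and $\alpha_i$ are the corresponding generators. Multiplying by the adjugate of $N$ and cancelling the common factor between $\det N$ and the $\alpha_i$'s should produce $(v_1,v_2,v_3)^T = \tfrac{1}{\kappa}\,\tilde M\,(f,g,h)^T$, whose rows match the three generators claimed for each $v_i$. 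The shape of the claim, in which the first generator for $v_i$ is simply $u_i$, strongly suggests that one of the three weights is $\chi_3$ (generated by $u$) and that $\det N$ is a nonzero scalar multiple of $u$, so that $u$ cancels in the first column of $\tilde M$.

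The congruence $n\equiv 2\pmod 8$ will follow at once from the formula for $v_3 = W_L^+(x,y)+W_L^+(y,x)$, which is nonzero because $y^n$ appears with coefficient~$1$ in $W_L^+$: the three listed generators for $v_3$ have degrees $2$, $18$, and $10$, all congruent to $2$ modulo~$8$, while $\mathbb{C}[s,t]$ is concentrated in degrees divisible by~$8$. Directness of each sum is handled as at the end of the proof of Theorem~\ref{W_L:n=0(8)}: any $\mathbb{C}[s,t]$-linear vanishing relation among the three generators of one $v_i$ propagates via $G$-equivariance to a simultaneous relation for all three $v_j$'s, and nonvanishing of $\det\tilde M$ (computable as a product of known nonzero semi-invariants) forces the coefficients to vanish. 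I expect the main obstacle to be the combinatorial bookkeeping: writing down the $A$- and $B$-transformation rules of $(p_i)$, $(q_i)$, $(r_i)$, identifying the three correct contracting triples whose weights are compatible with Lemma~\ref{semi-invGI}, and computing $\det N$ explicitly to confirm that it has the form needed to produce exactly the generators stated.
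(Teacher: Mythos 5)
Your plan follows the paper's proof essentially step for step: $v_0$ is identified as a semi-invariant of weight $\chi_4$ and hence lies in $pu\,\mathbb{C}[s,t]$ by Lemma~\ref{semi-invGI}, and $(v_1,v_2,v_3)$ is recovered by contracting against the rows $(-p_1,p_2,p_3)$, $(-q_1,q_2,q_3)$, $(-r_2,r_1,r_3)$ to produce semi-invariants of weights $\chi_1$, $\chi_7$, $\chi_4$ (generators $p$, $uw$, $pu$), inverting via the adjugate of $N$, and concluding with the same degree and freeness arguments you describe. The one place your anticipated bookkeeping differs from the actual computation is the weight structure: none of the three contracting combinations has weight $\chi_3$, and $\det N=-6pu$ rather than a scalar multiple of $u$; the factors $u_j$ appear in the first column because the cofactors $M_{j1}$ equal $-2u\,u_j$ and the $p$ in the generator $pf$ cancels against the $p$ in $\det N$ --- but since you explicitly flagged exactly this verification as the remaining work, the argument goes through as planned.
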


\begin{proof}
    We saw in the proof of Lemma \ref{repG:n=2(4)} that \(e_2 - e_4\) is a semi-invariant of the same weight as \(pu\) and, hence, by Lemma \ref{semi-invGI}, lies in \(pu \mathbb{C}[s,t]\). Furthermore, one can check that \(A\) acts on \(p_i, q_i, r_i\) as on \(v_i\) except that \(A \cdot p_3 = -p_3\) and \(A\cdot q_3 = -q_3\). Moreover,
    \begin{align*}
        &B\cdot p_1 = p_1,\, B\cdot p_2 = -ip_3,\, B\cdot p_3 = -ip_2, \\
        &B\cdot q_1 = -iq_1,\, B\cdot q_2 = -q_3,\, B\cdot q_3 = -q_2 \\
        &B\cdot r_1 = -ir_3,\, B\cdot r_2 = r_2,\, B\cdot r_3 = -ir_1.
    \end{align*}
    This, along with Lemmas \ref{repG:n=2(4)} and \ref{semi-invGI}, implies that there are invariants \(f, g, h \in \mathbb{C}[s,t]\) such that 
    \begin{equation*}
        \begin{bmatrix}
            -p_1 & p_2 & p_3 \\
            -q_1 & q_2 & q_3 \\
            -r_2 & r_1 & r_3
        \end{bmatrix}
        \begin{bmatrix}
            v_1 \\
            v_2 \\
            v_3
        \end{bmatrix}
        =
        \begin{bmatrix}
            pf \\
            (uw)g \\
            (pu)h
        \end{bmatrix}
    \end{equation*}
    where \(w\) is as in Lemma \ref{semi-invGI}. Letting \(N\) be the coefficient matrix, it can be seen that \(\det N = -6pu\). Recalling that \(p^2 = 4w\), we have
    \begin{equation*}
        \begin{bmatrix}
            v_1 \\
            v_2 \\
            v_3
        \end{bmatrix}
        = -\frac{1}{6}M
        \begin{bmatrix}
            f/u \\
            pg/4 \\
            h
        \end{bmatrix} =
        -\frac{1}{6} \tilde{M}
        \begin{bmatrix}
            f \\
            g \\
            h
        \end{bmatrix}
    \end{equation*}
    where \(M\) is the adjugate of \(N\) and
    \begin{equation*}
        \tilde{M} = M
        \begin{bmatrix}
            1/u & 0 & 0 \\
            0 & p/4 & 0 \\
            0 & 0 & 1
        \end{bmatrix}.
    \end{equation*}
    A straightforward calculation shows that \(\tilde{M}_{j1} = -2u_j\) and, therefore, the \(i\)th row of \(\tilde{M}\) consists of the generators of the module that \(v_i\) is claimed to belong to. 
    
    For freeness, since \(A\cdot \tilde{M}_{2j} = \tilde{M}_{1j}\) and \(B \cdot \tilde{M}_{2j} = \tilde{M}_{3j}\) for \(j = 1,2,3\), it suffices to show that \(\tilde{M}_{21},\, \tilde{M}_{22}\), and \(\tilde{M}_{23}\) are linearly independent over \(\mathbb{C}[s,t]\), which is achieved by the same argument as in Theorem \ref{W_L:n=0(8)}.

    Since \(\deg \tilde{M}_{ij} \equiv 2 \pmod{8}\) for \(i,j = 1,2,3\), we have \(n = \deg v_i \equiv 2 \pmod{8}\).
\end{proof}

\begin{remark}
    Since the degree of \(pu\) in Theorem \ref{W_L:n=2(8)} is 18, we find that, for any Lagrangian \(L\) of length \(n < 18\) satisfying the hypothesis of the theorem, we have \(W_L^-(x,y) = W_L^-(y,x)\). 
\end{remark}

\begin{theorem}\label{W_L:n=-2(8)} 
    Suppose that \(n \equiv 2 \ (\textrm{mod}\ 4),\, L\subset \mathbb{F}_2^n\) is a Lagrangian subspace of Type II and that all odd weights in \(L\) are congruent to \(3\) modulo \(4\). Set \(v_0 = W_L^-(x,y) - W_L^-(y,x),\, v_1 = W_L^-(x,y) + W_L^-(y,x),\, v_2 = W_L^+(x,y) - W_L^+(y,x)\), and \(v_3 = W_L^+(x,y) + W_L^+(y,x)\). Then, \(n \equiv -2 \pmod{8}\) and 
    \begin{align*}
        &v_0 \in puw\mathbb{C}[s,t] \\
        &v_1 \in (q_2r_3 + q_3r_1)\mathbb{C}[s,t] \oplus q_1\mathbb{C}[s,t] \oplus u(p_2q_3-p_3q_2)\mathbb{C}[s,t] \\
        &v_2 \in (q_3r_2-q_1r_3)\mathbb{C}[s,t] \oplus q_2\mathbb{C}[s,t] \oplus u(p_1q_3+p_3q_1)\mathbb{C}[s,t] \\
        &v_3 \in -(q_1r_1+q_2r_2)\mathbb{C}[s,t] \oplus q_3\mathbb{C}[s,t] \oplus -u(p_1q_2+p_2q_1)\mathbb{C}[s,t].
    \end{align*}
    where \(u_i,\, p_i,\, q_i\), and \(r_i\) are given by the following table:
    \begin{center}
        \begin{tabular}{|c|c|c|c|c|}
             \hline
             \(i\) & \(u_i\) & \(p_i\) & \(q_i\) & \(r_i\) \\ \hline
             \(1\) &  \(2xy\) & \(u_2u_3\) & \(u_1^3\) & \(xy^7 + 7x^5y^3 + x^7y + 7x^3y^5\) \\
             \(2\) &  \(y^2 - x^2\) & \(-u_1u_3\) & \(u_2^3\) & \(y^8 - x^8\) \\
             \(3\) &  \(y^2 + x^2\) & \(u_1u_2\) & \(-u_3^3\) & \(xy^7 + 7x^5y^3 - x^7y - 7x^3y^5\) \\
             \hline
        \end{tabular}
    \end{center}
    and \(p = u_1u_2u_3\) and \(u = -\frac{1}{2}(x^{12} - 33x^8y^4 - 33x^4y^8 + y^{12})\) are as in Lemma \ref{semi-invGI}.
\end{theorem}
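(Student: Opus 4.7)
The plan is to imitate the proof of Theorem \ref{W_L:n=2(8)} essentially step by step, with the key inputs replaced according to the new parity hypothesis on odd weights.

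First, under the assumption that all odd weights in $L$ are congruent to $3$ modulo $4$, the $B$-action in the proof of Lemma \ref{repG:n=2(4)} is altered: since $B$ sends a monomial $x^a y^b$ to $(-i)^a x^a y^b$, one now has $B\cdot e_2 = i e_2$ and $B\cdot e_4 = ie_4$, so $v_0 = e_2 - e_4$ is a semi-invariant of weight $\chi_5$ (that is, $A\mapsto 1$ and $B\mapsto i$) rather than $\chi_4$. Applying Lemma \ref{semi-invGI} to this character yields $v_0 \in D_2\,\CC[s,t] = puw\,\CC[s,t]$. Since $\deg(puw) = 6 + 12 + 12 = 30$ and every nonzero element of $\CC[s,t]$ has degree divisible by $8$, the $n$-homogeneity of $v_0$ (or of any other $v_i$) forces $n \equiv 30 \equiv -2 \pmod 8$.

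For the three-dimensional piece $\langle v_1, v_2, v_3 \rangle$, I verify by direct computation that the triples $(p_i)$, $(q_i)$, $(r_i)$ of the table transform under $G$ in the same irreducible pattern as $(v_1, v_2, v_3)$ up to sign twists. Together with $v_0$, this allows me to form three sign-twisted pairings
\[
\sum_i \pm p_i v_i,\qquad \sum_i \pm q_i v_i,\qquad \sum_i \pm r_i v_i,
\]
each of which is a semi-invariant of an explicit weight from Lemma \ref{semi-invGI}. The corresponding modules of semi-invariants are generated by $1$, $u$, and $puw$ respectively (in the appropriate order dictated by the weights), giving a $3\times 3$ linear system
\[
N\begin{bmatrix} v_1 \\ v_2 \\ v_3 \end{bmatrix} = \begin{bmatrix} f \\ ug \\ puw\,h \end{bmatrix}
\]
for some $f,g,h \in \CC[s,t]$. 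A direct determinant calculation is expected to give $\det N$ equal (up to a nonzero scalar) to $u$ or a similar semi-invariant, so inversion via the adjugate matrix is possible over $\CC[s,t]$. Reading off the rows of the resulting expression for $(v_1, v_2, v_3)^T$ produces precisely the generators listed in the theorem.

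Finally, freeness of these generators over $\CC[s,t]$ is obtained by the same $G$-equivariance trick used in Theorems \ref{W_L:n=2(8)} and \ref{W_L:n=0(8)}: since $A$ and $B$ permute the three rows of the adjugate expression with signs, a hypothetical dependence in one row propagates to all three, which then forces the determinant of the $3\times 3$ matrix of generators to vanish — a non-vanishing that can be checked by a direct computation (it will again be a nonzero multiple of $u^2$ or similar). The principal obstacle is computational: verifying the explicit $G$-action on the new triple $(q_i) = (u_1^3, u_2^3, -u_3^3)$, whose shape differs substantially from its counterpart in Theorem \ref{W_L:n=2(8)}, and carrying out the determinant and adjugate calculations. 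Once these calculations are in hand, the algebraic machinery is identical to Theorems \ref{W_L:n=2(8)} and \ref{W_L:n=0(8)}.
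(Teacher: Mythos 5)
Your overall strategy is the paper's: rerun the argument of Theorem \ref{W_L:n=2(8)} with the $B$-action adjusted for odd weights congruent to $3$ modulo $4$. Your treatment of $v_0$ is correct and matches the paper: $B\cdot e_2 = ie_2$ and $B\cdot e_4 = ie_4$, so $e_2-e_4$ is a semi-invariant of weight $\chi_5$ and Lemma \ref{semi-invGI} gives $v_0\in D_2\,\mathbb{C}[s,t]=puw\,\mathbb{C}[s,t]$.

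The gap is in your linear system. You assert that the three pairings $\sum_i\pm p_iv_i$, $\sum_i\pm q_iv_i$, $\sum_i\pm r_iv_i$ are semi-invariants whose modules are generated by $1$, $u$ and $puw$, yielding a right-hand side $(f,\,ug,\,puw\,h)^T$. The first claim is impossible: the $p$-pairing is homogeneous of degree $n+4\equiv 2\pmod 8$, while every nonzero homogeneous element of $\mathbb{C}[s,t]$ (with $\deg s=8$, $\deg t=24$) has degree divisible by $8$. So if that pairing were an honest invariant it would have to vanish, your system would read $N\mathbf{v}=(0,\,ug,\,puw\,h)^T$, and inverting it would express each $v_i$ in terms of only two generators, contradicting the rank-three free modules asserted in the statement. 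The correct weights, which must be read off from the explicit $G$-action on the $p_i$, $q_i$, $r_i$ (the paper records the changes $B\cdot p_2=ip_3$, $B\cdot p_3=ip_2$, $B\cdot q_1=iq_1$ relative to Theorem \ref{W_L:n=2(8)}), are $\chi_6$, $\chi_2$ and $\chi_5$, with generators $pw$, $w$ and $puw$, so the right-hand side is $((pw)f,\,wg,\,(puw)h)^T$; the paper then computes $\det N=18pw$ and inverts via the adjugate exactly as you outline. A smaller point: deriving $n\equiv -2\pmod 8$ from $v_0$ alone fails whenever $v_0=0$, which happens for all $n<30$ by the remark following the theorem; the congruence should instead be read off from the degrees of the generators of the module containing $v_3$ (which is never zero), and those are only available once the system above is set up with the correct weights.
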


\begin{proof}
    We saw in the proof of Lemma \ref{repG:n=2(4)} that \(e_2 - e_4\) is a semi-invariant of the same weight as \(puw\) and hence, by Lemma \ref{semi-invGI}, lies in \(puw\mathbb{C}[s,t]\). Furthermore, \(G\) acts on \(p_i,\, q_i,\) and \(r_i\) as in the proof of Theorem \ref{W_L:n=2(8)} except that
    \begin{equation*}
        B \cdot p_2 = ip_3, \: B \cdot p_3 = ip_2, \: B \cdot q_1 = iq_1.
    \end{equation*}
    This, along with Lemmas \ref{repG:n=2(4)} and \ref{semi-invGI}, implies that there are invariants \(f,g,h \in \mathbb{C}[s,t]\) such that
    \begin{equation*}
        \begin{bmatrix}
            -p_1 & p_2 & p_3 \\
            q_1 & q_2 & q_3 \\
            r_2 & -r_1 & r_3
        \end{bmatrix}
        \begin{bmatrix}
            v_1 \\
            v_2 \\
            v_3
        \end{bmatrix}
        =
        \begin{bmatrix}
            (pw)f \\
            wg \\
            (puw)h
        \end{bmatrix}
    \end{equation*}
    where \(w\) is as in Lemma \ref{semi-invGI}. Letting \(N\) be the coefficient matrix, it can be shown that \(\det N = 18pw\) and thus
    \begin{equation*}
        \begin{bmatrix}
            v_1 \\
            v_2 \\
            v_3
        \end{bmatrix}
        = \frac{1}{18}M
        \begin{bmatrix}
            f \\
            g/p \\
            uh
        \end{bmatrix}
    \end{equation*}
    where \(M\) is the adjugate of \(N\). Define
    \begin{equation*}
        \tilde{M} = M
        \begin{bmatrix}
            1 & 0 & 0 \\
            0 & 1/p & 0 \\
            0 & 0 & u
        \end{bmatrix}.
    \end{equation*}
    A straightforward calculation shows that \(\tilde{M}_{j2} = -\frac{2}{3}q_j\) for \(j = 1,2\) and that \(\tilde{M}_{32} = \frac{2}{3}q_3\). Thus, the \(i\)th row of \(\tilde{M}\) consists of the generators of the module that \(v_i\) is claimed to belong to. The rest of the proof is completed in the same manner as in Theorem \ref{W_L:n=2(8)}.
\end{proof}

\begin{remark}
    Since the degree of \(puw\) in Theorem \ref{W_L:n=-2(8)} is 30, we find that, for any Lagrangian \(L\) of length \(n < 30\) satisfying the hypothesis of the theorem, we have \(W_L^-(x,y) = W_L^-(y,x)\).
\end{remark}

\appendix

\section{Code} \label{code}

The following code, written using SageMath \cite{sage}, is used to compute the statistics for odd Lagrangians found in Table \ref{Table:OddLag}. The list of generator matrices for representatives of inequivalent classes of self-dual codes can be obtained from \cite{database}.

\begin{verbatim}

import time
start = time.time()

def complementsOfOnes(n, G, ones): 
    L0 = G[1:] 
    complements = [] 
    for a in VectorSpace(GF(2), n-1):
        comp = copy(L0)
        for i in range(n-1):
            comp[i] += a[i] * ones
        K = LinearCode(comp)
        flag = True
        for j in range(len(complements)):
            if complements[j][0].is_permutation_equivalent(K):
                flag = False
                break
        if flag:
            complements.append([LinearCode(K), comp])
    return complements

# list of generator matrices for inequivalent self-dual codes
listSD = []

n = len(listSD[0]) # dimension
MS = MatrixSpace(GF(2), n, 2*n)
# Matrix representing our inner product wrt standard basis
J = matrix(GF(2), 2*n, 2*n, lambda i,j: 0 if i == j else 1) 
SD = [MS(m) for m in listSD]

weightDist = []
total = 0
print("Inequivalent odd Lagrangians of length", 2*n, "\n")
for G in SD:
    for L in complementsOfOnes(n, G, ones):
        M = L[1]*J
        for v in M.right_kernel_matrix():
            if sum(v) == 1:
                L1 = L[1].stack(v)
                L2 = L[1].stack(ones + v)
                break
        total += 1
        print("#", total)
        print(L1)
        C1 = LinearCode(L1)
        C2 = LinearCode(L2)
        dist = C1.spectrum()
        weightDist.append(dist)
        print("Weight distribution: ", dist, "\n")
        if not C1.is_permutation_equivalent(C2):
            total += 1
            print("#", total)
            print(L2)
            dist = C2.spectrum()
            weightDist.append(dist)
            print("Weight distribution: ", dist, "\n")

print("Total number of inequivalent odd Lagrangians: ", total)
print()

minDist = 0
for i in range(total):
    d = weightDist[i]
    flag = True
    for j in range(1, 2*n):
        if d[j] != 0:
            if j <= minDist:
                break
            else:
                minDist = j
                break
    for j in range(2, 2*n, 4):
        if d[j] != 0:
            flag = False
            break
    if flag:
        print("Even weights divisible by 4 given by #", i+1)
        
print()
print("Greatest minimum distance:", minDist)
print()
print("Lagrangians which achieve the greatest minimum distance:")
numDist = 0
for i in range(total):
    d = weightDist[i]
    for j in range(1, 2*n):
        if d[j] != 0:
            if j == minDist:
                print("#", i+1)
                numDist += 1
            else:
                break

print("Time for execution:", time.time() - start, "seconds")
\end{verbatim}

\end{document}